\newtheorem{thm}{Theorem}[section]
\newtheorem{prop}[thm]{Proposition}
\newtheorem*{prob*}{Problem}
\newtheorem*{thm*}{Theorem}
\theoremstyle{definition}
\newtheorem*{defn*}{Definition}
\newtheorem*{rem*}{Remark}
\numberwithin{equation}{section}
\newcommand{\C}{\mathbb C}
\newcommand{\R}{\mathbb R}
\newcommand{\Z}{\mathbb Z}
\newcommand{\N}{\mathbb N}
\DeclareMathOperator{\const}{const}
\DeclareMathOperator{\otherwise}{otherwise}
\newcommand{\re}{\mathop{\mathrm{Re}}}
\begin{document}
\title[Dynamical correlation functions for   random matrices]
 {\bf{Dynamical correlation functions for  products of random matrices}}

\author{Eugene Strahov}
\address{Department of Mathematics, The Hebrew University of
Jerusalem, Givat Ram, Jerusalem
91904}\email{strahov@math.huji.ac.il}
\keywords{Products of random matrices, determinantal point processes}
\commby{}
%%%%%%%%%%%%%%%%%%%%%%%%%%%%%%%%%%%%%%%%%%%%%%%%%%%%%%%%%%
%%%%%%%%%%%%%%%%%%%%%%%%%%%%%%%%%%%%%%%%%%%%%%%%%%%%%%%%%%
%%%%%%%%%%%%%%%%%%%%%%%%%%%%%%%%%%%%%%%%%%%%%%%%%%%%%%%%%%
%%%%%%%%%%%%%%%%%%%%%%%%%%%%%%%%%%%%%%%%%%%%%%%%%%%%%%%%%%
\begin{abstract}
 We introduce and study a family of random processes with a discrete time related to products of random matrices.
 Such processes are formed by singular values of random matrix products, and the number of factors in a random matrix product  plays a role of a discrete time.
 We consider in detail the case when the (squared) singular values of the initial random matrix  form a polynomial ensemble, and  the initial random matrix  is multiplied by standard complex Gaussian matrices. In this case we show that the random process is a discrete-time determinantal point process. For three special cases (the case when the initial random matrix is a standard complex Gaussian matrix, the case when it is a truncated unitary matrix, or the case when it is a standard complex Gaussian matrix with a source) we compute the dynamical correlations functions explicitly, and find the hard edge scaling limits of the correlation kernels.
The proofs rely on the Eynard-Mehta theorem, and on contour integral representations for the correlation kernels suitable for an asymptotic analysis.
\end{abstract}
\maketitle
%%%%%%%%%%%%%%%%%%%%%%%%%%%%%%%%%%%%%%%%%%%%%%%%%%%%%%%%%%%%%%%%%%%%%%
%%%%%%%%%%%%%%%%%%%%%%%%%%%%%%%%%%%%%%%%%%%%%%%%%%%%%%%%%%%%%%%%%%%%%%
%%%%%%%%%%%%%%%%%%%%%%%%%%%%%%%%%%%%%%%%%%%%%%%%%%%%%%%%%%%%%%%%%%%%%%
\section{Introduction}
One of the most recent  achievements  in Random Matrix Theory is the discovery of the exact solvability of ensembles
defined by products of random independent matrices. In particular, it was shown in Akemann and Burda \cite{Akemann1},  Akemann, Ipsen, and Kieburg \cite{AkemannIpsenKieburg}
that both eigenvalues and singular values of products of complex Ginibre matrices  form determinantal point processes, and that the kernels of these processes can be computed explicitly in terms of special functions.
Different scaling limits of these determinantal point processes and of their extensions were investigated in the papers by
Kuijlaars and  Zhang \cite{KuijlaarsZhang},  Kuijlaars and  Stivigny \cite{KuijlaarsStivigny},
Liu, Wang, and Zhang \cite{LiuWangZhang}, Forrester and Liu \cite{ForresterLiu}.
We refer the reader to the paper by Akemann and Ipsen \cite{AkemannIpsen}  for a survey concerning exactly solvable ensembles defined by products of random matrices.

Such ensembles arise in the theory of wireless communication,
in particular in the problem about transmission of vector-valued signals through a sequence of clusters
of scatterers, see M$\ddot{\mbox{u}}$ller \cite{Muller}. In this problem the transmission of a vector-valued signal between two adjacent clusters of scatterers is modeled by a random matrix, and the transmission of a vector-valued signal through a sequence of clusters of scatterers is modeled by the corresponding product of random independent matrices.
Exact solvability of the relevant random matrix ensemble enables to give formulae
for the expectation of the mutual information, which is an important quantity in wireless telecommunication, see Akemann, Ipsen, and Kieburg \cite{AkemannIpsenKieburg},
and Wei,  Zheng,  Tirkkonen, and Hamalainen \cite{Wei}.

Considering the model described above  we note that the ordinal index of a cluster in the sequence can be associated with a discrete time, and the number of factors in the relevant random matrix product
can be identified with a discrete time as well.
Motivated by this observation, the present paper deals with random processes with a discrete time constructed in terms of products of random matrices as follows.
Recall that complex Ginibre matrices are those
whose entries are i.i.d standard complex Gaussian variables. Assume that at time $k=1$ we are given a random matrix $G_1$
of size $N_1\times N_0$, where $N_1\geq N_0$. Assume further  that squared singular values of $G_1$ form a polynomial ensemble, see Section \ref{SectionProcessRandomInitialConditions} for a precise definition.
Set $X(1)=G_1$.  At time $k=2$ the matrix $G_1$ is multiplied (from the left)
by a new matrix $G_2$, where $G_2$ is a complex Ginibre matrix of size $N_2\times N_1$ (where $N_2\geq N_0$) independent on $G_1$.
At time $k=3$ the matrix $X(2)=G_2G_1$ is again multiplied from the left by a new complex Ginibre matrix $G_3$ of size $N_3\times N_2$
independent on $G_1$, $G_2$. Continuing in this way we obtain a sequence of product matrices $\left\{X(k)\right\}_{k=1}^{m}$
defined by
$
X(k)=G_k\ldots G_2G_1,
$
where $k=1,\ldots, m$, the squared singular values of $G_1$ form a polynomial ensemble, and  $G_2$,$\ldots$,$G_m$ are independent matrices with i.i.d standard complex Gaussian entries.

Note that for each $k=1,\ldots, m $ the matrices $X^*(k)X(k)$ are of the same size $N_0\times N_0$.
We wish to describe  the evolution of the eigenvalues of  $X^*(k)X(k)$. In other words, the problem
is to describe the evolution of the (squared) singular values of  $X(k)$. Clearly, this evolution
can be thought in terms of non-intersecting polygon paths parameterized by  the (discrete) time parameter $k$.

Denote
by  $x_j^k$ the $j$th largest eigenvalue of  $X(k)^*X(k)$.
If matrices $G_k$ are independent, $G_2$,  $\ldots$, $G_m$
are complex Ginibre matrices, and squared singular values of $G_1$ form a polynomial ensemble,
the configuration  $(k,x_j^k)$ is shown to be a determinantal point process on $\left\{1,\ldots,m\right\}\times\R_{>0}$.
This is the main result of the present paper.
For three special cases (when $G_1$ is a complex Ginibre matrix, when $G_1$ is a truncated unitary matrix, and when  $G_1$ is a complex Ginibre
matrix with a source) we compute correlation kernels of the dynamical correlation functions explicitly, and find their hard edge scaling limits.
The proofs rely on the Eynard-Mehta theorem \cite{EynardMehta}, and on contour integral representations of the correlation kernels suitable for
an asymptotic analysis.

Let us mention that  recent works on products of random matrices (in particular, the works mentioned above) deal with the situation when
the number of matrices in products under consideration is fixed. In this article we propose to consider
\textit{sequences} of products of random matrices instead of fixed products. The main idea behind this work is rather similar to that of Johansson and Nordenstam \cite{JohanssonNordenstam}.
Johansson and Nordenstam \cite{JohanssonNordenstam} start with an infinite random matrix picked from the Gaussian Unitary
Ensemble, and consider the eigenvalues of its minors. They show that all such eigenvalues form a determinantal process on $\N\times\R$. Such determinantal processes are called the minor processes.
Here we can start from a  matrix product $G_m\ldots G_1$ , and consider squared singular values of  sub-products $G_k\ldots G_1$, $k=1,\ldots,m$.  As a result we arrive to a family of determinantal processes on $\left\{1,\ldots,m\right\}\times\R_{>0}$, which can be understood as  time-dependent extensions of the processes studied previously.

The minor processes and their generalizations are of great interest in Random Matrix Theory, and are closely related to random tiling models, and to certain percolation models
on the $\Z^+\times\Z^+$ lattice. In particular, it was observed by Johansson and Nordenstam \cite{JohanssonNordenstam} (see also Okounkov and Reshetikhin \cite{okounkov})
that the spectra of the principal minors of a GUE-matrix behave as domino tilings of large size Aztec diamonds.
We refer the reader to Adler, van Moerbeke, Wang \cite{Adler}, Adler,  Chhita,  Johansson, and van Moerbeke \cite{Adler1}, Adler and van Moerbeke \cite{Adler2} and to references therein for a description
of these connections, and for recent results in this area of research.  Many others time-dependent deteriminantal point processes appeared over the past decade
in the context of growth models, and were applied to different problems in numerative combinatorics and statistical physics, see, for example,
Borodin and Gorin \cite{BorodinGorin}, Borodin and Ferrari \cite{BorodinFerrari}

The  novelty of the present paper  is in the very observation that products of random matrices also lead to  time-dependent determinantal point processes. These processes are rather different from those formed by eigenvalues
of minors, have different scaling limits, and were not considered in the literature, to the best of author's knowledge. One should expect that time-dependent determinantal
processes originated from products of random matrices lead to interesting models of statistical mechanics (similar to the minor processes), but we leave
this question  for a future research. On the technical level the main contribution of the present paper is in the demonstration of a relation with the Eynard-Mehta theorem, and of its usefulness
for derivations of correlation functions. This relation not only give new results (explicit formulae for the dynamical correlation functions),
but also provides a new and a unified point of view on the point processes formed by the squared singular values of random matrices.

The paper is organized as follows. In Section \ref{SectionMainResults} we state the main results obtained in
this work. In particular, Theorem \ref{TheoremPolynomialInitialConditions} of Section \ref{SectionProcessRandomInitialConditions}
gives a general formula for the correlation kernel of the determinant process under consideration.
Sections \ref{SectionProcessRandomInitialConditions}-\ref{GinibreSource} provide contour integral representations
and hard edge scaling limits of the correlation kernels corresponding to three special initial conditions mentioned above.
The rest of the paper is devoted to proofs of these results.\\
\textbf{Acknowledgements.}
I am   deeply grateful to  Dong Wang for a number of remarks and suggestions, and for bringing the preprint \cite{Claeys}
to my attention.
%%%%%%%%%%%%%%%%%%%%%%%%%%%%%%%%%%%%%%%%%%%%%%%%%%%%%%%%%%%%%%%%%%%%%%%%%%%%%%%%%%%%%%%%%%%%%%%%%%%%%%%%
%%%%%%%%%%%%%%%%%%%%%%%%%%%%%%%%%%%%%%%%%%%%%%%%%%%%%%%%%%%%%%%%%%%%%%%%%%%%%%%%%%%%%%%%%%%%%%%%%%%%%%%
%%%%%%%%%%%%%%%%%%%%%%%%%%%%%%%%%%%%%%%%%%%%%%%%%%%%%%%%%%%%%%%%%%%%%%%%%%%%%%%%%%%%%%%%%%%%%%%%%%%%%%%%
%%%%%%%%%%%%%%%%%%%%%%%%%%%%%%%%%%%%%%%%%%%%%%%%%%%%%%%%%%%%%%%%%%%%%%%%%%%%%%%%%%%%%%%%%%%%%%%%%%%%%%%
\section{Main results}\label{SectionMainResults}
\subsection{Ginibre product process with
polynomial ensemble initial conditions}\label{SectionProcessRandomInitialConditions}
Let $G_1$,$\ldots$,$G_m$ be random matrices, and assume that each matrix $G_k$, $k\in\left\{1,\ldots,m\right\}$, is of size $N_k\times N_{k-1}$.
Set
$$
X(k)=G_k\ldots G_1,\;\; k\in\left\{1,\ldots,m\right\}.
$$
If $n=N_0$, then for each $k$, $k\in\left\{1,\ldots,m\right\}$, $X(k)^*X(k)$ are random matrices of the same size $n\times n$.
Furthermore, set
$$
\nu_j=N_j-N_0,\;\;  j\in\left\{0,\ldots,m\right\},
$$
and assume that these parameters are positive. Denote by $x_j^k$ the $j$th largest eigenvalue of  $X(k)^*X(k)$. The configuration of all these eigenvalues,
$$
\left\{(k,x_j^k)\vert k=1,\ldots,m;\; j=1,\ldots,n\right\},
$$
form a point process on $\left\{1,\ldots,m\right\}\times \R_{>0}$. The correlation functions of this point process,\footnote{We refer the reader to Johansson \cite{Johansson}
for a general discussion of correlation functions for point processes.}
$$
\varrho_l:\;\;\left(\left\{1,\ldots,m\right\}\times \R_{>0}\right)^l\rightarrow\R,
$$
will be called the \textit{dynamical correlation functions for the products of random matrices $G_1$,$\ldots$,$G_m$}. Also, we say that the random matrix $G_1$ determines the initial conditions of the product matrix process under considerations.
The random matrices $G_2,\ldots,G_m$ will be called the transition matrices.

In what follows we always assume that \textit{the matrices $G_1,\ldots,G_m$ are independent, and the transition matrices
$G_2,\ldots,G_m$ are  complex Ginibre matrices.}
We consider the case when the squared
singular values $\left(x_1^1,\ldots,x_n^1\right)$ of the matrix $G_1$ form a polynomial
ensemble in the sense of Kuijlaars and Stivigny \cite{KuijlaarsStivigny}. This means that the vector
$\left(x_1^1,\ldots,x_n^1\right)$ has density\footnote{Kuijlaars and Stivigny \cite{KuijlaarsStivigny} defined the polynomial ensemble by the expression
\begin{equation}\label{PolynomialEnsembleKS}
\const\cdot\triangle\left(x_1,\ldots,x_n\right)\det\left[f_{i}(x_j)\right]_{i,j=1}^n.
\end{equation}
For a computational convenience, we use equation (\ref{PolynomialEnsemble}) instead of equation
(\ref{PolynomialEnsembleKS}).}
\begin{equation}\label{PolynomialEnsemble}
\const\cdot\triangle\left(x_1^1,\ldots,x_n^1\right)\det\left[f_{i}(x_j^1)\right]_{i,j=1}^n\prod\limits_{j=1}^n\left(x_j^1\right)^{\nu_1}e^{-x_j^1},
\end{equation}
where $\triangle\left(x_1^1,\ldots,x_n^1\right)=\prod\limits_{1\leq j<k\leq n}(x_k^1-x_j^1)$ is the Vandermonde determinant.
We assume that given functions  $f_1$, $\ldots$, $f_{n}$ are such that expression (\ref{PolynomialEnsemble}) can be understood as a probability density.
We obtain  a point process on $\left\{1,\ldots,m\right\}\times \R_{>0}$. Let us refer to this process as to the \textit{Ginibre product process with
polynomial ensemble initial conditions}.  The density of this product process is that of the vector
$$
\underline{x}=\left(x^1,\ldots,x^m\right)\in\left(\R_{>0}^n\right)^m,
$$
where $x^1=\left(x_1^1,\ldots,x_n^1\right)$ are the squared singular values of $G_1$, and
$x^r=\left(x_1^r,\ldots,x_n^r\right)$, $r\in\left\{2,\ldots,m\right\}$, are the squared singular values of $G_r\ldots G_1$.

To present our result for the Ginibre product process with
polynomial ensemble initial conditions
introduce the Meijer G-function (see e.g. Luke \cite{Luke})
\begin{equation}\label{G-def}
G_{p,q}^{m,n}\left(\begin{array}{cccc}
                                                                a_1, & a_2, & \ldots, & a_p \\
                                                                b_1, & b_2, & \ldots, & b_q
                                                              \end{array}
\biggl|z\right)=\frac{1}{2\pi i}\int\limits_{C}\frac{\prod_{j=1}^m\Gamma(b_j-s)\prod_{j=1}^n\Gamma(1-a_j+s)}{\prod_{j=m+1}^q\Gamma(1-b_j+s)\prod_{j=n+1}^p\Gamma(a_j-s)}z^sds.
\end{equation}
Here $0\leq m\leq q$, $0\leq n\leq p$, and
an empty product is interpreted as unity. The contour of integration $C$ depends on the location of the poles of the Gamma functions, and we refer to the NIST handbook \cite{NIST} for details on the different possibilities.
\begin{thm}\label{TheoremPolynomialInitialConditions}
The Ginibre product process with polynomial ensemble initial conditions is a determinantal point process on $\left\{1,\ldots,m\right\}\times\R_{>0}$. Its correlation kernel
can be written as
\begin{equation}\label{MainGeneralFormula}
\begin{split}
K_{n,m}(r,x;s,y)=&-\frac{1}{x}G^{s-r,0}_{0,s-r}\left(\begin{array}{ccc}
                - \\
               \nu_{r+1},\ldots,\nu_s
             \end{array}\biggl|\frac{y}{x}\right)\mathbf{1}_{s>r}\\
&+\sum\limits_{i,j=1}^n\phi_{r,m+1}(x,i)\left(A^{-1}\right)_{i,j}\phi_{0,s}(j,y).
\end{split}
\end{equation}
In the formula just written above the function $\phi_{r,m+1}(x,i)$ is defined as follows. If $m=1$, then
$\phi_{r,m+1}(x,i)=x^{i-1}$. For $m\geq 2$ we set
$$
\phi_{r,m+1}(x,i)=\left\{
                           \begin{array}{ll}
                             x^{i-1}\Gamma(\nu_{r+1}+i)\ldots\Gamma(\nu_m+i), & r=1,\ldots,m-1,  \\
                             x^{i-1}, & r=m.
                           \end{array}
                         \right.
$$
The matrix $A=\left(a_{i,j}\right)_{i,j=1}^n$ in the formula for the correlation kernel
is given by
\begin{equation}\label{GeneralFormulaAij}
a_{i,j}=\Gamma(\nu_2+j)\ldots\Gamma(\nu_m+j)\int\limits_0^{\infty}f_i(t)e^{-t}t^{\nu_1+j-1}dt,\;\; i,j\in\left\{1,\ldots,n\right\},
\end{equation}
and the functions
$
\phi_{0,s}
$
are given by
\begin{equation}
\phi_{0,s}(j,y)=\left\{
                  \begin{array}{ll}
                    f_j(y)y^{\nu_1}e^{-y}, & s=1, \\
                   \int\limits_0^{\infty}f_j(t)t^{\nu_1-1}e^{-t}
G^{s-1,0}_{0,s-1}\left(\begin{array}{ccc}
                - \\
               \nu_2,\ldots,\nu_s
             \end{array}\biggl|\frac{y}{t}\right)dt, & s=2,\ldots,m.\\

                  \end{array}
                \right.
\end{equation}
\end{thm}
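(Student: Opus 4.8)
The plan is to put the joint law of $\underline x=(x^1,\dots,x^m)$ into the multi-determinantal (biorthogonal) form to which the Eynard--Mehta theorem \cite{EynardMehta} applies, and then to identify the ingredients of the resulting kernel with the functions in the statement. The first step is to show that $k\mapsto x^k$ is a Markov chain with an explicit transition density. Fix $r\in\{1,\dots,m-1\}$. Since $G_{r+1}$ is complex Ginibre, its law is invariant under left and right multiplication by unitary matrices; combining this with a singular value decomposition of $X(r)$ shows that, conditionally on $X(r)$, the squared singular values $x^{r+1}$ of $X(r+1)=G_{r+1}X(r)$ depend on $X(r)$ only through $x^r$ and are distributed as the eigenvalues of $\diag(x^r)^{1/2}\,W\,\diag(x^r)^{1/2}$, where $W$ has density proportional to $\det(W)^{\nu_{r+1}}e^{-\Tr W}$ on positive definite Hermitian $n\times n$ matrices (the zero rows of $X(r)$, which carry $\nu_r$, may be discarded before multiplying by $G_{r+1}$, so only $\nu_{r+1}$ enters). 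A change of variables on Hermitian matrices followed by the Harish-Chandra--Itzykson--Zuber integral then yields, on the full-measure set where the $x^r_i$ are pairwise distinct, the transition density
\[
p_{r+1}(x^{r+1}\mid x^r)=c_r\,\frac{\triangle(x^{r+1})}{\triangle(x^r)}\,\det\bigl[\phi_r(x^r_i,x^{r+1}_j)\bigr]_{i,j=1}^n,\qquad \phi_r(x,y):=\frac1x\Bigl(\frac yx\Bigr)^{\nu_{r+1}}e^{-y/x};
\]
this is the conditional counterpart of the stability of polynomial ensembles under multiplication by a Ginibre matrix \cite{KuijlaarsStivigny}, and $\phi_r(x,y)$ is $x^{-1}$ times the Meijer $G$-function $G^{1,0}_{0,1}$ with lower parameter $\nu_{r+1}$ evaluated at $y/x$.

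Next, write the polynomial ensemble density (\ref{PolynomialEnsemble}) of $x^1$ as a constant times $\triangle(x^1)\det[\phi_0(i,x^1_j)]_{i,j=1}^n$ with $\phi_0(i,y):=f_i(y)y^{\nu_1}e^{-y}$, and the Vandermonde factor of $x^m$ as $\triangle(x^m)=\det[\phi_m(x^m_i,j)]_{i,j=1}^n$ with $\phi_m(x,j):=x^{j-1}$. Multiplying the law of $x^1$ by the transition densities of the first step, the consecutive factors $\triangle(x^{r+1})/\triangle(x^r)$ telescope against $\triangle(x^1)$ and $\triangle(x^m)$, and the density of $\underline x$ becomes proportional to
\[
\det\bigl[\phi_0(i,x^1_j)\bigr]_{i,j=1}^n\;\prod_{r=1}^{m-1}\det\bigl[\phi_r(x^r_i,x^{r+1}_j)\bigr]_{i,j=1}^n\;\det\bigl[\phi_m(x^m_i,j)\bigr]_{i,j=1}^n,
\]
which is exactly the input of the Eynard--Mehta theorem (for $m=1$ only the outer two determinants are present, with $\phi_1(x,j):=x^{j-1}$ supplying $\triangle(x^1)$). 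Iterating the Andr\'eief identity shows the normalising constant is a positive constant multiple of $\det[(\phi_0*\phi_1*\cdots*\phi_m)(i,j)]_{i,j=1}^n$, which is the matrix $A$; hence $A$ is invertible, and the Eynard--Mehta theorem produces a determinantal point process on $\{1,\dots,m\}\times\R_{>0}$ with correlation kernel
\[
K_{n,m}(r,x;s,y)=-\phi_{r,s}(x,y)\mathbf 1_{s>r}+\sum_{i,j=1}^n\phi_{r,m+1}(x,i)\,(A^{-1})_{ij}\,\phi_{0,s}(j,y),
\]
where $\phi_{a,b}:=\phi_a*\phi_{a+1}*\cdots*\phi_{b-1}$ and $A=[(\phi_0*\cdots*\phi_m)(i,j)]_{i,j=1}^n$.

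It remains to evaluate these convolutions. For $1\le r<s$, using that the Mellin transform of the Meijer $G$-function $G^{p,0}_{0,p}$ with lower parameters $b_1,\dots,b_p$ equals $\prod_j\Gamma(b_j+\,\cdot\,)$ and that Mellin convolution multiplies Mellin transforms, one gets that $\phi_{r,s}=\phi_r*\cdots*\phi_{s-1}$ is $x^{-1}$ times $G^{s-r,0}_{0,s-r}$ with lower parameters $\nu_{r+1},\dots,\nu_s$ evaluated at $y/x$ --- the first term of (\ref{MainGeneralFormula}). Convolving $\phi_m(x,j)=x^{j-1}$ from the left successively against $\phi_{m-1},\dots,\phi_r$ and using $\int_0^\infty u^{\nu+j-1}e^{-u}\,du=\Gamma(\nu+j)$ at each step gives $\phi_{r,m+1}(x,i)=x^{i-1}\Gamma(\nu_{r+1}+i)\cdots\Gamma(\nu_m+i)$ for $1\le r\le m-1$ and $\phi_{m,m+1}(x,i)=x^{i-1}$; adjoining $\phi_0$ gives $A_{ij}=a_{ij}$ as in (\ref{GeneralFormulaAij}); and $\phi_{0,s}(j,y)=\int_0^\infty f_j(t)t^{\nu_1}e^{-t}\,\phi_{1,s}(t,y)\,dt$ reproduces the stated formula for $\phi_{0,s}$, the case $s=1$ reducing to $\phi_0$ itself. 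This matches (\ref{MainGeneralFormula})--(\ref{GeneralFormulaAij}) and the definitions of $\phi_{r,m+1}$ and $\phi_{0,s}$ given in the statement, and completes the argument.

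The main obstacle is the first step --- establishing the Markov property of $k\mapsto x^k$ and the explicit one-step transition density; once this is in hand, the rest is bookkeeping with the Eynard--Mehta theorem and with elementary Mellin/Meijer-$G$ convolution identities. The points needing care are that the Harish-Chandra--Itzykson--Zuber integral is applied on the open, full-measure set where the eigenvalues of $X(r)^*X(r)$ are pairwise distinct, and that among the parameters $\nu_1,\dots,\nu_m$ only $\nu_1$ (through $\phi_0$, hence $A$ and $\phi_{0,s}$) and $\nu_{r+1},\dots,\nu_s$ (through the links) enter a given kernel entry, the remaining paddings dropping out along the way.
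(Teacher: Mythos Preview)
Your argument is correct and follows essentially the same route as the paper: write the joint density as a product of determinants via the Kuijlaars--Stivigny transition step, apply the Eynard--Mehta theorem, and evaluate the resulting convolutions through Meijer $G$-function/Mellin identities. Your choice of one-step link $\phi_r(x,y)=x^{-1}(y/x)^{\nu_{r+1}}e^{-y/x}$ is a cleaner gauge than the paper's (which absorbs factors of $x^{\nu_r}$, $e^{-x}$, and $y^{\nu_m}$ into the links and then must strip off the resulting $g_r(x)/g_s(y)$ at the end), so you obtain the stated $K_{n,m}$ directly without an equivalent-kernel step.
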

If the squared  singular values of $G_1$ form a polynomial ensemble, and $G_2,\ldots,G_m$  are independent Ginibre matrices,
then Theorem  \ref{TheoremPolynomialInitialConditions} implies that the dynamical correlation functions $\varrho_l$ for
the products of random matrices $G_1$,$\ldots$,$G_m$
have the determinantal form,
\begin{equation}\label{DeterminantalFormOfTheCorrelationFunctions}
\varrho_l\left(s_1,x_1;\ldots;s_l,x_l\right)=\det\left(K_{n,m}(s_i,x_i;s_j,x_j)\right)_{i,j=1}^l,\;\; l=1,2,\ldots .
\end{equation}
We note that $K_{n,m}(r,x;s,y)$ has a structure similar to that of the correlation kernel for a continuous version of the Schur
process, see Borodin and P$\acute{\mbox{e}}$ch$\acute{\mbox{e}}$ \cite[Thm. 3]{BorodinPeche}. Both Theorem  \ref{TheoremPolynomialInitialConditions}
and  Borodin and P$\acute{\mbox{e}}$ch$\acute{\mbox{e}}$ \cite[Thm. 3]{BorodinPeche} can be proved using the Eynard-Mehta Theorem \cite{EynardMehta}.

Once the formula for the correlation kernel is derived many probabilistic quantities of interest can be be computed.
In the process interpretation, the evolution of the squared singular values of $X(k)=G_k\cdot\ldots\cdot G_1$ can be thought
of as consisting of $n$ paths parameterized by a discrete time. Given $1\leq l_1<\ldots<l_k\leq m$, and subsets
$\Omega_1$, $\ldots$, $\Omega_k$ of $\R_{>0}$, one quantity of interest is the probability
that for all $i$, $i\in\left\{1,\ldots,k\right\}$, no path passes through $\Omega_i$ at time $l_i$.
It follows from Theorem \ref{TheoremPolynomialInitialConditions} that this probability is expressible as a Fredholm
determinant defined by the kernel $K_{n,m}(r,x;s,y)$. Such Fredholm determinants can be studied by different methods,
in particular in terms of differential equations, see Tracy and Widom \cite{TracyWidomDysonProcesses} for partial differential equations derived in the context of Dyson processes,
and Strahov \cite{strahovdif} for systems of ordinary differential equations derived for singular values of products of independent Ginibre matrices.
Also, paper by Bornemann \cite{Bornemann} gives a systematic numerical treatment of Fredholm determinants arising in the context of Random Matrix Theory, and the methods of this paper might be applicable for the Fredholm determinant defined by the kernel $K_{n,m}(r,x;s,y)$.

Theorem \ref{TheoremPolynomialInitialConditions} gives a general expression for the correlation kernel, which works
for any  sequence of  functions $f_1$, $\dots$, $f_n$, for which (\ref{PolynomialEnsemble})
is a probability density. For some special cases
we can provide more explicit formulae for the correlation kernels, in particular
we can derive  contour integral representations suitable for a subsequent asymptotic analysis.
%%%%%%%%%%%%%%%%%%%%%%%%%%%%%%%%%%%%%%%%%%%%%%%
%%%%%%%%%%%%%%%%%%%%%%%%%%%%%%%%%%%%%%%%%%%%%%%
%%%%%%%%%%%%%%%%%%%%%%%%%%%%%%%%%%%%%%%%%%%%%%%
\subsection{The initial matrix is a complex Ginibre matrix}\label{SectionInitialGinibre}
Assume that $G_1$ is a complex Ginibre matrix. Then the density of the squared singular values of $G_1$
is given by equation (\ref{PolynomialEnsemble}), where the functions $f_1$, $\ldots$, $f_n$ are defined by
\begin{equation}\label{FunctionsFiGinibreMatrix}
f_i(x)=x^{i-1},\;\;i\in\left\{1,\ldots,n\right\}.
\end{equation}
\begin{thm}\label{TheoremGinibreKernel}
When the initial matrix $G_1$ is a complex Ginibre matrix,
the Ginibre product process is a determinantal point process on $\left\{1,\ldots,m\right\}\times\R_{>0}$. Its correlation kernel,
$K_{n,m}(r,x;s,y)$, can be written as\footnote{Note that $\nu_0=0$. This fact will be used throughout the paper.}
\begin{equation}\label{CorrelationKernelGinibre}
K_{n,m}(r,x;s,y)=-\frac{1}{x}G^{s-r,0}_{0,s-r}\left(\begin{array}{ccc}
                - \\
               \nu_{r+1},\ldots,\nu_s
             \end{array}\biggl|\frac{y}{x}\right)\mathbf{1}_{s>r}+\sum\limits_{p=0}^{n-1}P_{r,p}(x)Q_{s,p}(y),
\end{equation}
where the functions $P_{r,p}$ and $Q_{s,p}$ are defined by
\begin{equation}
P_{r,p}(x)=\frac{1}{2\pi i}\oint\limits_{\Sigma_p}\frac{\Gamma(t-p)}{\prod_{j=0}^r\Gamma(t+\nu_j+1)}x^tdt,
\end{equation}
and
\begin{equation}
Q_{s,p}(x)=\frac{1}{2\pi i}\int_{c-i\infty}^{c+i\infty}\frac{\prod_{j=0}^s\Gamma(u+\nu_j)}{\Gamma(u-p)}y^{-u}du,\;\; c>0.
\end{equation}
Here $\Sigma_p$ is a closed contour that encircles $0,1,\ldots,p$ once in the positive direction, and such that $\re t>-1$ for $t\in\Sigma_p$.
\end{thm}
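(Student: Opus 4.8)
The plan is to \emph{specialize} Theorem~\ref{TheoremPolynomialInitialConditions} to the functions $f_i(x)=x^{i-1}$ of (\ref{FunctionsFiGinibreMatrix}) and to rewrite the bilinear part of (\ref{MainGeneralFormula}) in the form $\sum_{p=0}^{n-1}P_{r,p}(x)Q_{s,p}(y)$. The Meijer $G$-function term in (\ref{MainGeneralFormula}) does not involve the $f_i$'s at all, so it is already in the required shape and can be left untouched; only the term $\sum_{i,j}\phi_{r,m+1}(x,i)\left(A^{-1}\right)_{i,j}\phi_{0,s}(j,y)$ has to be processed.

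First I would compute $A$ from (\ref{GeneralFormulaAij}): with $f_i(t)=t^{i-1}$ one gets $a_{i,j}=\Gamma(\nu_1+i+j-1)\prod_{l=2}^m\Gamma(\nu_l+j)$, so that $A=HD$, where $D$ is the diagonal matrix with $j$th entry $\prod_{l=2}^m\Gamma(\nu_l+j)$ and $H=\bigl(\Gamma(\nu_1+i+j-1)\bigr)_{i,j=1}^n$ is the Hankel moment matrix of the Laguerre weight $w(t)=t^{\nu_1}e^{-t}$ on $(0,\infty)$, i.e. $H_{i,j}=\int_0^\infty t^{i-1}t^{j-1}w(t)\,dt$. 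Hence $A^{-1}=D^{-1}H^{-1}$, and by the standard relation $(H^{-1})_{i,j}=\sum_{k=0}^{n-1}c_{k,i-1}c_{k,j-1}$ between an inverse moment matrix and its orthonormal polynomials, where $p_k(t)=\sum_{l=0}^{k}c_{k,l}t^l=\sqrt{k!/\Gamma(k+\nu_1+1)}\,L_k^{(\nu_1)}(t)$ are the orthonormal Laguerre polynomials. The diagonal factor $D^{-1}$ cancels against $\phi_{r,m+1}$: since $\phi_{r,m+1}(x,i)=x^{i-1}\prod_{l=r+1}^{m}\Gamma(\nu_l+i)$ (which is $x^{i-1}$ when $r=m$ or $m=1$), one has $\phi_{r,m+1}(x,i)\big/\prod_{l=2}^m\Gamma(\nu_l+i)=x^{i-1}\big/\prod_{l=2}^r\Gamma(\nu_l+i)$ uniformly in $r$. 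Collecting the $i$- and $j$-sums, the bilinear term becomes $\sum_{k=0}^{n-1}\widehat{P}_{r,k}(x)\widehat{Q}_{s,k}(y)$ with $\widehat{P}_{r,k}(x)=\sum_{i=0}^{k}c_{k,i}\,x^{i}\big/\prod_{l=2}^{r}\Gamma(\nu_l+i+1)$ and $\widehat{Q}_{s,k}(y)=\sum_{j=0}^{k}c_{k,j}\,\phi_{0,s}(j+1,y)$; since $f_{j+1}(t)=t^{j}$, the inner sum reconstructs $p_k$, so $\widehat{Q}_{s,k}$ is just $p_k$ transported through the same transform that defines $\phi_{0,s}$: $\widehat{Q}_{1,k}(y)=p_k(y)w(y)$, and for $s\ge2$, $\widehat{Q}_{s,k}(y)=\int_0^\infty p_k(t)t^{\nu_1-1}e^{-t}\,G^{s-1,0}_{0,s-1}\left(\begin{array}{ccc}-\\ \nu_2,\ldots,\nu_s\end{array}\biggl|\frac{y}{t}\right)dt$.

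The remaining, and main, step is to identify $\widehat{P}_{r,k}$ with $P_{r,k}$ and $\widehat{Q}_{s,k}$ with $Q_{s,k}$ up to reciprocal constants. For $\widehat{P}_{r,k}$ this is a residue computation: expanding the integrand of $P_{r,k}$ at the poles $t=0,1,\ldots,k$ of $\Gamma(t-k)$ lying inside $\Sigma_k$, and using $\nu_0=0$, gives $P_{r,k}(x)=\sum_{l=0}^{k}\frac{(-1)^{k-l}}{(k-l)!}\,x^l\big/\prod_{j=0}^{r}\Gamma(l+\nu_j+1)$; comparing coefficients of $x^l$ with the explicit $c_{k,l}$ yields $\widehat{P}_{r,k}=(-1)^k\sqrt{k!\,\Gamma(k+\nu_1+1)}\,P_{r,k}$. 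For $\widehat{Q}_{s,k}$ with $s\ge2$ I would insert the Mellin--Barnes representation of $G^{s-1,0}_{0,s-1}$, interchange the two integrations, and evaluate the inner $t$-integral by the classical Laguerre formula $\int_0^\infty t^{\beta-1}e^{-t}L_k^{(\nu_1)}(t)\,dt=\Gamma(\beta)\Gamma(k+\nu_1+1-\beta)\big/\bigl(k!\,\Gamma(\nu_1+1-\beta)\bigr)$; the change of variable $u=-w$ together with $\Gamma(k+1-u)/\Gamma(1-u)=(-1)^k\,\Gamma(u)/\Gamma(u-k)$ then converts the resulting Mellin--Barnes integral into exactly $Q_{s,k}$, up to the constant $(-1)^k/\sqrt{k!\,\Gamma(k+\nu_1+1)}$ (the case $s=1$ being the degenerate instance, checked directly from the inverse Mellin transform of the Gamma function and a Chu--Vandermonde identity for Laguerre polynomials). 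Since the two constants are reciprocal, $\widehat{P}_{r,k}(x)\widehat{Q}_{s,k}(y)=P_{r,k}(x)Q_{s,k}(y)$ term by term, and summing over $k=0,\ldots,n-1$ gives the stated formula.

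The technical heart, and the place where care is genuinely needed, is the contour bookkeeping in this last step. One must verify that the closed contour $\Sigma_k$, which by hypothesis encircles $0,1,\ldots,k$ and satisfies $\re t>-1$, picks up precisely those poles of $\Gamma(t-k)$ reproducing the finite sum for $P_{r,k}$ and no others; that the interchange of the Mellin--Barnes integral with the Laguerre $t$-integral is legitimate; and that, after $u=-w$, the resulting vertical contour may be pushed to $\re u=c>0$. The last point holds because the only poles of the integrand of $Q_{s,k}$ arise from $\prod_{j=0}^s\Gamma(u+\nu_j)$ and lie in $\re u\le0$ (recall $\nu_1,\ldots,\nu_s\ge1$ while $\nu_0=0$), whereas $\Gamma(u)/\Gamma(u-k)$ is a polynomial, so no pole is crossed. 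Everything else is routine manipulation of Gamma functions.
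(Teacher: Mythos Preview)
Your proposal is correct and follows essentially the same route as the paper's proof: specialize Theorem~\ref{TheoremPolynomialInitialConditions}, factor $A=HD$ with $H$ the Laguerre Hankel matrix, invert $H$ via the orthonormal Laguerre polynomials, and identify the resulting $\widehat P_{r,k},\widehat Q_{s,k}$ with the contour integrals $P_{r,k},Q_{s,k}$ by residues and a Mellin--Barnes computation. The only genuine difference is in the $Q$-step: the paper first evaluates each $\phi_{0,s}(j,y)$ as a closed-form $G^{s,0}_{0,s}$, keeps the finite $j$-sum under the Mellin--Barnes integral, and collapses it via the Chu--Vandermonde identity, whereas you first assemble the $j$-sum into $L_k^{(\nu_1)}$ under the $t$-integral and then use the Laguerre moment formula $\int_0^\infty t^{\beta-1}e^{-t}L_k^{(\nu_1)}(t)\,dt=\Gamma(\beta)\Gamma(k+\nu_1+1-\beta)\big/\bigl(k!\,\Gamma(\nu_1+1-\beta)\bigr)$; these are two packagings of the same hypergeometric identity, and either yields the factor $\Gamma(u)/\Gamma(u-k)$ in one stroke.
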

Theorem \ref{TheoremGinibreKernel} enables us to derive a double integral formula for the correlation kernel. Namely, the following Proposition holds true
\begin{prop}\label{PropositionKexactInitialGinibre} We have
\begin{equation}\label{K1}
\begin{split}
K_{n,m}(r,x;s,y)&=-\frac{1}{x}G^{s-r,0}_{0,s-r}\left(\begin{array}{ccc}
                - \\
               \nu_{r+1},\ldots,\nu_s
             \end{array}\biggl|\frac{y}{x}\right)\mathbf{1}_{s>r}\\
             &+\frac{1}{(2\pi i)^2}\int\limits_{-\frac{1}{2}-i\infty}^{-\frac{1}{2}+i\infty}
du\oint\limits_{\Sigma}dt\frac{\prod_{j=0}^s\Gamma(u+\nu_j+1)}{\prod_{j=0}^r\Gamma(t+\nu_j+1)}
\frac{\Gamma(t-n+1)}{\Gamma(u-n+1)}
\frac{x^ty^{-u-1}}{t-u},
\end{split}
\end{equation}
where $\Sigma$ is a closed contour going around $0,1,\ldots,n$ in the positive direction
and such that $\re t>-\frac{1}{2}$ for $t\in\Sigma$.
\end{prop}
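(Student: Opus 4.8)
The plan is to start from the formula for $K_{n,m}(r,x;s,y)$ given by Theorem~\ref{TheoremGinibreKernel}, in which the kernel is written as the Meijer $G$-function term (the one multiplied by $\mathbf{1}_{s>r}$) plus $\sum_{p=0}^{n-1}P_{r,p}(x)Q_{s,p}(y)$. The $G$-function term is already literally the first term on the right-hand side of \eqref{K1}, so the whole task reduces to transforming the sum $\sum_{p=0}^{n-1}P_{r,p}(x)Q_{s,p}(y)$ into the double contour integral of \eqref{K1}. First I would insert the Mellin--Barnes representations of $P_{r,p}$ and $Q_{s,p}$. The integrand of $P_{r,p}$ is $\Gamma(t-p)x^{t}/\prod_{j=0}^{r}\Gamma(t+\nu_{j}+1)$, and in the half-plane $\re t>-\tfrac12$ its only singularities are the simple poles of $\Gamma(t-p)$ at $t=0,1,\dots,p$, the reciprocal Gamma factors being entire; hence every contour $\Sigma_{p}$ may be deformed to one fixed contour $\Sigma$ that encircles $0,1,\dots,n$ and stays in $\{\re t>-\tfrac12\}$. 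Taking the vertical line in $Q_{s,p}$ at $\re u=\tfrac12$ (any $c>0$ gives the same $Q_{s,p}$), and interchanging the \emph{finite} sum over $p$ with the two integrations, one arrives at
\[
\sum_{p=0}^{n-1}P_{r,p}(x)Q_{s,p}(y)=\frac{1}{(2\pi i)^{2}}\oint_{\Sigma}dt\int_{\frac12-i\infty}^{\frac12+i\infty}du\;\frac{\prod_{j=0}^{s}\Gamma(u+\nu_{j})}{\prod_{j=0}^{r}\Gamma(t+\nu_{j}+1)}\,x^{t}y^{-u}\,S_{n}(t,u),
\]
where $S_{n}(t,u):=\sum_{p=0}^{n-1}\Gamma(t-p)/\Gamma(u-p)$.

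The key step is the closed evaluation of $S_{n}$. For this I would use the identity
\[
\frac{\Gamma(t-p)}{\Gamma(u-p)}=\frac{1}{u-t-1}\left(\frac{\Gamma(t-p)}{\Gamma(u-p-1)}-\frac{\Gamma(t-p+1)}{\Gamma(u-p)}\right),
\]
which follows in one line from $\Gamma(z+1)=z\Gamma(z)$. Its right-hand side is $h(p)-h(p-1)$ with $h(p):=\Gamma(t-p)/\Gamma(u-p-1)$, so the sum over $p=0,\dots,n-1$ telescopes and gives
\[
S_{n}(t,u)=\frac{1}{u-t-1}\left(\frac{\Gamma(t-n+1)}{\Gamma(u-n)}-\frac{\Gamma(t+1)}{\Gamma(u)}\right).
\]
(Alternatively this can be proved by induction on $n$, or first for $\re(u-t)$ large via the Beta integral together with a finite geometric sum, and then continued analytically.)

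Substituting this expression for $S_{n}$ splits the double integral into two pieces. In the piece carrying $\Gamma(t+1)/\Gamma(u)$ one invokes $\nu_{0}=0$: then $\Gamma(t+1)/\prod_{j=0}^{r}\Gamma(t+\nu_{j}+1)=1/\prod_{j=1}^{r}\Gamma(t+\nu_{j}+1)$ is entire in $t$, so the $t$-integrand of that piece is holomorphic inside $\Sigma$ except for the simple pole at $t=u-1$; but $\re(u-1)=-\tfrac12$, and a Jordan curve lying in the convex half-plane $\{\re t>-\tfrac12\}$ has its interior in that half-plane too, so $t=u-1$ lies outside $\Sigma$ and this piece vanishes by Cauchy's theorem. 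In the surviving piece the change of variable $u\mapsto u+1$ moves the line $\re u=\tfrac12$ to $\re u=-\tfrac12$, turns $\Gamma(u+\nu_{j})$ into $\Gamma(u+\nu_{j}+1)$ and $\Gamma(u-n)$ into $\Gamma(u-n+1)$, turns $y^{-u}$ into $y^{-u-1}$, and turns the factor $(u-t-1)^{-1}$ into the Cauchy-type kernel appearing in \eqref{K1}. Since $s\ge1$ the resulting double integral converges absolutely, so the order of integration may be interchanged to put the $u$-integral on the outside, and one obtains exactly the right-hand side of \eqref{K1}.

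The main obstacle is the summation identity for $S_{n}$, together with the correct placement of the contours: one must choose the $u$-line at $\re u=\tfrac12$ precisely so that the spurious pole $t=u-1$ falls outside $\Sigma$ and the $\Gamma(t+1)/\Gamma(u)$ term genuinely drops out, and one must carefully track orientations and signs through the telescoping, the shift $u\mapsto u+1$, and the Fubini interchange; the rest is routine bookkeeping.
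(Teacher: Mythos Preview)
Your proposal is correct and is precisely the argument the paper has in mind: the paper itself gives no details here, stating only that Proposition~\ref{PropositionKexactInitialGinibre} ``can be proved by repetitions of arguments from Kuijlaars and Zhang \cite{KuijlaarsZhang}, Section~5,'' and your telescoping evaluation of $S_{n}(t,u)=\sum_{p=0}^{n-1}\Gamma(t-p)/\Gamma(u-p)$ together with the vanishing of the $\Gamma(t+1)/\Gamma(u)$ piece is exactly that argument. One small point to track: your computation produces the Cauchy factor $1/(u-t)$ after the shift $u\mapsto u+1$, whereas the displayed formula \eqref{K1} has $1/(t-u)$; this is a sign convention you should reconcile (compare with \eqref{KInitialTruncation}, which carries $1/(u-t)$), but it does not affect the method.
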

If $r=s$, then  the determinant process defined by $K_{n,m}(r,x;s,y)$   is that formed by
squared singular values of products of $r$ independent complex  Ginibre matrices. In this case
the first term in the right-hand side of equation (\ref{K1}) becomes identically equal to zero,
and the second term  coincides with
the correlation kernel  obtained by Kuijlaars and Zhang \cite{KuijlaarsZhang}, see Kuijlaars and Zhang \cite[Prop. 5.1]{KuijlaarsZhang}.
By the same
argument as in Kuijlaars and Zhang \cite{KuijlaarsZhang} we find the scaling limit of $K_{n,m}(r,x;s,y)$
near the origin (hard edge).
\begin{prop}\label{PropositionScalingLimit1} Assume that $x$, $y$ take values in a compact subset of $\R_{>0}$, and assume that the parameters $\nu_1$, $\ldots$, $\nu_m$
are fixed. We have
\begin{equation}
\begin{split}
\underset{n\rightarrow\infty}{\lim}\frac{1}{n}K_{n,m}\left(r,\frac{x}{n};s,\frac{y}{n}\right)&=-\frac{1}{x}G^{s-r,0}_{0,s-r}\left(\begin{array}{ccc}
                - \\
               \nu_{r+1},\ldots,\nu_s
             \end{array}\biggl|\frac{y}{x}\right)\mathbf{1}_{s>r}\\
             &+\frac{1}{(2\pi i)^2}\int\limits_{-\frac{1}{2}-i\infty}^{-\frac{1}{2}+i\infty}
du\oint\limits_{\Sigma}dt\frac{\prod_{j=0}^s\Gamma(u+\nu_j+1)}{\prod_{j=0}^r\Gamma(t+\nu_j+1)}
\frac{\sin\pi u}{\sin\pi t}
\frac{x^ty^{-u-1}}{t-u},
\end{split}
\end{equation}
where $r, s\in\left\{1,\ldots,m\right\}$, and  $\Sigma$  is a contour starting from $+\infty$ in the upper half plane and returning to
$+\infty$ in the lower half plane, leaving $-\frac{1}{2}$ on the left, and encircling $\{0,1,2,\ldots \}$.
\end{prop}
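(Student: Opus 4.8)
The plan is to obtain the limit directly from the exact double contour integral of Proposition~\ref{PropositionKexactInitialGinibre}, performing the hard edge rescaling $x\mapsto x/n$, $y\mapsto y/n$ and tracking the powers of $n$. The Meijer $G$-term needs no analysis: since $G^{s-r,0}_{0,s-r}$ is a function of the ratio of its argument, replacing $(x,y)$ by $(x/n,y/n)$ changes it only through the prefactor $-1/x\mapsto -n/x$, which is exactly cancelled by the overall factor $1/n$, so the first term on the right-hand side is already present at finite $n$. In the double integral the rescaling produces $(x/n)^t(y/n)^{-u-1}=n^{u-t}x^ty^{-u-1}$, so after multiplication by $1/n$ the only $n$-dependent part of the integrand is
\[
\frac{\Gamma(t-n+1)}{\Gamma(u-n+1)}\,n^{u-t}.
\]
To identify its limit I would use the reflection formula $\Gamma(z)\Gamma(1-z)=\pi/\sin\pi z$ together with the fact that $n$ is an integer, which gives $\Gamma(t-n+1)=\pi\bigl((-1)^{n+1}\sin\pi t\,\Gamma(n-t)\bigr)^{-1}$ and likewise for $u$, whence
\[
\frac{\Gamma(t-n+1)}{\Gamma(u-n+1)}=\frac{\sin\pi u}{\sin\pi t}\cdot\frac{\Gamma(n-u)}{\Gamma(n-t)}.
\]
By Stirling's formula $\Gamma(n-u)/\Gamma(n-t)=n^{t-u}(1+O(1/n))$ uniformly for $t,u$ in compact sets, so the displayed quantity tends to $\sin\pi u/\sin\pi t$, which is precisely the factor appearing in the claimed limiting kernel.

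Next I would justify, simultaneously, the interchange of the limit with the double integration and the deformation of the bounded contour $\Sigma$ of Proposition~\ref{PropositionKexactInitialGinibre} (which encloses $0,1,\dots,n$) into the unbounded Hankel-type contour of the statement (which encloses $\{0,1,2,\dots\}$); this follows the argument of Kuijlaars and Zhang \cite{KuijlaarsZhang}. Concretely I would deform $\Sigma$ into a fixed, $n$-independent truncation of the Hankel contour — two horizontal rays $\im t=\pm\delta$ continued to some $\re t=R$ and joined near a nose in $(-\tfrac12,0)$ — together with a return arc encircling the large poles $0,1,\dots,n$. On the part of that arc with $\re t$ of order $n$ the rescaled integrand carries a factor $n^{-\re t}$ of order $n^{-n}$, which with the bound $|1/\prod_{j=0}^r\Gamma(t+\nu_j+1)|$ and the length of the arc makes its contribution $o(1)$. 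On the fixed truncation one has a dominating function: $|x^t/\prod_{j=0}^r\Gamma(t+\nu_j+1)|$ is super-exponentially small in $\re t$ along the horizontal rays (the $r+1$ Gamma factors beat $x^t$ uniformly for $x$ in the given compact set), $1/|\sin\pi t|$ is bounded on a contour staying at distance $\ge\delta$ from $\Z$, $|\prod_{j=0}^s\Gamma(u+\nu_j+1)|$ decays like $e^{-(s+1)\frac{\pi}{2}|\im u|}$ and $1/|\sin\pi u|$ like $e^{-\pi|\im u|}$ along $\re u=-\tfrac12$, $|y^{-u-1}|\equiv y^{-1/2}$ there, $|t-u|^{-1}$ is bounded because the contours stay a positive distance apart, and $|\Gamma(t-n+1)n^{u-t}/\Gamma(u-n+1)|$ is uniformly bounded by the computation of the previous paragraph. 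Dominated convergence then yields the limiting double integral over the truncated contour, and letting $R\to\infty$ the tails converge by the same decay estimate, giving the full Hankel contour. For $r=s$ this is the hard edge limit of Kuijlaars and Zhang \cite{KuijlaarsZhang}, whose method is exactly the one being followed.

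The step I expect to be the main obstacle is the \emph{uniform} control of $\Gamma(t-n+1)n^{u-t}/\Gamma(u-n+1)$ along the deforming contours, as opposed to the pointwise limit established above: near the nose of the Hankel contour, and above all on its unbounded horizontal tails and on the $n$-dependent return arc, $\re t$ leaves every fixed compact set, so the elementary ratio asymptotic must be replaced by explicit Stirling bounds that stay effective as $\re t\to+\infty$. Dovetailing these bounds with the constraint that the return arc still encircle all of $0,1,\dots,n$ while contributing negligibly after the $n^{-\re t}$ rescaling is the technical heart of the proof, and is precisely what is imported from \cite{KuijlaarsZhang}.
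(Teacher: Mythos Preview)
Your proposal is correct and follows exactly the route the paper takes: the paper gives no independent argument for this proposition beyond the sentence ``Proposition~\ref{PropositionKexactInitialGinibre} and Proposition~\ref{PropositionScalingLimit1} can be proved by repetitions of arguments from Kuijlaars and Zhang \cite{KuijlaarsZhang}, Section~5,'' and your sketch (reflection formula to extract $\sin\pi u/\sin\pi t$, Stirling for $\Gamma(n-u)/\Gamma(n-t)\,n^{u-t}\to 1$, then contour deformation and dominated convergence) is precisely that argument. One small slip: in your decay bookkeeping you write ``$1/|\sin\pi u|$ [decays] like $e^{-\pi|\im u|}$,'' but after your rewriting $\sin\pi u$ sits in the \emph{numerator}, so the relevant statement is that the $(s+1)$ Gamma factors in $u$ decay fast enough to absorb the growth $|\sin\pi u|\sim\tfrac12 e^{\pi|\im u|}$; this is true for $s\ge1$ and is what Kuijlaars--Zhang actually use.
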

Proposition \ref{PropositionScalingLimit1} implies that the scaling limits of the dynamical correlation
functions do exist at hard edge. These scaling limits characterize the asymptotic behavior of the point process under considerations
as $n\rightarrow\infty$.
%%%%%%%%%%%%%%%%%%%%%%%%%%%%%%%%%%%%%%%%%%%%%%%%%%%%%%%%%%%%%%%%%%%%%%
%%%%%%%%%%%%%%%%%%%%%%%%%%%%%%%%%%%%%%%%%%%%%%%%%%%%%%%%%%%%%%%%%%%%%%
%%%%%%%%%%%%%%%%%%%%%%%%%%%%%%%%%%%%%%%%%%%%%%%%%%%%%%%%%%%%%%%%%%%%%%

%%%%%%%%%%%%%%%%%%%%%%%%%%%%%%%%%%%%%%%%%%%%%%%%%%%%%%%%%%%%%%%%%%%%%
%%%%%%%%%%%%%%%%%%%%%%%%%%%%%%%%%%%%%%%%%%%%%%%%%%%%%%%%%%%%%%%%%%%%%
%%%%%%%%%%%%%%%%%%%%%%%%%%%%%%%%%%%%%%%%%%%%%%%%%%%%%%%%%%%%%%%%%%%%%
%%%%%%%%%%%%%%%%%%%%%%%%%%%%%%%%%%%%%%%%%%%%%%%%%%%%%%%%%%%%%%%%%%%%%%%%%%%%%%%%%%%%%
%%%%%%%%%%%%%%%%%%%%%%%%%%%%%%%%%%%%%%%%%%%%%%%%%%%%%%%%%%%%%%%%%%%%%%%%%%%%%%%%%%%%
%%%%%%%%%%%%%%%%%%%%%%%%%%%%%%%%%%%%%%%%%%%%%%%%%%%%%%%%%%%%%%%%%%%%%%%%%%%%%%%%%%%%%
\subsection{The initial matrix is a truncation of a random unitary matrix}
Let $U$ be an $l\times l$ Haar distributed unitary matrix, and let $G_1$
be the $(n+\nu_1)\times n$ upper left block of $U$. We assume that $\nu_1\geq 0$, and that
$l\geq 2n+\nu_1$.
\begin{prop}\label{PropositionJiang}
The density of the squared singular values $\left(x_1^1,\ldots,x_n^1\right)$ of $G_1$ can be written as
$$
\const\cdot\triangle\left(x_1^1,\ldots,x_n^1\right)^2\prod\limits_{i=1}^n\left(x_j^{1}\right)^{\nu_1}\left(1-x_j^{1}\right)^{l-2n-\nu_1}\chi_{(0,1)}(x_j^1),
$$
where
$$
\chi_{(0,1)}(x)=\left\{
                  \begin{array}{ll}
                    1, & x\in (0,1), \\
                    0, & \otherwise.
                  \end{array}
                \right.
$$
\end{prop}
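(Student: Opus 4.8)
The plan is to identify the law of the squared singular values of $G_1$ with the eigenvalue density of a complex matrix-variate Beta random matrix, and then to read off the claimed formula from the Weyl integration formula for Hermitian matrices. Throughout, set $p=n+\nu_1$ and $q=n$, so that $G_1$ is a $p\times q$ matrix with $p\ge q$, and observe that the hypothesis $l\ge 2n+\nu_1$ is exactly the inequality $l-p\ge q$.

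First I would use the standard fact that the first $q$ columns of an $l\times l$ Haar unitary matrix are distributed as the $Q$-factor in the QR decomposition $Z=QR$ of an $l\times q$ matrix $Z$ with i.i.d.\ standard complex Gaussian entries (with $Q$ having orthonormal columns and $R$ upper triangular with positive diagonal, so that $Q$ and $R$ are independent). Writing $Z=\binom{Z_1}{Z_2}$ with $Z_1$ of size $p\times q$ and $Z_2$ of size $(l-p)\times q$, and correspondingly $Q=\binom{Q_1}{Q_2}$, one gets $G_1\overset{d}{=}Q_1=Z_1R^{-1}$. Since $R^{*}R=Z^{*}Z=Z_1^{*}Z_1+Z_2^{*}Z_2$, conjugating by a suitable unitary matrix shows that the eigenvalues of $G_1^{*}G_1$ --- that is, the squared singular values $(x_1^1,\dots,x_n^1)$ --- have the same law as the eigenvalues of
\[
(S_1+S_2)^{-1/2}\,S_1\,(S_1+S_2)^{-1/2},
\]
where $S_1:=Z_1^{*}Z_1$ and $S_2:=Z_2^{*}Z_2$ are independent complex Wishart matrices with $p$ and $l-p$ degrees of freedom, respectively. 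By definition this last matrix is a complex matrix-variate Beta random matrix with parameters $(p,\,l-p)$.

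Next I would invoke the known density of this distribution: since $p\ge q$ and $l-p\ge q$, the Beta matrix $W$ is absolutely continuous on the set $\{W:\ 0<W<I_n\}$ of Hermitian matrices, with density proportional to $(\det W)^{p-q}(\det(I_n-W))^{(l-p)-q}$. This may either be cited from the literature on truncations of random unitary matrices, or re-derived directly by computing the Jacobian of the map $G_1\mapsto G_1^{*}G_1$ on $p\times q$ complex matrices and integrating out the angular variables over the complex Stiefel manifold. Substituting $p-q=\nu_1$ and $(l-p)-q=l-2n-\nu_1$ gives the density of $W$ as $\const\cdot(\det W)^{\nu_1}(\det(I_n-W))^{l-2n-\nu_1}$ on $0<W<I_n$.

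Finally, diagonalizing $W=V\diag(x_1^1,\dots,x_n^1)V^{*}$ with $V$ unitary, the density just obtained is a class function of $W$, so the Weyl integration formula for $n\times n$ Hermitian matrices produces the induced density on the (unordered) eigenvalues by inserting the Jacobian factor $\prod_{1\le i<j\le n}(x_i^1-x_j^1)^{2}=\triangle(x_1^1,\dots,x_n^1)^{2}$, the volume of the unitary orbit being absorbed into the normalizing constant. This yields precisely
\[
\const\cdot\triangle(x_1^1,\dots,x_n^1)^{2}\prod_{j=1}^{n}\left(x_j^1\right)^{\nu_1}\left(1-x_j^1\right)^{l-2n-\nu_1}\chi_{(0,1)}(x_j^1).
\]
The QR representation and the Weyl integration are routine; I expect the only genuinely technical point --- hence the main obstacle for a fully self-contained argument --- to be the explicit density of the complex matrix-variate Beta law used in the third step, which requires either a careful Jacobian computation together with the normalizing volume of the complex Stiefel manifold, or a direct appeal to the literature on truncated Haar unitary matrices.
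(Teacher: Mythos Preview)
Your argument is correct: the QR representation of the first $n$ columns of a Haar unitary, the identification of $G_1^{*}G_1$ with a complex matrix-variate Beta matrix via $S_1=Z_1^{*}Z_1$, $S_2=Z_2^{*}Z_2$, and the passage to eigenvalues through the Weyl integration formula are all standard and assembled in the right order. The exponents $p-q=\nu_1$ and $(l-p)-q=l-2n-\nu_1$ are computed correctly, and you have used the hypothesis $l\ge 2n+\nu_1$ in the right place (to ensure $l-p\ge q$, i.e.\ absolute continuity of the Beta law).

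By contrast, the paper does not prove this proposition at all: its entire proof is a citation to Jiang \cite[Prop.~2.1]{Jiang}. So your approach is not so much a different route as an actual route; you have supplied the argument the paper outsources. What your version buys is self-containment (modulo the Jacobian for the complex Beta density, which you correctly flag as the one nontrivial input); what the paper's version buys is brevity, since this result is well known in the random-matrix literature and is not the focus of the paper.
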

\begin{proof}
See Jiang \cite[Prop. 2.1]{Jiang}.
\end{proof}
Thus the squared singular values $\left(x_1^1,\ldots,x_n^1\right)$ of $G_1$ form a polynomial ensemble,
and the functions $f_1,\ldots,f_n$ in equation (\ref{PolynomialEnsemble}) are given by
\begin{equation}\label{FunctionFiTruncation}
f_i(x)=e^{x}x^{i-1}(1-x)^{l-2n-\nu_1}\chi_{(0,1)}(x),\;\; i\in\left\{1,\ldots,n\right\}.
\end{equation}
\begin{thm}\label{TheoremTruncation} When the initial matrix $G_1$ is a truncation of a random unitary matrix (as defined above), the correlation kernel of the Ginibre product process is given by
equation (\ref{CorrelationKernelGinibre}), where the functions $P_{r,p}(x)$, $Q_{s,p}(y)$ are defined by
$$
P_{r,p}(x)=\frac{1}{2\pi i}\oint\limits_{\Sigma_p}\Gamma(t-p)\frac{\Gamma(t+l-2n+1+p)}{\prod_{j=0}^r\Gamma(t+\nu_j+1)}x^tdt,
$$
and
$$
Q_{s,p}(y)=\frac{1}{2\pi i}\int\limits_{c-i\infty}^{c+i\infty}\frac{\prod_{j=0}^s\Gamma(u+\nu_j)}{\Gamma(u-p)\Gamma(u-2n+l+1+p)}y^{-u}du
$$
Here $r, s\in\left\{1,\ldots,m\right\}$, $c>0$, and the contour $\Sigma_p$ is chosen in the same way as in the statement of Theorem \ref{TheoremGinibreKernel}.
\end{thm}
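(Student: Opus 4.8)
The plan is to obtain the result directly from Theorem~\ref{TheoremPolynomialInitialConditions} applied with the functions $f_i$ of \eqref{FunctionFiTruncation}. Formula \eqref{MainGeneralFormula} already supplies the Meijer $G$-term and the functions $\phi_{r,m+1}$, neither of which depends on the choice of $f_i$; so the whole task is (a) to evaluate the matrix $A$ of \eqref{GeneralFormulaAij} for this particular $f_i$, and (b) to rewrite the finite bilinear sum $\sum_{i,j=1}^n\phi_{r,m+1}(x,i)(A^{-1})_{i,j}\phi_{0,s}(j,y)$ in the biorthogonal form $\sum_{p=0}^{n-1}P_{r,p}(x)Q_{s,p}(y)$ with the $P_{r,p},Q_{s,p}$ of the statement. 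Because the Meijer $G$-term coincides with the one in Theorem~\ref{TheoremGinibreKernel}, the argument runs in complete parallel with the proof of that theorem, the only input that changes being the matrix $A$.

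For (a): substituting \eqref{FunctionFiTruncation} into \eqref{GeneralFormulaAij}, the factor $e^{t}$ built into $f_i$ cancels the $e^{-t}$ in the integrand, and the remaining integral is an Euler Beta integral,
\[
\int_0^\infty f_i(t)e^{-t}t^{\nu_1+j-1}\,dt=\int_0^1 t^{\nu_1+i+j-2}(1-t)^{l-2n-\nu_1}\,dt=\frac{\Gamma(\nu_1+i+j-1)\,\Gamma(l-2n-\nu_1+1)}{\Gamma(l-2n+i+j)},
\]
convergent precisely because $\nu_1\ge 0$ and $l\ge 2n+\nu_1$. Hence
\[
a_{i,j}=\Gamma(l-2n-\nu_1+1)\,\Bigl(\prod_{k=2}^{m}\Gamma(\nu_k+j)\Bigr)\,\frac{\Gamma(\nu_1+i+j-1)}{\Gamma(l-2n+i+j)},
\]
which differs from the matrix occurring in the Ginibre case (where the integral is simply $\Gamma(\nu_1+i+j-1)$) only by the extra factor $\Gamma(l-2n-\nu_1+1)/\Gamma(l-2n+i+j)$, reflecting the additional weight $(1-x)^{l-2n-\nu_1}\chi_{(0,1)}(x)$ present in the initial Jacobi-type ensemble of Proposition~\ref{PropositionJiang}. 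By that proposition and the hypothesis $l\ge 2n+\nu_1$, \eqref{PolynomialEnsemble} with these $f_i$ is a genuine probability density and the span of $f_1,\dots,f_n$ is $n$-dimensional, so $A$ is invertible.

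For (b) I would reproduce the biorthogonality argument behind Theorem~\ref{TheoremGinibreKernel}, now with this $A$. A residue computation shows that $P_{r,p}$, as defined in the statement, is a polynomial in $x$ of degree $p$: inside $\Sigma_p$ the integrand has only the simple poles of $\Gamma(t-p)$ at $t=0,1,\dots,p$ (the factor $\Gamma(t+l-2n+1+p)$ contributes none there, since $l\ge 2n+\nu_1\ge 2n$ places all of its poles in $\{\re t\le-1\}$, and $1/\prod_{j=0}^r\Gamma(t+\nu_j+1)$ is entire). Thus $P_{r,p}$ lies in the span of $\phi_{r,m+1}(x,1),\dots,\phi_{r,m+1}(x,n)$, each a constant multiple of a monomial, and one checks that the resulting expansion coefficients $b_{p,i}$ are independent of $r$ thanks to $\prod_{j=0}^r\Gamma(\nu_j+i)\prod_{k=r+1}^{m}\Gamma(\nu_k+i)=\prod_{j=0}^m\Gamma(\nu_j+i)$. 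Dually, the Mellin--Barnes integral $Q_{s,p}$ lies in the span of $\phi_{0,s}(1,y),\dots,\phi_{0,s}(n,y)$ with $s$-independent coefficients $c_{p,j}$ — for $s=1$ this reduces, upon closing the contour, to the statement that $Q_{1,p}(y)$ vanishes for $y>1$ and equals $y^{\nu_1}(1-y)^{l-2n-\nu_1}$ times a polynomial of degree $\le p$ for $y<1$, which is exactly the form of $\sum_j c_{p,j}f_j(y)y^{\nu_1}e^{-y}$. With these expansions, $\sum_p P_{r,p}(x)Q_{s,p}(y)=\sum_{i,j}\phi_{r,m+1}(x,i)\bigl(\sum_p b_{p,i}c_{p,j}\bigr)\phi_{0,s}(j,y)$, so what remains is the single scalar identity $\sum_{i,j}c_{p,i}\,a_{i,j}\,b_{q,j}=\delta_{pq}$; inserting the Beta-integral value of $a_{i,j}$, this collapses by residue calculus and the reflection/shift identities for $\Gamma$, the new factor $\Gamma(t+l-2n+1+p)$ in $P_{r,p}$ and the new factor $1/\Gamma(u-2n+l+1+p)$ in $Q_{s,p}$ being precisely what absorbs the new ratio $\Gamma(l-2n-\nu_1+1)/\Gamma(l-2n+i+j)$ in $a_{i,j}$. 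Granted this, Theorem~\ref{TheoremPolynomialInitialConditions} yields \eqref{CorrelationKernelGinibre} with the stated $P_{r,p}$, $Q_{s,p}$; as a sanity check, for $m=1$ this is the Christoffel--Darboux kernel of the Jacobi unitary ensemble written via Mellin--Barnes representations of Jacobi polynomials.

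The step I expect to be the main obstacle is the pole bookkeeping in the last verification: making sure the vertical contour $\re u=c>0$ and the loop $\Sigma_p$ (around $0,\dots,p$, with $\re t>-1$) genuinely separate the two relevant families of poles, so that the Mellin--Barnes integrals converge absolutely and the residue manipulations are legitimate — in particular that the poles introduced by $\Gamma(t+l-2n+1+p)$ and by the zeros of $\Gamma(u-2n+l+1+p)$ do not collide with the old ones, which is exactly where $\nu_1\ge 0$ and $l\ge 2n+\nu_1$ are used. A subsidiary, purely clerical nuisance is keeping track of the index shifts $\nu_j\leftrightarrow\nu_j+1$ between the $\Gamma$-products that enter $A$ (through $\phi_{0,s}$) and those that enter $P_{r,p}$ and $Q_{s,p}$.
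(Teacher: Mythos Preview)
Your overall strategy is sound and genuinely different from the paper's. You propose to \emph{verify} that the given $P_{r,p},Q_{s,p}$ furnish a biorthogonal decomposition of the bilinear form with Gram matrix $A$, whereas the paper proceeds in the opposite direction: it \emph{explicitly inverts} $A$ via a known inversion formula for Jacobi-type moment matrices (Zhang--Chen, Proposition~\ref{PropositionZhangChenTheorem10}), substitutes that inverse into the Eynard--Mehta sum to obtain explicit finite sums $\check P_{r,p},\check Q_{s,p}$, and then simplifies $\check Q_{s,p}$ using the Pfaff--Saalsch\"utz ${}_3F_2$ summation to reach the stated contour integral. (Incidentally, the paper's proof of Theorem~\ref{TheoremGinibreKernel}, to which you appeal as a template, is \emph{not} a biorthogonality argument either: it explicitly inverts the Laguerre Hankel matrix and then uses Chu--Vandermonde. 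So your route differs from the paper in both places.) What your approach buys is conceptual economy: one needs only the single identity $CAB^{T}=I$ rather than an explicit closed form for $A^{-1}$. What the paper's approach buys is that every step is a named classical identity, so nothing is left implicit.

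The gap in your proposal is precisely the step you flag as ``collapses by residue calculus and the reflection/shift identities for $\Gamma$''. This is the entire content of the theorem, and it does not collapse by elementary $\Gamma$-manipulations alone: after inserting your residue expansions for $b_{q,j}$ and $c_{p,i}$ and the Beta value of $a_{i,j}$, the double sum over $i,j$ becomes a terminating balanced ${}_3F_2(1)$, and it is exactly the Pfaff--Saalsch\"utz identity (or an equivalent) that evaluates it. Similarly, your claim that $Q_{s,p}$ lies in the span of the $\phi_{0,s}(j,\cdot)$ with $s$-independent coefficients is only sketched at $s=1$; for general $s$ one needs the Meijer-$G$ expression \eqref{InitialTransitionFunctionTruncation} for $\phi_{0,s}$ and again a hypergeometric summation to match Mellin--Barnes integrands. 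So your plan is correct in outline, but to complete it you will have to invoke the same depth of hypergeometric identity (Pfaff--Saalsch\"utz or a Mellin-transform equivalent) that the paper uses; you should not expect the verification to be a routine residue bookkeeping exercise.
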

As in the previous case, a double contour integral formula for the correlation function can be derived.
\begin{prop}\label{TheoremExactKernelTruncation} When the initial matrix $G_1$ is a truncation of a random unitary matrix, the correlation kernel,
$K_{n,m}(r,x;s,y)$, can be written as
\begin{equation}\label{KInitialTruncation}
\begin{split}
&K_{n,m}(r,x;s,y)=-\frac{1}{x}G^{s-r,0}_{0,s-r}\left(\begin{array}{ccc}
                - \\
               \nu_{r+1},\ldots,\nu_s
             \end{array}\biggl|\frac{y}{x}\right)\mathbf{1}_{s>r}\\
&+\frac{1}{(2\pi i)^2}\int\limits_{-\frac{1}{2}-i\infty}^{-\frac{1}{2}+i\infty}du\oint\limits_{\Sigma}dt
\frac{\prod_{j=0}^s\Gamma(u+\nu_j+1)}{\prod_{j=0}^r\Gamma(t+\nu_j+1)}\frac{\Gamma(t+1-n)\Gamma(t+l-n+1)}{\Gamma(u+1-n)\Gamma(u+l-n+1)}
\frac{x^ty^{-u-1}}{u-t},
\end{split}
\end{equation}
where  $r, s\in\left\{1,\ldots,m\right\}$, and $\Sigma$ is defined as in Proposition \ref{PropositionKexactInitialGinibre}.
\end{prop}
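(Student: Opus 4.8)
The plan is to derive the double contour integral (\ref{KInitialTruncation}) from the formula for the kernel already supplied by Theorem \ref{TheoremTruncation}, in the same spirit in which Proposition \ref{PropositionKexactInitialGinibre} follows from Theorem \ref{TheoremGinibreKernel}. Since the Meijer $G$-term in (\ref{CorrelationKernelGinibre}) and in (\ref{KInitialTruncation}) is literally the same, it suffices to prove that $\sum_{p=0}^{n-1}P_{r,p}(x)Q_{s,p}(y)$ coincides with the double integral on the second line of (\ref{KInitialTruncation}), where $P_{r,p}$ and $Q_{s,p}$ are as in Theorem \ref{TheoremTruncation}.

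First I would substitute the Mellin--Barnes representations of $P_{r,p}(x)$ and $Q_{s,p}(y)$ into the sum. The dependence on the summation index $p$ then factors, entering the $t$-integrand through $\Gamma(t-p)\Gamma(t+l-2n+1+p)$ and the $u$-integrand through $\bigl(\Gamma(u-p)\Gamma(u-2n+l+1+p)\bigr)^{-1}$. Next I would deform every contour $\Sigma_p$ to one fixed contour $\Sigma$ that encircles $\{0,1,\dots,n\}$ and satisfies $\re t>-\tfrac12$: this is permissible because for $p\le n-1$ the factor $\Gamma(t+l-2n+1+p)$ is holomorphic in $\re t>-\tfrac12$ — the hypothesis $l\ge 2n+\nu_1$ ensures $l-2n+1+p\ge 1$ — while $\bigl(\prod_{j=0}^r\Gamma(t+\nu_j+1)\bigr)^{-1}$ is entire. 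Simultaneously I would move the $u$-line from $\re u=c>0$ to $\re u=-\tfrac12$, which crosses no poles since the only candidate, the pole of $\Gamma(u+\nu_0)=\Gamma(u)$ at $u=0$, is cancelled by the zero of $\bigl(\Gamma(u-p)\bigr)^{-1}$ there. Interchanging the finite sum with the double integral then reduces the whole problem to evaluating, in closed form, the finite sum
\[
S_n(t,u)=\sum_{p=0}^{n-1}\frac{\Gamma(t-p)\,\Gamma(t+l-2n+1+p)}{\Gamma(u-p)\,\Gamma(u-2n+l+1+p)}.
\]

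The main obstacle is the evaluation of $S_n(t,u)$. In the Ginibre case the analogous sum $\sum_p\Gamma(t-p)/\Gamma(u-p)$ telescopes after a unit shift of the $u$-variable, and this is what produces the factor $(t-u)^{-1}$ together with the ratio $\Gamma(t-n+1)/\Gamma(u-n+1)$ in (\ref{K1}); the extra Jacobi-type Gamma factors obstruct a direct telescoping here. Instead I would expand $P_{r,p}(x)$ as its finite sum of residues at $t=0,1,\dots,p$, reorganize $\sum_{p=0}^{n-1}P_{r,p}(x)Q_{s,p}(y)$ into the form $\sum_{k=0}^{n-1}(\mathrm{const})\,x^{k}\,\widetilde{R}_k(y)$, and match it, term by term in the linearly independent monomials $x^{k}$, against the analogous decomposition obtained by taking residues at $t=k$ inside $\Sigma$ of the double integral in (\ref{KInitialTruncation}) — whose $t$-integrand has poles only at $t=0,1,\dots,n-1$ inside $\Sigma$. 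This matching reduces to a one-variable identity between $\frac{1}{2\pi i}\int_{\re u=-1/2}$-integrals, and the underlying identity of integrands is a finite summation of ${}_2F_1$ type, which I expect to carry out using the Chu--Vandermonde/Gauss summation formula together with the reflection formula for the Gamma function; one spurious summand will be generated which, once the $t$-contour is closed, integrates to zero because the remaining $t$-integrand is entire inside $\Sigma$. It then remains only to rewrite the surviving product of Gamma functions in the form displayed in (\ref{KInitialTruncation}), and to check the orientation of $\Sigma$, which fixes the sign of $u-t$ in the denominator. Throughout, the identity $\nu_0=0$ is what makes both the cancellation of poles on the $u$-line and the collapse of the monomial prefactors work.
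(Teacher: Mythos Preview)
Your overall strategy --- substitute the Mellin--Barnes integrals for $P_{r,p}$ and $Q_{s,p}$, unify the contours, and reduce everything to a closed-form evaluation of the resulting $p$-sum --- is exactly the paper's. The gap is in the $p$-sum itself: your $S_n(t,u)$ is missing the weight $(l-2n+2p+1)$. If you trace through Section~6 (or redo the bookkeeping from the inverse Jacobi moment matrix in Proposition~6.1), you find that
\[
\widetilde K_{n,m}(r,x;s,y)=\sum_{p=0}^{n-1}(l-2n+2p+1)\,P_{r,p}(x)\,Q_{s,p}(y),
\]
not the unweighted sum suggested by the reference to equation~(\ref{CorrelationKernelGinibre}) in the statement of Theorem~\ref{TheoremTruncation}. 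This weight is precisely the factor that the Jacobi-type orthogonality contributes and that the Laguerre/Ginibre case does not have.

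With the weight in place, direct telescoping \emph{does} work, contrary to your claim. After the shift $u\mapsto u+1$ (which turns $\prod_j\Gamma(u+\nu_j)$ into $\prod_j\Gamma(u+\nu_j+1)$ and $y^{-u}$ into $y^{-u-1}$), set
\[
F(p)=\frac{\Gamma(t+1-p)\,\Gamma(t+l-2n+1+p)}{\Gamma(u+1-p)\,\Gamma(u+l-2n+1+p)}.
\]
A one-line computation with $\Gamma(z+1)=z\Gamma(z)$ gives
\[
F(p+1)-F(p)=(u-t)\,(l-2n+1+2p)\,\frac{\Gamma(t-p)\,\Gamma(t+l-2n+1+p)}{\Gamma(u+1-p)\,\Gamma(u+l-2n+2+p)},
\]
so the correct weighted sum collapses to $(u-t)^{-1}\bigl(F(n)-F(0)\bigr)$. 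The boundary term $F(0)$ is entire in $t$ for $\re t>-\tfrac12$ (since $l\ge 2n+\nu_1$) and hence integrates to zero over $\Sigma$, while $F(n)=\dfrac{\Gamma(t+1-n)\Gamma(t+l-n+1)}{\Gamma(u+1-n)\Gamma(u+l-n+1)}$ is exactly the factor appearing in~(\ref{KInitialTruncation}). This is the argument the paper invokes by citing Proposition~4.4 of Kuijlaars--Stivigny. Your residue-matching / Chu--Vandermonde detour is therefore unnecessary, and in fact cannot succeed as written: with the unweighted $S_n$ you are trying to identify two genuinely different quantities.
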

Formula (\ref{KInitialTruncation}) enables us to find the hard edge scaling limit of the correlation kernel.
After suitable scaling, we obtain the same limiting kernel as in the case where the initial matrix is a complex Ginibre matrix.
\begin{prop}\label{PropositionScalingLimitTruncation} Assume that the parameters $\nu_1$, $\ldots$, $\nu_{m}$ are fixed, and that $l$ grows at least as $2n$.
Then for $x$, $y$ taking values in a compact subset of $\R_{>0}$
\begin{equation}\label{ScalingLimit2}
\begin{split}
&\underset{n\rightarrow\infty}{\lim}\frac{1}{(l-n)n}K_{n,m}\left(r,\frac{x}{(l-n)n};s,\frac{y}{(l-n)n}\right)=-\frac{1}{x}G^{s-r,0}_{0,s-r}\left(\begin{array}{ccc}
                - \\
               \nu_{r+1},\ldots,\nu_s
             \end{array}\biggl|\frac{y}{x}\right)\mathbf{1}_{s>r}\\
             &+\frac{1}{(2\pi i)^2}\int\limits_{-\frac{1}{2}-i\infty}^{-\frac{1}{2}+i\infty}
du\oint\limits_{\Sigma}dt\frac{\prod_{j=0}^s\Gamma(u+\nu_j+1)}{\prod_{j=0}^r\Gamma(t+\nu_j+1)}
\frac{\sin\pi u}{\sin\pi t}
\frac{x^ty^{-u-1}}{t-u},
\end{split}
\end{equation}
where  $r, s\in\left\{1,\ldots,m\right\}$, and $\Sigma$ is chosen in the same way as in Proposition \ref{PropositionScalingLimit1}.
\end{prop}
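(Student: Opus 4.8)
The plan is to start from the exact double contour integral representation (\ref{KInitialTruncation}) and to carry out the scaling substitution $x\mapsto x/((l-n)n)$, $y\mapsto y/((l-n)n)$ followed by multiplication by $1/((l-n)n)$, exactly in the spirit of the proof of Proposition \ref{PropositionScalingLimit1}. The first term on the right-hand side of (\ref{KInitialTruncation}) depends on $x$ and $y$ only through the ratio $y/x$ and through the explicit prefactor $1/x$; the ratio is unchanged by the scaling, while $1/x\mapsto (l-n)n/x$, so multiplication by $1/((l-n)n)$ reproduces this term verbatim. Hence the whole matter reduces to the asymptotics of the double integral.

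For the double integral, the substitution produces an overall factor $((l-n)n)^{u+1-t}$, which together with the prefactor $1/((l-n)n)$ leaves the factor $((l-n)n)^{u-t}$ inside the integrand. It therefore suffices to prove that, uniformly on the relevant contours,
\[
\frac{\Gamma(t+1-n)\,\Gamma(t+l-n+1)}{\Gamma(u+1-n)\,\Gamma(u+l-n+1)}\,\bigl((l-n)n\bigr)^{u-t}\longrightarrow\frac{\sin\pi u}{\sin\pi t}.
\]
Applying the reflection formula $\Gamma(z)\Gamma(1-z)=\pi/\sin\pi z$ to $\Gamma(t+1-n)$ and to $\Gamma(u+1-n)$, the factors $(-1)^{n+1}$ cancel and one is left with $(\sin\pi u/\sin\pi t)\cdot\Gamma(n-u)/\Gamma(n-t)$. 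Because $l$ grows at least as fast as $2n$, one has $l-n\to\infty$ simultaneously with $n\to\infty$, so the classical ratio asymptotics give $\Gamma(n-u)/\Gamma(n-t)\sim n^{t-u}$ and $\Gamma(t+l-n+1)/\Gamma(u+l-n+1)\sim (l-n)^{t-u}$; multiplying these with the factor $((l-n)n)^{u-t}$ yields $\sin\pi u/\sin\pi t$ in the limit. At the same time the closed contour $\Sigma$ encircling $\{0,1,\dots,n\}$ is opened, as $n\to\infty$, into the infinite contour of Proposition \ref{PropositionScalingLimit1} encircling $\{0,1,2,\dots\}$, and the resulting expression is precisely the limiting kernel (\ref{ScalingLimit2}), which coincides with the hard edge limit obtained in Proposition \ref{PropositionScalingLimit1} for the Ginibre initial condition.

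The main obstacle is the justification of the interchange of limit and double integration. One needs Stirling-type estimates that are uniform in $t$ along the non-compact part of $\Sigma$ (where $\re t\to+\infty$) and in $\im u$ along the line $\re u=-\tfrac12$, and --- the new point here compared with Proposition \ref{PropositionScalingLimit1} --- uniform with respect to the two large parameters $n$ and $l-n$, which need not grow at the same rate. The factors $\prod_{j=0}^{s}\Gamma(u+\nu_j+1)$ supply exponential decay in $|\im u|$ and control the $u$-integral; the delicate estimate is the uniform bound on $\Gamma(t+l-n+1)/\Gamma(u+l-n+1)\cdot(l-n)^{u-t}$ as $\re t\to+\infty$ while $l-n\to\infty$, which is handled exactly as in Kuijlaars and Zhang \cite{KuijlaarsZhang} by splitting $\Sigma$ into a bounded part and an unbounded part and estimating each separately, the contribution of the additional truncation factor being tamed by the largeness of $l-n$. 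Once this domination is established, dominated convergence completes the proof.
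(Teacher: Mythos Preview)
Your proposal is correct and follows essentially the same route as the paper, which simply states that the proof ``is the same as that of Proposition 4.4 and Theorem 4.7 in Kuijlaars, Stivigny \cite{KuijlaarsStivigny}'' once the double integral representation (\ref{KInitialTruncation}) is in hand. Your write-up in fact spells out more of the mechanics --- the reflection-formula computation and the separate handling of the two large parameters $n$ and $l-n$ --- than the paper itself does, but the underlying idea (scaling inside the exact double integral, reducing to the pointwise limit of the $n$- and $l$-dependent Gamma ratio, and then justifying dominated convergence as in \cite{KuijlaarsZhang,KuijlaarsStivigny}) is identical.
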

We note that for $r=s$ the formulae (\ref{KInitialTruncation}) and (\ref{ScalingLimit2})
reduces to those found by Kuijlaars and Stivigny \cite{KuijlaarsStivigny}.

%%%%%%%%%%%%%%%%%%%%%%%%%%%%%%%%%%%%%%%%%%%%%%%%%%%%%%%%%%%%%%%%%%%%%%%%%%%%%%%%%%%%%%%%
%%%%%%%%%%%%%%%%%%%%%%%%%%%%%%%%%%%%%%%%%%%%%%%%%%%%%%%%%%%%%%%%%%%%%%%%%%%%%%%%%%%%%%%%%%%%%%%%%
%%%%%%%%%%%%%%%%%%%%%%%%%%%%%%%%%%%%%%%%%%%%%%%%%%%%%%%%%%%%%%%%%%%%%%%%%%%%%%%%%%%%%%%%%%%%%%%%%%%
%%%%%%%%%%%%%%%%%%%%%%%%%%%%%%%%%%%%%%%%%%%%%%%%%%%%%%%%%%%%%%%%%%%%%%%%%%%%%%%%%%%%%%%%%%%%%%%%%%%%
%%%%%%%%%%%%%%%%%%%%%%%%%%%%%%%%%%%%%%%%%%%%%%%%%%%%%%%%%%%%%%%%%%%%%%%%%%%%%%%%%%%%%%%%%%%%%%%%%%%%

\subsection{The initial matrix is a Ginibre matrix with a source}\label{GinibreSource}
Assume that the random matrix $G_1$ determining the initial conditions of the Ginibre product process is a sum of a complex Ginibre matrix $G$
and a nonrandom matrix $Q$. In this case we use terminology of Forrester and Liu \cite{ForresterLiu}  and say that the initial conditions of the Ginibre product process are
given by a complex Ginibre matrix  with a source.
\begin{prop}\label{PropositionAPlusG}
Assume that the initial matrix $G_1$ is given by
$$
G_1=G+Q,
$$
where $G$ is a complex Ginibre matrix of size $(n+\nu_1)\times n$, and $Q$ is a nonrandom matrix
of the same size whose squared singular values are $\left(q_1,\ldots,q_n\right)$. Then the density of the squared singular values
$\left(x_1^1,\ldots,x_n^1\right)$ of $G_1$ is proportional to
$$
\triangle\left(x_1^1,\ldots,x_n^1\right)\det\left({}_0F_1\left(\nu_1+1;q_ix_j^1\right)\right)_{i,j=1}^n\prod\limits_{j=1}^n\left(x_j^1\right)^{\nu_1}e^{-x_j^1}.
$$
\end{prop}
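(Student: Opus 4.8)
The plan is to compute the joint density of the squared singular values of $G_1 = G + Q$ by reducing, via invariance of the complex Ginibre measure, to a density that depends on $Q$ only through its squared singular values, and then to evaluate the resulting group (or angular) integral in terms of a hypergeometric function of matrix argument. First I would use the bi-unitary invariance of the complex Ginibre ensemble: writing the singular value decomposition $Q = U\,\diag(\sqrt{q_1},\ldots,\sqrt{q_n})\,V^*$ (padded appropriately to rectangular size $(n+\nu_1)\times n$), the change of variables $G \mapsto U^* G V$ leaves the Ginibre density $\propto e^{-\Tr G^*G}$ invariant, so the density of the squared singular values of $G+Q$ equals that of $G + \Lambda$ where $\Lambda$ is the rectangular diagonal matrix with entries $\sqrt{q_i}$. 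This reduces the problem to a chiral/Wishart-type model with a deterministic external source $\Lambda$.

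Next I would pass to the eigenvalue density of $(G+\Lambda)^*(G+\Lambda)$. Writing $M = G+\Lambda$, the matrix $M^*M$ is a noncentral complex Wishart matrix with $n+\nu_1$ degrees of freedom and noncentrality matrix $\Lambda^*\Lambda = \diag(q_1,\ldots,q_n)$. The density of its ordered eigenvalues $(x_1^1,\ldots,x_n^1)$ is classically known (James; see also the products-of-matrices literature) to be proportional to
\begin{equation*}
\triangle\left(x_1^1,\ldots,x_n^1\right)^2\prod_{j=1}^n (x_j^1)^{\nu_1} e^{-x_j^1}\cdot {}_0\!F_1^{(n)}\!\left(\nu_1+1;\,\diag(q_i),\,\diag(x_j^1)\right),
\end{equation*}
where ${}_0\!F_1^{(n)}$ is the hypergeometric function of two matrix arguments. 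The key step is then the Harish-Chandra/Itzykson--Zuber--type identity that linearizes this matrix hypergeometric function: for two diagonal matrices,
\begin{equation*}
{}_0\!F_1^{(n)}\!\left(\nu_1+1;\,\diag(q_i),\,\diag(x_j)\right)=\const\cdot\frac{\det\left({}_0\!F_1\!\left(\nu_1+1;\,q_i x_j\right)\right)_{i,j=1}^n}{\triangle(q_1,\ldots,q_n)\,\triangle(x_1,\ldots,x_n)},
\end{equation*}
with the scalar ${}_0\!F_1$ on the right-hand side. Substituting this and absorbing the $q$-dependent factor $\triangle(q_1,\ldots,q_n)^{-1}$ and all numerical constants into the overall proportionality constant, one Vandermonde cancels and the density collapses exactly to
\begin{equation*}
\const\cdot\triangle\left(x_1^1,\ldots,x_n^1\right)\det\left({}_0\!F_1\!\left(\nu_1+1;\,q_i x_j^1\right)\right)_{i,j=1}^n\prod_{j=1}^n (x_j^1)^{\nu_1} e^{-x_j^1},
\end{equation*}
which is the claimed formula and manifestly of the polynomial-ensemble form (\ref{PolynomialEnsemble}) with $f_i(x)={}_0\!F_1(\nu_1+1;q_i x)$.

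The main obstacle is the careful bookkeeping at the reduction step: ensuring the noncentral Wishart eigenvalue density is quoted with the correct exponents $\nu_1$ and the correct normalization of the matrix-argument ${}_0\!F_1$, and verifying that the Harish-Chandra--Itzykson--Zuber linearization applies in the rectangular/chiral setting with parameter $\nu_1+1$ (equivalently, that the relevant group integral over $U(n)$ produces precisely the $\det({}_0\!F_1(\nu_1+1;q_ix_j))$ divided by the two Vandermondes). Once that identity is in hand, the rest is pure cancellation of Vandermonde factors and constants. It is worth noting that this proposition is essentially a restatement of a result of Forrester and Liu \cite{ForresterLiu}, so an alternative and shorter route is simply to invoke their computation directly; I would include the sketch above mainly to keep the paper self-contained.
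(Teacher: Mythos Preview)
Your sketch is correct and follows the standard derivation (bi-unitary invariance reduces to a noncentral complex Wishart, whose eigenvalue density involves the matrix-argument ${}_0F_1$, which is then linearized via the determinantal HCIZ-type identity). The paper itself does not give an independent proof but simply cites Desrosiers and Forrester \cite[Prop.~5]{DesrosiersForrester} and Forrester \cite[\S 11.6]{ForresterLogGases}; your argument is essentially the content of those references, so there is no discrepancy---just note that the correct attribution is Desrosiers--Forrester rather than Forrester--Liu.
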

\begin{proof}
See  Desrosiers and Forrester \cite[Prop. 5]{DesrosiersForrester}, and Forrester \cite{ForresterLogGases}, $\S11.6$.
\end{proof}
Proposition \ref{PropositionAPlusG} implies that the squared singular values of the initial matrix $G_1=G+Q$
form a polynomial ensemble  defined by
equation (\ref{PolynomialEnsemble}), where
$$
f_i(x)=\frac{1}{\Gamma(\nu_1+1)}{}_0F_1\left(\nu_1+1;q_ix\right),\;\; i\in\left\{1,\ldots, n\right\}.
$$
According to Theorem \ref{TheoremPolynomialInitialConditions} the Ginibre product process with the initial matrix $G_1=G+Q$ is again a determinantal point process.
To give an explicit formula for the relevant correlation kernel we follow
to Forrester and Liu \cite{ForresterLiu}, and  introduce  new functions, $\Psi_r(u;x)$ and $\Phi_s(q;y)$. Namely, we set
\begin{equation}
\begin{split}
\Psi_r(u;x)=\frac{1}{(2\pi i)^r\Gamma(\nu_1+1)}\int\limits_{\gamma_1}&dw_1\ldots\int\limits_{\gamma_r}dw_r
\prod\limits_{l=1}^rw_l^{-\nu_l-1}e^{w_l}\\
&\times e^{\frac{x}{w_1\ldots w_r}}{}_0F_1\left(\nu_1+1;-\frac{xu}{w_1\ldots w_r}\right).
\end{split}
\end{equation}
In the formula above  $\gamma_1$, $\ldots$, $\gamma_r$ are paths starting and ending at $-\infty$ and encircling the origin
once in the positive direction.
In addition, set
$$
\Phi_s(q;y)=\frac{1}{2\pi i}\int\limits_{c-\i\infty}^{c+i\infty}dzy^{-z}\vartheta(q;z)\Gamma(\nu_2+z)\ldots\Gamma(\nu_s+z),
$$
where  $c>-\mbox{min}\left(\nu_1,\ldots,\nu_s\right)$, $s\in\left\{1,\ldots,m\right\}$, and
$$
\vartheta(q;z)=\frac{1}{\Gamma(\nu_1+1)}\int\limits_0^{\infty}dtt^{\nu_1+z-1}e^{-t}{}_0F_1\left(\nu_1+1;-qt\right).
$$
\begin{thm}\label{TheoremSourseExact} When the initial matrix is a a Ginibre matrix with a source,
the Ginibre product process is a determinantal point process on $\left\{1,\ldots,m\right\}\times\R_{>0}$. Its correlation kernel,
$K_{n,m}(r,x;s,y)$, can be written as
\begin{equation}\label{knmsourse}
\begin{split}
&K_{n,m}(r,x;s,y)=-\frac{1}{x}G^{s-r,0}_{0,s-r}\left(\begin{array}{ccc}
                - \\
               \nu_{r+1},\ldots,\nu_s
             \end{array}\biggl|\frac{y}{x}\right)\mathbf{1}_{s>r}\\
&+\frac{1}{2\pi i}\int\limits_0^{\infty}du\int\limits_Cdvu^{\nu_1}e^{-u+v}\Psi_r(u;x)\Phi_s(v;y)
\frac{1}{u-v}\prod\limits_{l=1}^n\frac{u+q_l}{v+q_l},
\end{split}
\end{equation}
where $C$ is a counterclockwise contour encircling $-q_1$, $\ldots$, $-q_n$ but not $u$.
\end{thm}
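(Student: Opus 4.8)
The plan is to deduce this from Theorem \ref{TheoremPolynomialInitialConditions}, specialised to the polynomial ensemble provided by Proposition \ref{PropositionAPlusG}, namely $f_i(x)=\frac{1}{\Gamma(\nu_1+1)}{}_0F_1(\nu_1+1;q_ix)$. The Meijer $G$-function term of (\ref{MainGeneralFormula}) is identical to the first term of (\ref{knmsourse}), so the entire task is to rewrite the bilinear form $\sum_{i,j=1}^n\phi_{r,m+1}(x,i)(A^{-1})_{i,j}\phi_{0,s}(j,y)$ as the double integral appearing in (\ref{knmsourse}). Throughout I will use the loop representation ${}_0F_1(\nu_1+1;z)=\Gamma(\nu_1+1)\frac{1}{2\pi i}\oint e^{w}w^{-\nu_1-1}e^{z/w}\,dw$ and the Mellin--Barnes representation $G^{s-1,0}_{0,s-1}\!\left(\begin{smallmatrix}-\\ \nu_2,\dots,\nu_s\end{smallmatrix}\,\middle|\,z\right)=\frac{1}{2\pi i}\int\Gamma(\nu_2-w)\cdots\Gamma(\nu_s-w)z^{w}\,dw$ coming from (\ref{G-def}).

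The first step is to evaluate the matrix $A$ of (\ref{GeneralFormulaAij}) in closed form. Expanding ${}_0F_1$ in its power series and integrating term by term,
\[
\int_0^\infty f_i(t)e^{-t}t^{\nu_1+j-1}\,dt=e^{q_i}P_j(q_i),
\]
where $P_j$ is a \emph{monic} polynomial of degree $j-1$ in $q_i$ (the coefficients come from re-expanding $\prod_{l=1}^{j-1}(k+\nu_1+l)$ in falling factorials of $k$). Hence $a_{i,j}=\Gamma(\nu_2+j)\cdots\Gamma(\nu_m+j)\,e^{q_i}P_j(q_i)$; pulling the $\Gamma$-factors out of the columns, the $e^{q_i}$ out of the rows, and reducing each monic $P_j$ to the monomial $q_i^{\,j-1}$ by column operations, one gets $\det A=\triangle(q_1,\dots,q_n)\prod_{i=1}^n e^{q_i}\prod_{j=1}^n\Gamma(\nu_2+j)\cdots\Gamma(\nu_m+j)$, where $\triangle$ is the Vandermonde in the $q_\ell$.

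The second step is Cramer's rule in bordered form, $\sum_{i,j}u_i(A^{-1})_{i,j}v_j=-(\det A)^{-1}\det\!\left(\begin{smallmatrix}0 & u^{T}\\ v & A\end{smallmatrix}\right)$, applied with $u_i=\phi_{r,m+1}(x,i)$, $v_j=\phi_{0,s}(j,y)$. Performing inside this $(n+1)\times(n+1)$ determinant the same row/column factorisations as above, all the $\Gamma$- and $e^{q_i}$-prefactors cancel against those of $\det A$, and one is left with a single bordered determinant divided by $\triangle(q)$: its first row carries the $x$-dependence through $x^{j-1}/(\Gamma(\nu_2+j)\cdots\Gamma(\nu_r+j))$ and its first column the $y$-dependence through $\phi_{0,s}(\cdot,y)$ (with the obvious simplifications when $r=m$ or $s=1$). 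Since $\phi_{0,s}(i,y)$ depends on $i$ only through an analytic function of $q_i$, the ratio of this bordered–Vandermonde determinant to $\triangle(q)$ is of Lagrange-interpolation type; writing it as a contour integral in an auxiliary variable $v$ over a small loop $C$ around $\{-q_1,\dots,-q_n\}$ produces the factor $\prod_\ell(v+q_\ell)^{-1}$ and the Cauchy denominator $(u-v)^{-1}$, while the $e^{-q_i}$ previously extracted from the rows becomes the $e^{v}$ in (\ref{knmsourse}). Inserting the Mellin--Barnes representation of $G^{s-1,0}_{0,s-1}$ into $\phi_{0,s}$ one recognises $\Phi_s(v;y)$ together with $\vartheta(v;\cdot)$, and inserting the loop representation of ${}_0F_1$ iterated $r$ times into the $x$-dependent factor one recognises $\Psi_r(u;x)$ and the remaining integral $\int_0^\infty u^{\nu_1}e^{-u}(\cdots)\,du$ and the factor $\prod_\ell(u+q_\ell)$. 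This is precisely the computation of Forrester and Liu \cite{ForresterLiu} for the stationary case $r=s$; the only genuinely new bookkeeping here is the presence of the extra transition parameters $\nu_2,\dots,\nu_s$ (and $\nu_{r+1},\dots,\nu_m$), which is exactly where the factors $\Gamma(\nu_2+z)\cdots\Gamma(\nu_s+z)$ in $\Phi_s$ and the Gamma-factors in $\phi_{r,m+1}$ originate.

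The main obstacle, and the bulk of the work, is this last passage: verifying that the bordered determinant over $\triangle(q)$ collapses exactly to the advertised $\prod_\ell\frac{u+q_\ell}{v+q_\ell}$ structure with Cauchy kernel $\frac{1}{u-v}$, and choosing the two contours — the half-line $(0,\infty)$ for $u$ and a loop $C$ for $v$ enclosing $\{-q_\ell\}$ but not $u$ — so that every interchange of summation and integration and every residue evaluation is legitimate and so that the defining series of the ${}_0F_1$'s converge in the relevant regions. Once these analytic points are settled and the Gamma-function bookkeeping through the specialisation is carried out, the remaining algebra is routine.
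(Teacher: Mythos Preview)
Your overall strategy --- specialise Theorem~\ref{TheoremPolynomialInitialConditions} to the polynomial ensemble of Proposition~\ref{PropositionAPlusG} and then massage the bilinear form $\sum_{i,j}\phi_{r,m+1}(x,i)(A^{-1})_{i,j}\phi_{0,s}(j,y)$ --- is exactly what the paper does, and your identification $\phi_{0,s}(j,y)=\Phi_s(-q_j;y)$ via the Mellin--Barnes representation is correct. But the paper does \emph{not} go through a bordered determinant. Instead it recognises that your monic polynomials are Laguerre polynomials, $P_j(q)=(j-1)!\,L_{j-1}^{\nu_1}(-q)$, and computes $\varphi_{r,m+1}(x,j):=\sum_i\phi_{r,m+1}(x,i)(A^{-1})_{i,j}$ directly: first $A^{-1}$ is characterised by $\sum_i\prod_{l=0}^m\Gamma(\nu_l+i)L_{i-1}^{\nu_1}(u)(A^{-1})_{i,j}=e^{-q_j}\prod_{l\neq j}\frac{-u-q_l}{q_j-q_l}$, and then one uses the Laguerre orthogonality relation $\phi_{r,m+1}(x,i)=\prod_{l=0}^m\Gamma(\nu_l+i)\int_0^\infty u^{\nu_1}e^{-u}\Psi_r(u;x)L_{i-1}^{\nu_1}(u)\,du$ to obtain $\varphi_{r,m+1}(x,j)=\int_0^\infty u^{\nu_1}e^{-u}\Psi_r(u;x)\,e^{-q_j}\prod_{l\neq j}\frac{-u-q_l}{q_j-q_l}\,du$. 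Only after this is the sum over $j$ converted to the $v$-contour integral by residues.

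This is where your outline has a genuine gap. You write that the Cauchy denominator $(u-v)^{-1}$ is produced by the $v$-contour integral coming from Lagrange interpolation, and that $\Psi_r(u;x)$ and the half-line integral in $u$ then appear by ``inserting the loop representation of ${}_0F_1$ iterated $r$ times into the $x$-dependent factor''. But at the stage where you introduce $v$, the variable $u$ does not yet exist, so $(u-v)^{-1}$ cannot arise there; and the $x$-dependent entries of your first row are bare monomials $x^{j-1}$ divided by Gamma factors --- there is no ${}_0F_1$ in them to insert a loop representation into. What actually produces the $u$-integral is the Laguerre step above: the generating function $e^{t}\,{}_0F_1(\nu_1+1;-ut)=\sum_{k\ge0}L_k^{\nu_1}(u)\,t^k/(\nu_1+1)_k$, combined with $r$ Hankel loops, shows that $\Psi_r(u;x)$ has Laguerre coefficients $x^{k}/\prod_{l=0}^r\Gamma(\nu_l+1+k)$, and the orthogonality of the $L_k^{\nu_1}$ against the weight $u^{\nu_1}e^{-u}$ then rewrites the first-row entries as $u$-integrals of $\Psi_r$ against Laguerre polynomials. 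Only once this is done does your bordered determinant acquire a row of the form $(L_{j-1}^{\nu_1}(u))_j$, after which the Vandermonde/Lagrange mechanism and the residue computation go through. Your sketch is recoverable, but this Laguerre-orthogonality step is the crux and is missing from it.
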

Theorem \ref{TheoremSourseExact} is a time-dependent extension of Forrester and Liu \cite[Prop 1.1]{ForresterLiu}.
Indeed, once $r=s$ Theorem \ref{TheoremSourseExact} gives a correlation kernel for the determinantal process
formed by squared singular values of the random matrix product $G_r\ldots G_2(G+Q)$.
We note that once Theorem \ref{TheoremPolynomialInitialConditions} is established,
the result of  Theorem \ref{TheoremSourseExact}  can be derived using the same arguments
as in Forrester and Liu \cite{ForresterLiu}.

Theorem \ref{TheoremSourseExact} is a starting point for an asymptotic analysis. Because of similarity
of formulae in Theorem \ref{TheoremSourseExact} and in Forrester and Liu \cite[Prop 1.1]{ForresterLiu}
such an asymptotic analysis is a repetition of arguments from Ref. \cite{ForresterLiu}.
In particular, time-dependent  extensions of the critical kernel, and of the deformed critical kernel
of    Forrester and Liu \cite[Thms 1.2, 1.3]{ForresterLiu} are given by Proposition \ref{PropositionScalingLimitSource} below.
\begin{prop}\label{PropositionScalingLimitSource}(A. Subcritical regime) Set $q_1=\ldots=q_n=bn$, and assume that $0<b<1$. For $x$, $y$ in a compact subset of $(0,\infty)$,
and for fixed parameters $\nu_1$, $\ldots$, $\nu_m$
\begin{equation}
\begin{split}
&\underset{n\rightarrow\infty}{\lim}\frac{1}{(l-b)n}K_{n,m}\left(r,\frac{x}{(l-b)n};s,\frac{y}{(l-b)n}\right)=-\frac{1}{x}G^{s-r,0}_{0,s-r}\left(\begin{array}{ccc}
                - \\
               \nu_{r+1},\ldots,\nu_s
             \end{array}\biggl|\frac{y}{x}\right)\mathbf{1}_{s>r}\\
             &+\frac{1}{(2\pi i)^2}\int\limits_{-\frac{1}{2}-i\infty}^{-\frac{1}{2}+i\infty}
du\oint\limits_{\Sigma}dt\frac{\prod_{j=0}^s\Gamma(u+\nu_j+1)}{\prod_{j=0}^r\Gamma(t+\nu_j+1)}
\frac{\sin\pi u}{\sin\pi t}
\frac{x^ty^{-u-1}}{t-u}\\
&=\frac{1}{x}G^{s-r,0}_{0,s-r}\left(\begin{array}{ccc}
                - \\
               \nu_{r+1},\ldots,\nu_s
             \end{array}\biggl|\frac{y}{x}\right)\mathbf{1}_{s>r}\\
&+\int\limits_0^1G^{1,0}_{0,r+2}\left(\begin{array}{cccc}
                - \\
              -\nu_0, -\nu_{1}, \ldots,\nu_r
             \end{array}\biggl|ux\right)G^{s+1,0}_{0,s+2}\left(\begin{array}{cccc}
                - \\
              \nu_0, \nu_{1}, \ldots,\nu_s
             \end{array}\biggl|uy\right)du,
\end{split}
\nonumber
\end{equation}
where $r,s\in\left\{1,\ldots,m\right\}$, and $\Sigma$ is chosen as in  Proposition \ref{PropositionKexactInitialGinibre}.\\
(B. Critical regime) Set
$$
q_1=\ldots=q_n=n(1-\tau/\sqrt{n})^{-1}.
$$
Then for $x$, $y$ in a compact subset of $(0,\infty)$, and for $\tau$ in a compact subset of $\R$
we have
\begin{equation}
\begin{split}
&\underset{n\rightarrow\infty}{\lim}\frac{1}{\sqrt{n}}K_{n,m}\left(r,\frac{x}{\sqrt{n}};s,\frac{y}{\sqrt{n}}\right)=\frac{1}{x}G^{s-r,0}_{0,s-r}\left(\begin{array}{ccc}
                - \\
               \nu_{r+1},\ldots,\nu_s
             \end{array}\biggl|\frac{y}{x}\right)\mathbf{1}_{s>r}\\
&+\frac{1}{2\pi i}\int\limits_0^{\infty}du\int_{i\R}dv\left(\frac{u}{v}\right)^{\nu_1}\frac{e^{-\tau u-\frac{1}{2}u^2+\tau v+\frac{1}{2}v^2}}{u-v}\\
&\times G^{1,0}_{0,r+2}\left(\begin{array}{cccc}
                - \\
              -\nu_0, -\nu_{1}, \ldots,\nu_r
             \end{array}\biggl|ux\right)G^{s+1,0}_{0,s+2}\left(\begin{array}{cccc}
                - \\
              \nu_0, \nu_{1}, \ldots,\nu_s
             \end{array}\biggl|uy\right).
\end{split}
\nonumber
\end{equation}
(C. Deformed critical regime) Set
$$
q_j=\sqrt{n}\sigma_j,\;\;j=1,\ldots,l,\;\; \mbox{and}\;\; q_k=n(1-\tau/\sqrt{n})^{-1},\;\; k=l+1,\ldots,n,
$$
and assume that $\sigma_1$, $\ldots$, $\sigma_l$ are strictly positive.
Then for $x$, $y$ in a compact subset of $(0,\infty)$, and for $\tau$ in a compact subset of $\R$
we have
\begin{equation}
\begin{split}
&\underset{n\rightarrow\infty}{\lim}\frac{1}{\sqrt{n}}K_{n,m}\left(r,\frac{x}{\sqrt{n}};s,\frac{y}{\sqrt{n}}\right)=\frac{1}{x}G^{s-r,0}_{0,s-r}\left(\begin{array}{ccc}
                - \\
               \nu_{r+1},\ldots,\nu_s
             \end{array}\biggl|\frac{y}{x}\right)\mathbf{1}_{s>r}\\
&+\frac{1}{2\pi i}\int\limits_0^{\infty}du\int_{i\R}dv\left(\frac{u}{v}\right)^{\nu_1}\frac{e^{-\tau u-\frac{1}{2}u^2+\tau v+\frac{1}{2}v^2}}{u-v}\\
&\times\prod\limits_{j=1}^l\frac{u+\sigma_j}{v+\sigma_j} G^{1,0}_{0,r+2}\left(\begin{array}{cccc}
                - \\
              -\nu_0, -\nu_{1}, \ldots,\nu_r
             \end{array}\biggl|ux\right)G^{s+1,0}_{0,s+2}\left(\begin{array}{cccc}
                - \\
              \nu_0, \nu_{1}, \ldots,\nu_s
             \end{array}\biggl|uy\right).
\end{split}
\nonumber
\end{equation}
\end{prop}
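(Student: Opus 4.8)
My plan is to run a steepest-descent analysis of the double integral in formula $(\ref{knmsourse})$ of Theorem \ref{TheoremSourseExact}, closely following Forrester and Liu \cite{ForresterLiu}; the three regimes (A), (B), (C) will differ only through the behaviour of one scalar phase function. First I would dispose of the Meijer $G$-term: $-\frac{1}{x}G^{s-r,0}_{0,s-r}(\cdots\mid y/x)\mathbf{1}_{s>r}$ is homogeneous of degree $-1$ in $(x,y)$, so under any common rescaling $x\mapsto x/c_n$, $y\mapsto y/c_n$ it is reproduced up to the factor $c_n$ that gets divided out in the statement, and it therefore passes to the limit unchanged; in the second displayed equality of part (A) this term together with the accompanying double integral is merely rewritten through a Meijer $G$ identity, explained below. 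Everything thus reduces to the asymptotics of
$$
\frac{1}{2\pi i}\int_0^{\infty}du\int_C dv\, u^{\nu_1}e^{-u+v}\Psi_r(u;x)\Phi_s(v;y)\frac{1}{u-v}\prod_{l=1}^n\frac{u+q_l}{v+q_l}
$$
once the prescribed $q_1,\ldots,q_n$ are substituted and $x$, $y$ and the integration variables are rescaled.

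For part (A) I would set $q_l\equiv bn$, rescale $x$, $y$ by the hard-edge scale displayed in the statement, and rescale $u$, $v$ so that the contour $C$ (which surrounds $-q_1,\dots,-q_n$) and the ray $(0,\infty)$ are brought to order-one positions; then $\prod_{l=1}^n\frac{u+q_l}{v+q_l}$ contributes a factor $\exp\!\big(n\log\frac{u+bn}{v+bn}+O(1)\big)$ that stays bounded. After this change of variables, Mellin--Barnes manipulations that recombine the $w_l$- and $z$-integrals defining $\Psi_r$, $\Phi_s$ with the ${}_0F_1$-factors identify the limiting building blocks as the Meijer $G$-functions $G^{1,0}_{0,r+2}$ and $G^{s+1,0}_{0,s+2}$ of the statement, and an elementary integral representation of $\frac{1}{u-v}$ turns the residual scalar integral into $\int_0^1 du$. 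Conceptually, $b<1$ makes the source subcritical, so it washes out and the limit must coincide with the source-free limit of Proposition \ref{PropositionScalingLimit1}; the second equality in part (A) is then the Meijer $G$ identity obtained by writing $\frac{\sin\pi u}{\sin\pi t}=\frac{\Gamma(t)\Gamma(1-t)}{\Gamma(u)\Gamma(1-u)}$, using $\frac{1}{t-u}=\int_0^1\xi^{t-u-1}\,d\xi$ to split the $t$- and $u$-Mellin--Barnes integrals, and folding each back into a Meijer $G$, the reflection-formula Gammas supplying the parameters $\pm\nu_0$.

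Parts (B) and (C) use the same template at the critical scale. With $q_l=n(1-\tau/\sqrt n)^{-1}$ and $x$, $y$ rescaled by $n^{-1/2}$, the phase driving the $(u,v)$-integral --- essentially $n\log\frac{u+q_l}{v+q_l}-u+v+\nu_1\log(u/v)$ --- has two saddle points that coalesce; expanding it to second order produces the Gaussian weight $e^{-\tau u-\frac12 u^2+\tau v+\frac12 v^2}(u/v)^{\nu_1}$, opens the $v$-contour out to $i\R$ while keeping the $u$-integral on $(0,\infty)$, and the remaining ${}_0F_1$ and $e^{x/(w_1\cdots w_r)}$ factors again assemble into $G^{1,0}_{0,r+2}(ux)$ and $G^{s+1,0}_{0,s+2}(uy)$ after the change of variables in the $w_l$. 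Part (C) is the finite-rank modification of this: peeling off the $l$ values $q_j=\sqrt n\,\sigma_j$ from the bulk introduces the extra factor $\prod_{j=1}^l\frac{u+\sigma_j}{v+\sigma_j}$, which is continuous in $(u,v)$ near the saddle and therefore just multiplies the critical integrand.

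In all three cases the main obstacle is the uniform steepest-descent estimate: one must deform $C$ and the ray $(0,\infty)$ to pass through the relevant saddle(s), verify that the real part of the phase has the right sign along the deformed contours, and bound the tails --- including the $\Gamma$- and ${}_0F_1$-factors inside $\Psi_r$, $\Phi_s$ and the ratio $\prod_l(u+q_l)/(v+q_l)$ --- uniformly in $n$ and in $(x,y)$ from a compact subset of $(0,\infty)^2$. These are precisely the estimates performed in \cite{ForresterLiu} for the case $r=s$, and neither the extra deterministic factors $\Gamma(\nu_{r+1}+i)\cdots\Gamma(\nu_m+i)$ appearing in the general kernel nor the Meijer $G^{s-r,0}_{0,s-r}$ prefactor interferes with them; so I expect the proof to be essentially a line-by-line repetition of the argument of Forrester and Liu, as the text already remarks.
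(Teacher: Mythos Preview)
Your proposal is correct and follows the same approach as the paper: start from the exact double-integral representation of Theorem \ref{TheoremSourseExact}, observe that the Meijer $G$ prefactor is scale-invariant, and reduce the rest to the steepest-descent analysis of Forrester and Liu \cite[Sec.~3.1]{ForresterLiu}. The paper's own proof is in fact just the one-line statement that the scaling limits ``are derived by a repetition of arguments from Forrester and Liu'', so your sketch actually supplies more detail (the homogeneity argument for the $G$-term, the mechanism for the second equality in (A), and the saddle-point picture in (B)--(C)) than the paper does, while remaining aligned with it.
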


Finally, let us mention  few directions  for a further research. Theorem 2.1 in a very recent paper by Clayes, Kuijlaars and Wang \cite{Claeys}
suggests that similar methods (as in this paper) can be applied to  processes formed by
eigenvalues of certain sums of random matrices. Also, paper by Clayes, Kuijlaars and Wang \cite{Claeys}
gives a general formula for the correlation kernel of the squared singular values of the product
$G_r\ldots G_2G_1$, where $G_1$ can be any random matrices whose squared singular values form a polynomial
ensemble, while $G_2$, $\ldots$, $G_r$ are independent complex Ginibre matrices, see Corollary 2.16 in Clayes, Kuijlaars and Wang \cite{Claeys}. A question arises
whether there is a time-dependent extension of this formula.
In addition, the methods of the present paper can be extended
to matrix products formed by truncated unitary matrices, and time-dependent extensions of results in Kieburg, Kuijlaars, and Stivigny \cite{KieburgKuijlaarsStivigny},
Forrester and Liu \cite{ForresterLiu}  can be obtained.

The idea to investigate the  dynamical correlation functions comes naturally from considering the signal model of the MIMO
system described in  Wei, Zheng,  Tirkkonen, and Hamalainen \cite{Wei}. In this model the products of independent complex Ginibre matrices arise, and the results
of Section \ref{SectionInitialGinibre} seem to be relevant in the MIMO communication setting. The author is not aware
whether other models considered in this paper (i.e. the products of truncated random matrices, and the products of random matrices with a source)
are useful for the theory of wireless communication.

\section{Measures given by products of determinants and the Eynard-Mehta theorem}
The aim of this Section is to recall one of mechanisms which gives rise to determinantal
point processes, namely the case when the measure defining a point process is given by
products of determinants. We follow Johansson \cite{Johansson}, Section 2.3, and use the same notation and terminology as in Johansson's paper (see also Eynard and Mehta \cite{EynardMehta},  Borodin \cite{BorodinDeterminantalProcesses}, Tracy and Widom \cite{TracyWidomDysonProcesses}).

Let $n, m\geq 1$ be two fixed natural numbers, and let $X_0$, $X_{m+1}$ be two given sets. Let $X$ be a complete separable
metric space, and consider a probability measure on $(X^n)^m$ given by
\begin{equation}\label{ProductDeterminantsMeasure}
\begin{split}
p_{n,m}(\underline{x})d\mu(\underline{x})&=\frac{1}{(n!)^mZ_{n,m}}\det\left(\phi_{0,1}(x_i^0,x_j^1)\right)_{i,j=1}^n\det\left(\phi_{m,m+1}(x_i^m,x_j^{m+1})\right)_{i,j=1}^n\\
&
\times\prod\limits_{r=1}^{m-1}\det\left(\phi_{r,r+1}(x_i^r,x_j^{r+1})\right)_{i,j=1}^nd\mu(\underline{x}).
\end{split}
\end{equation}
In the formula just written above $Z_{n,m}$ is the normalization constant, the functions $\phi_{r,r+1}: X\times X\rightarrow \C$, $r=1,\ldots,m-1$
are given transition functions,  $\phi_{0,1}: X_0\times X\rightarrow \C$ is a given initial transition function,
and $\phi_{m,m+1}: X\times X_{m+1}\rightarrow \C$ is a given final transition function. Also,
$$
\underline{x}=\left(x^1,\ldots,x^m\right)\in\left(X^n\right)^m; \;\; x^r=\left(x^r_1,\ldots,x^r_n\right), r=1,\ldots, m,
$$
the vectors
$$
x^0=(x^0_1,\ldots,x^0_n)\in X_0^n,\;\; x^{m+1}=(x^{m+1}_1,\ldots,x^{m+1}_n)\in X_{m+1}^n,
$$
are fixed initial and final vectors,
and
$$
d\mu(\underline{x})=\prod\limits_{r=1}^m\prod\limits_{j=1}^nd\mu(x_j^r).
$$
Here $\mu$ is a given Borel measure on $X$.
Given two transition functions $\phi$ and $\psi$ set
$$
\phi\ast\psi(x,y)=\int_X\phi(x,t)\psi(t,y)d\mu(t).
$$
The following statement is often referred as the Eynard-Mehta theorem.
\begin{prop}\label{TheoremEynardMehta} The probability measure $p_{n,m}(\underline{x})d\mu(\underline{x})$ defines a determinantal point process on
$\left\{1,\ldots,m\right\}\times X$. The correlation kernel of this determinantal point process, $K_{n,m}(r,x;s,y)$
(where $r,s\in\left\{1,\ldots,m\right\}$, and $x, y\in X)$, is given by the formula
\begin{equation}\label{CorrelationKernelGeneralFormula}
K_{n,m}(r,x;s,y)=-\phi_{r,s}(x,y)+\sum\limits_{i,j=1}^n\phi_{r,m+1}(x,x_i^{m+1})\left(A^{-1}\right)_{i,j}\phi_{0,s}(x_j^0,y).
\end{equation}
The functions $\phi_{r,s}$, and the matrix $A=(a_{i,j})$ (where $i,j=1,\ldots,n$) are defined in terms of transition functions as follows \begin{equation}
\phi_{r,s}(x,y)=\left\{
                  \begin{array}{ll}
                    \left(\phi_{r,r+1}\ast\ldots\ast\phi_{s-1,s}\right)(x,y), & 0\leq r<s\leq m+1, \\
                    0, & r\geq s,
                  \end{array}
                \right.
\end{equation}
and
\begin{equation}\label{aij}
a_{i,j}=\phi_{0,m+1}(x_i^0,x_j^{m+1}).
\end{equation}
\end{prop}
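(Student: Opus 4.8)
The final statement is the Eynard-Mehta theorem, so the plan is to follow the classical route of Johansson \cite{Johansson}, \S2.3 (see also Eynard and Mehta \cite{EynardMehta} and Borodin \cite{BorodinDeterminantalProcesses}): evaluate the normalization by the Andr\'eief identity, and then extract the correlation kernel from a Schur-complement computation, most conveniently packaged in a generating functional.

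First I would compute $Z_{n,m}$. Applying the Andr\'eief (generalized Cauchy-Binet) identity
\[
\int_{X^n}\det\left(f_i(t_j)\right)_{i,j=1}^n\det\left(g_i(t_j)\right)_{i,j=1}^n\prod_{j=1}^n d\mu(t_j)=n!\,\det\left(\int_X f_i(t)g_j(t)\,d\mu(t)\right)_{i,j=1}^n
\]
successively to the variables $x^1,\dots,x^m$ in \eqref{ProductDeterminantsMeasure}, each integration replaces a consecutive pair of transition functions by their convolution and produces one factor $n!$. Combined with the telescoping $\phi_{0,1}\ast\phi_{1,2}\ast\cdots\ast\phi_{m,m+1}=\phi_{0,m+1}$ and the $(n!)^m$ already present in $p_{n,m}$, this gives $Z_{n,m}=\det\left(\phi_{0,m+1}(x_i^0,x_j^{m+1})\right)_{i,j=1}^n=\det A$ (the hypothesis implicitly requires $A$ to be invertible).

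For the correlation functions I would work with the generating functional $\E\left[\prod_{r=1}^m\prod_{j=1}^n\left(1+g_r(x_j^r)\right)\right]$ for arbitrary bounded test functions $g_1,\dots,g_m$ on $X$, since its expansion collects all the $\varrho_\ell$ at once; it then suffices to identify it with a Fredholm-type determinant $\det\left(1+g^{1/2}Kg^{1/2}\right)$ for $K=K_{n,m}$ as in \eqref{CorrelationKernelGeneralFormula}. Pull the factor $\prod_j(1+g_r(x_j^r))$ into the $r$-th determinant by multilinearity, and then integrate out $x^1,\dots,x^m$ exactly as in the computation of $Z_{n,m}$, now with the perturbed transition functions $\phi_{r,r+1}(x,y)(1+g_{r+1}(y))$. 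The outcome is $\det(A+R)/\det A$, where $R$ is a finite-rank perturbation whose entries are built from the compositions $\phi_{r,m+1}(\cdot,x_i^{m+1})$, $\phi_{0,s}(x_j^0,\cdot)$ and $\phi_{r,s}(\cdot,\cdot)$ (the last type appearing only for $r<s$, since the chain never composes transition functions backwards, which is exactly the one-sided prescription $\phi_{r,s}=0$ for $r\ge s$). Applying the Schur-complement identity $\det\left(\begin{smallmatrix}A & B\\ C & D\end{smallmatrix}\right)=\det A\cdot\det\left(D-CA^{-1}B\right)$ to the bordered matrix encoding $A+R$, together with the determinant identity $\det(1+PQ)=\det(1+QP)$, converts $\det(A+R)/\det A$ into $\det\left(1+g^{1/2}Kg^{1/2}\right)$ with $K$ precisely \eqref{CorrelationKernelGeneralFormula}: the term $-\phi_{r,s}(x,y)$ is the contribution of the $D$-block and $\sum_{i,j}\phi_{r,m+1}(x,x_i^{m+1})(A^{-1})_{i,j}\phi_{0,s}(x_j^0,y)$ comes from $-CA^{-1}B$.

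The main obstacle is this last step: keeping a clean account of the combinatorics — the factorials, the multilinear expansions, and in particular the sign and one-sidedness of the $-\phi_{r,s}$ term — so that the bordered-matrix computation assembles into a single determinant with exactly the asserted kernel, and so that the argument treats uniformly the intra-layer case (two marked points in the same layer, where $\phi_{r,r}=0$ and the antisymmetry inside that layer's determinant must be matched against $\det(K)$) and the inter-layer case $r<s$. Apart from this bookkeeping, everything reduces to mechanical applications of the Andr\'eief, Schur-complement, and $\det(1+PQ)=\det(1+QP)$ identities.
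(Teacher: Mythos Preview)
Your sketch is correct and follows exactly the route of Johansson \cite{Johansson}, \S2.3, which is all the paper does here: its ``proof'' consists solely of the citation ``See Johansson \cite{Johansson}, Section 2.3.'' So you have supplied more detail than the paper itself, and the content of your outline (Andr\'eief for the partition function, generating functional perturbed by $1+g_r$, Schur complement plus $\det(1+PQ)=\det(1+QP)$ to extract the kernel) is precisely what that reference contains.
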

\begin{proof}
See Johansson \cite{Johansson}, Section 2.3.
\end{proof}
%%%%%%%%%%%%%%%%%%%%%%%%%%%%%%%%%%%%%%%%%%%%%%%%%%%%%%%%%%%%%%%%%%%%%%%%%%%%%%%%%%%%%%%%%%%
%%%%%%%%%%%%%%%%%%%%%%%%%%%%%%%%%%%%%%%%%%%%%%%%%%%%%%%%%%%%%%%%%%%%%%%%%%%%%%%%%%%%%%%%%%%%%%%%
%%%%%%%%%%%%%%%%%%%%%%%%%%%%%%%%%%%%%%%%%%%%%%%%%%%%%%%%%%%%%%%%%%%%%%%%%%%%%%%%%%%%%%%%%%%%%
\section{Proof of Theorem \ref{TheoremPolynomialInitialConditions}}
Our starting point is Proposition \ref{PropositionKuijlaarsStivigny}, which gives the distribution of squared
singular values of a complex Ginibre matrix multiplied by a constant matrix from the right.
\begin{prop}\label{PropositionKuijlaarsStivigny}
Let $G$ be a complex Ginibre random matrix of size $(n+\nu)\times l$, and let $Q$ be a non-random matrix of size $l\times n$. Let $q_1,\ldots,q_n$
be the squared singular values of $Q$, and assume that $q_j$ is not equal to zero for each $j$, $j=1,\ldots, n$. If $Y=GQ$, then the squared
singular values $y_1$, $\ldots$, $y_n$ of $Y$ have a joint probability density function on $[0,\infty)^n$ proportional to
$$
\frac{\triangle(y_1,\ldots,y_n)}{\triangle(q_1,\ldots,q_n)}\det\left(\frac{y_i^{\nu}}{q_j^{\nu+1}}e^{-\frac{y_i}{q_j}}\right)_{i,j=1}^n.
$$
\end{prop}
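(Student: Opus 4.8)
The plan is to reduce the problem to a matrix integral over the unitary group and evaluate the angular part by the Harish-Chandra--Itzykson--Zuber (HCIZ) formula. First I would exploit the invariance of the complex Ginibre ensemble under multiplication by unitaries to simplify $Q$. Taking a singular value decomposition $Q=U\Sigma V^{*}$ with $U\in U(l)$, $V\in U(n)$ and $\Sigma$ the $l\times n$ matrix whose upper $n\times n$ block equals $\diag(\sqrt{q_1},\dots,\sqrt{q_n})$ and whose remaining rows vanish (the hypothesis $q_j\neq 0$ for all $j$ forces $l\geq n$, so this is legitimate), we have $Y=GQ=GU\Sigma V^{*}$; since $GU$ has the same law as $G$ and right multiplication by $V^{*}$ does not change singular values, the squared singular values of $Y$ have the same law as those of $G\Sigma$. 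But $G\Sigma=G'D$, where $G'$ is the $(n+\nu)\times n$ matrix formed by the first $n$ columns of $G$ (again a complex Ginibre matrix) and $D=\diag(\sqrt{q_1},\dots,\sqrt{q_n})$. Hence $Y^{*}Y$ has the same law as $M:=D\,W\,D$, where $W:=(G')^{*}G'$.

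Next I would compute the density of $M$. Since $\nu$ is a nonnegative integer, $W$ is almost surely positive definite and carries the complex Wishart density, proportional to $(\det W)^{\nu}e^{-\Tr W}$ on the cone of positive definite Hermitian $n\times n$ matrices. Substituting $W=D^{-1}MD^{-1}$ --- a linear automorphism of that cone with constant Jacobian $(\prod_j q_j)^{-n}$ --- and using $\det W=(\prod_j q_j)^{-1}\det M$ together with $\Tr W=\Tr(D^{-2}M)=\sum_i M_{ii}/q_i$, one finds that $M=Y^{*}Y$ has density proportional to $(\prod_j q_j)^{-\nu-n}(\det M)^{\nu}\exp\!\big(-\sum_i M_{ii}/q_i\big)$ on the same cone.

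Then I would diagonalise $M=V\Lambda V^{*}$ with $\Lambda=\diag(y_1,\dots,y_n)$ and pass to eigenvalue coordinates, where the Weyl integration formula contributes a factor $\triangle(y_1,\dots,y_n)^2$ and leaves an integral of $\exp(-\Tr(D^{-2}V\Lambda V^{*}))$ over the unitary group. This is an HCIZ integral for the pair of Hermitian matrices with eigenvalues $\{-1/q_i\}$ and $\{y_j\}$:
\[
\int_{U(n)}\exp\!\big(-\Tr(D^{-2}V\Lambda V^{*})\big)\,dV
=\const\cdot\frac{\det\!\big(e^{-y_j/q_i}\big)_{i,j=1}^{n}}{\triangle(-1/q_1,\dots,-1/q_n)\,\triangle(y_1,\dots,y_n)}.
\]
Using $\triangle(-1/q_1,\dots,-1/q_n)=\triangle(q_1,\dots,q_n)/(\prod_j q_j)^{n-1}$, all the surplus powers of the $q_j$ combine into $(\prod_j q_j)^{-\nu-1}$, which together with $\prod_i y_i^{\nu}$ I would absorb into the rows and columns of the determinant; one power of $\triangle(y)$ cancels, and after transposing the determinant this gives that the joint density of $(y_1,\dots,y_n)$ is proportional to
\[
\frac{\triangle(y_1,\dots,y_n)}{\triangle(q_1,\dots,q_n)}\,\det\!\left(\frac{y_i^{\nu}}{q_j^{\nu+1}}\,e^{-y_i/q_j}\right)_{i,j=1}^{n},
\]
which is the assertion.

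The main obstacle is the HCIZ evaluation in the third step; everything else is unitary invariance of the Ginibre ensemble, the standard form of the complex Wishart density, the Weyl integration formula, and Vandermonde bookkeeping. It is also worth noting at the outset that the two standing hypotheses --- $\nu\in\Zp$ and $q_j\neq 0$ for all $j$ --- are precisely what make the reduction work (they give $l\geq n$ and ensure that $W$ is non-degenerate).
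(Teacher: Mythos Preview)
Your argument is correct. The reduction by unitary invariance to $G'D$ with $G'$ an $(n+\nu)\times n$ Ginibre matrix, the change of variables from the complex Wishart density on $W=(G')^{*}G'$ to $M=DWD$, and the evaluation of the angular integral via HCIZ all go through as you describe; the Vandermonde bookkeeping at the end is right, and the prefactors combine into $(\prod_j q_j)^{-\nu-1}\prod_i y_i^{\nu}$ exactly as needed. One small point you leave implicit is that the HCIZ formula in the form you quote requires the $q_j$ (and the $y_j$) to be pairwise distinct; the $y_j$ are distinct almost surely, and for coinciding $q_j$ the stated density is to be read as a limit --- but this is standard and does not affect the argument.

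As for comparison with the paper: the paper does not give a proof at all but simply refers the reader to Kuijlaars--Stivigny \cite{KuijlaarsStivigny}, Section~2, where the result is established. Your write-up is essentially the derivation one finds there (unitary reduction to a diagonal $Q$, Wishart density, Weyl integration, HCIZ), so you have supplied the argument the paper elects to outsource. There is nothing to correct.
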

\begin{proof}
See Kuijlaars and Stivigny \cite{KuijlaarsStivigny}, Section 2.
\end{proof}
Recall  that the Ginibre product process with polynomial ensemble initial conditions is formed by singular values of the products
$G_k\ldots G_2G_1$, $k\in\left\{1,\ldots,m\right\}$. It is assumed that the density of the squared singular values of $G_1$
is given by equation (\ref{PolynomialEnsemble}), and that $G_k$, $\ldots$, $G_2$ are independent complex Ginibre matrices.
Proposition \ref{PropositionDensity2} shows that the density of this point process can be written as a product of determinants
as in equation (\ref{ProductDeterminantsMeasure}).
\begin{prop}\label{PropositionDensity2} Set $X_0=\left\{1,\ldots,n\right\}$, $X_{m+1}=\left\{1,\ldots,n\right\}$, and $X=\R_{>0}$.
Let $\mu$ be the Lebesgue measure on $\R_{>0}$. Assume that the initial transition
function,
$$
\phi_{0,1}: \left\{1,\ldots,n\right\}\times\R_{>0}\rightarrow\C,
$$
is defined by
$$
\phi_{0,1}(i,x)=f_{i}(x),
$$
and that the final transition function,
$$
\phi_{m,m+1}: \R_{>0}\times\left\{1,\ldots,n\right\}\rightarrow\C,
$$
is defined by
$$
\phi_{m,m+1}(x,i)=x^{i-1}.
$$
In addition, assume that the transition functions $\phi_{r,r+1}: \R_{>0}\times \R_{>0}\rightarrow \C$, $r=1,\ldots,m-1$,
are defined by
\begin{equation}
\begin{split}
\phi_{r,r+1}(x,y)=\left\{
           \begin{array}{ll}
             x^{\nu_1-\nu_2-1}e^{-\frac{y}{x}-x}, & r=1, \\
             x^{\nu_r-\nu_{r+1}-1}e^{-\frac{y}{x}}, & r=2,\ldots,m-2, \\
             x^{\nu_{m-1}-\nu_{m}-1}y^{\nu_m}e^{-\frac{y}{x}}, & r=m-1.
           \end{array}
         \right.
\end{split}
\nonumber
\end{equation}
Then the density of the Ginibre product process with
polynomial ensemble initial conditions is given by formula (\ref{ProductDeterminantsMeasure}).
\end{prop}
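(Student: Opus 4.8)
The plan is to realize the law of $\underline x=(x^1,\ldots,x^m)$ as that of a Markov chain whose one-step transition densities are supplied by Proposition \ref{PropositionKuijlaarsStivigny}, and then to match, by extracting row and column factors from the determinants, the resulting product of densities with the right-hand side of (\ref{ProductDeterminantsMeasure}) for the stated transition functions and for the natural choice $x^0=x^{m+1}=(1,2,\ldots,n)$ of the fixed boundary configurations.

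First I would check that $(x^1,\ldots,x^m)$ is a Markov chain. Because $X(k)=G_kX(k-1)$ with $G_k$ a complex Ginibre matrix of size $N_k\times N_{k-1}=(n+\nu_k)\times N_{k-1}$ independent of $G_1,\ldots,G_{k-1}$, the conditional law of $X(k)$ given $X(1),\ldots,X(k-1)$ is that of $G_kA$ with $A=X(k-1)$ frozen; by induction (using $\nu_j\ge 0$ and the fact that (\ref{PolynomialEnsemble}) is supported on configurations of distinct positive reals) $A$ has full rank $n$ almost surely, so Proposition \ref{PropositionKuijlaarsStivigny} applies with $G=G_k$, $Q=A$ and shows that the conditional density of $x^k$ depends on $A$ only through its vector of squared singular values $x^{k-1}$:
\[
p(x^k\mid x^{k-1})=\frac{1}{c_{\nu_k}}\,\frac{\triangle(x^k)}{\triangle(x^{k-1})}\,\det\left(\frac{(x_i^k)^{\nu_k}}{(x_j^{k-1})^{\nu_k+1}}\,e^{-x_i^k/x_j^{k-1}}\right)_{i,j=1}^n .
\]
The normalization here is a genuine constant, $c_{\nu_k}=n!\prod_{i=1}^n\Gamma(i+\nu_k)$, independent of $x^{k-1}$ --- a one-line consequence of the Andr\'{e}ief identity, since $\int_{(0,\infty)^n}\triangle(x^k)\det(\cdots)\,dx^k=n!\,\triangle(x^{k-1})\prod_{i=1}^n\Gamma(i+\nu_k)$. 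Thus the joint density of $\underline x$ factors as $p(x^1)\prod_{k=2}^m p(x^k\mid x^{k-1})$ with $p(x^1)$ given by (\ref{PolynomialEnsemble}).

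Next I would telescope and rearrange. In $p(x^1)\prod_{k=2}^m p(x^k\mid x^{k-1})$ all intermediate Vandermonde factors $\triangle(x^r)$ cancel, leaving a single $\triangle(x^m)=\det\left((x_i^m)^{j-1}\right)_{i,j=1}^n=\det\left(\phi_{m,m+1}(x_i^m,x_j^{m+1})\right)$, while $\det\left(f_i(x_j^1)\right)_{i,j=1}^n=\det\left(\phi_{0,1}(x_i^0,x_j^1)\right)$. It then remains to verify the pointwise identity
\[
\prod_{j=1}^n(x_j^1)^{\nu_1}e^{-x_j^1}\ \prod_{k=2}^m\det\left(\frac{(x_i^k)^{\nu_k}}{(x_j^{k-1})^{\nu_k+1}}\,e^{-x_i^k/x_j^{k-1}}\right)_{i,j=1}^n=\prod_{r=1}^{m-1}\det\left(\phi_{r,r+1}(x_i^r,x_j^{r+1})\right)_{i,j=1}^n .
\]
For this I would pull the common factor $(x_i^k)^{\nu_k}$ out of the $i$-th row and $(x_j^{k-1})^{-\nu_k-1}$ out of the $j$-th column of the $k$-th determinant on the left, transpose each of these determinants (which leaves them unchanged) so that, as on the right, the lower-index variable indexes rows, and regroup the scalar prefactors according to the value of $r$. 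The prefactor attached to the variable $x^r$ then carries the exponent $\nu_r-\nu_{r+1}-1$, matching the power in $\phi_{r,r+1}$ for $2\le r\le m-2$; at $r=1$ the weight $\prod_j(x_j^1)^{\nu_1}e^{-x_j^1}$ inherited from (\ref{PolynomialEnsemble}) supplies, in the absence of a ``$D_1$'' factor, exactly the extra $e^{-x}$ in $\phi_{1,2}$; at $r=m-1$ the row factor $\prod_i(x_i^m)^{\nu_m}$ of the $k=m$ determinant (the terminal variable being not a boundary variable) supplies the extra $y^{\nu_m}$ in $\phi_{m-1,m}$, and when $m=2$ both corrections fall on the single map $\phi_{1,2}$. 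The leftover determinantal kernels $\det\left(e^{-x_j^{r+1}/x_i^r}\right)$ agree on the two sides. Assembling the three pieces shows that $p_{n,m}(\underline x)$ is proportional to the product of determinants in (\ref{ProductDeterminantsMeasure}); since both are probability densities, the constant of proportionality is forced to be $1/((n!)^mZ_{n,m})$.

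The main obstacle is the bookkeeping of the previous paragraph: one must see that the clash between the exponent $+\nu_k$ of $x^k$ and the exponent $-\nu_k-1$ of $x^{k-1}$ in neighbouring factors of the chained Kuijlaars--Stivigny densities, together with the weight $(x_j^1)^{\nu_1}e^{-x_j^1}$ from (\ref{PolynomialEnsemble}) and the extra row factor $(x_i^m)^{\nu_m}$ carried by the terminal variable, is absorbed exactly by the exponent shifts $\nu_r-\nu_{r+1}-1$ in the $\phi_{r,r+1}$ and by the designed boundary factors $e^{-x}$ in $\phi_{1,2}$ and $y^{\nu_m}$ in $\phi_{m-1,m}$. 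The one conceptual point beyond this accounting --- needed so that the chained conditional densities combine into a single determinantal measure with one overall constant --- is the $x^{k-1}$-independence of the normalization in Proposition \ref{PropositionKuijlaarsStivigny}.
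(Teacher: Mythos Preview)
Your proposal is correct and follows essentially the same route as the paper: iterate Proposition~\ref{PropositionKuijlaarsStivigny} to write the joint density of $(x^1,\ldots,x^m)$ as a product of determinants, telescope the Vandermonde factors, and redistribute the scalar prefactors among the $\phi_{r,r+1}$. You are in fact more explicit than the paper on two points---the bookkeeping of the exponents $\nu_r-\nu_{r+1}-1$ and the Andr\'eief computation showing that the normalizing constant $c_{\nu_k}$ is independent of $x^{k-1}$---both of which the paper absorbs into the words ``proportional to'' and ``We compare this formula with equation~(\ref{ProductDeterminantsMeasure}).''
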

\begin{proof}
We use notation introduced in Section \ref{SectionProcessRandomInitialConditions}. To find the  density of the Ginibre product process with
polynomial ensemble initial conditions we need to find the density of the vector $\left(x^1,\ldots,x^m\right)$, where
$x^1=\left(x_1^1,\ldots,x_n^1\right)$ are the squared singular values of $G_1$, and
$x^r=\left(x_1^r,\ldots,x_n^r\right)$, $r\in\left\{2,\ldots,m\right\}$, are the squared singular values of $G_r\ldots G_1$.
Note  the density of $x^1=\left(x_1^1,\ldots,x_n^1\right)$ is given by equation (\ref{PolynomialEnsemble}).
To find the density of $(x^1,x^2)$ we apply Proposition \ref{PropositionKuijlaarsStivigny}.
The result is that this density is proportional to
$$
\frac{\triangle\left(x_1^2,\ldots,x_n^2\right)}{\triangle\left(x_1^1,\ldots,x_n^1\right)}
\det\left(\frac{\left(x_i^2\right)^{\nu_2}}{\left(x_j^1\right)^{\nu_2+1}}e^{-\frac{x_i^2}{x_j^1}}\right)_{i,j=1}^n
\triangle\left(x_1^1,\ldots,x_n^1\right)\det\left[f_i(x_j^1)\right]_{i,j=1}^n\prod\limits_{j=1}^n\left(x_j^1\right)^{\nu_1}e^{-x_j^1}.
$$
Continuing in this way we obtain that the density of  $\left(x^1,\ldots,x^m\right)$ is
proportional to
\begin{equation}
\begin{split}
&\triangle\left(x_1^m,\ldots,x_n^m\right)\det\left[f_i(x_j^1)\right]_{i,j=1}^n
\det\left(\left(x_i^m\right)^{\nu_m}\left(x_j^{m-1}\right)^{\nu_{m-1}-\nu_m-1}e^{-\frac{x_i^m}{x_j^{m-1}}}\right)_{i,j=1}^n\\
&\times\det\left(\left(x_j^{m-2}\right)^{\nu_{m-2}-\nu_{m-1}-1}e^{-\frac{x_i^{m-1}}{x_j^{m-2}}}\right)_{i,j=1}^n
\ldots \det\left(\left(x_j^{2}\right)^{\nu_{2}-\nu_{3}-1}e^{-\frac{x_i^{3}}{x_j^{2}}}\right)_{i,j=1}^n\\
&\times \det\left(\left(x_j^{1}\right)^{\nu_{1}-\nu_{2}-1}e^{-\frac{x_i^{2}}{x_j^{1}}}e^{-x_j^1}\right)_{i,j=1}^n.
\end{split}
\end{equation}
We compare this formula with equation (\ref{ProductDeterminantsMeasure}). The Proposition is proved.
\end{proof}
Proposition \ref{PropositionDensity2} and Proposition \ref{TheoremEynardMehta} say that the Ginibre product process with polynomial ensemble initial conditions
is in fact a determinantal point process on $\left\{1,\ldots,m\right\}\times\R_{>0}$. The correlation kernel of this process is  defined by
equation (\ref{CorrelationKernelGeneralFormula}).  Propositions \ref{PropositionMainTransitionFunction}-\ref{PropositionMatrixAInverse} give the involved transition functions, and the matrix $A$ explicitly.
\begin{prop}\label{PropositionMainTransitionFunction} Assume that $1\leq r<s\leq m$. Then
\begin{equation}
\phi_{r,s}(x,y)=\frac{g_r(x)}{g_s(y)}\frac{1}{x}G^{s-r,0}_{0,s-r}\left(\begin{array}{ccc}
                - \\
               \nu_{r+1},\ldots,\nu_s
             \end{array}\biggl|\frac{y}{x}\right),
\end{equation}
where the functions $g_r$, $1\leq r\leq m$, are defined by
\begin{equation}\label{gr}
g_r(x)=\left\{
         \begin{array}{ll}
           e^{-x}x^{\nu_1}, & r=1, \\
           x^{\nu_r}, & r=2,\ldots,m-1, \\
           1, & r=m.
         \end{array}
       \right.
\end{equation}
\end{prop}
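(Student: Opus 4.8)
The plan is to compute the iterated convolution $\phi_{r,s} = \phi_{r,r+1} \ast \phi_{r+1,r+2} \ast \cdots \ast \phi_{s-1,s}$ directly, using the explicit transition functions listed in Proposition \ref{PropositionDensity2}, and to recognize the answer as a Meijer $G$-function via its Mellin--Barnes integral (\ref{G-def}). The first step is to separate out the prefactors $x^{\nu_k - \nu_{k+1} - 1}$ and the exponential factors. Writing each $\phi_{k,k+1}(x,y) = x^{\nu_k - \nu_{k+1} - 1} e^{-y/x} \cdot h_k(x,y)$, where $h_k$ absorbs the extra $e^{-x}$ at $k=1$, the extra $y^{\nu_m}$ at $k=m-1$, and is $1$ otherwise, I would first check that the product of prefactors telescopes: tracking the powers of $x^r$ across the convolution chain gives a clean factor $g_r(x)/g_s(y)$ together with a single overall $1/x$ (and the leftover $e^{-x}$ from $k=1$ and $y^{\nu_m}$ from $k=m-1$ get folded into $g_r$, $g_s$ exactly as in (\ref{gr})). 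This bookkeeping is elementary but is the step where sign and index errors are most likely, so I would do it carefully, perhaps first for the two extreme cases $r=1$ and $s=m$ and then for generic $1 \le r < s \le m$.

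The second step is the core computation: after pulling out $g_r(x)/g_s(y)$ and $1/x$, the remaining kernel is the $(s-r)$-fold convolution (with respect to Lebesgue measure on $\R_{>0}$) of the elementary kernels $t \mapsto t^{-1} e^{-u/t}$ type building blocks — more precisely, the convolution
\[
\int_0^\infty \cdots \int_0^\infty \prod_{k=r+1}^{s} t_k^{\nu_k - 1} e^{-t_{k+1}/t_k}\, \frac{dt_k}{t_k}
\]
evaluated with $t_{r+1} = $ (variable coming from $x$) and $t_{s+1} = y$, suitably normalized. Each single factor has Mellin transform $\int_0^\infty z^{\sigma} z^{\nu-1} e^{-w/z}\frac{dz}{z}$, which is (up to a power of $w$) a Gamma function $\Gamma(\nu - \sigma)$ or $\Gamma(\sigma)$ depending on how one sets up the Mellin variable; this is the standard fact that the Mellin transform turns multiplicative convolution into a product. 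Taking the Mellin transform in the ratio $y/x$ therefore converts the $(s-r)$-fold convolution into a product $\prod_{k=r+1}^{s} \Gamma(\nu_k - \mathrm{(Mellin\ variable)})$, and inverting the Mellin transform recovers exactly the contour integral in (\ref{G-def}) with $m = s-r$, $n = 0$, $p = 0$, $q = s-r$, and lower parameters $\nu_{r+1}, \ldots, \nu_s$. So the identity reduces to: (i) identifying the Mellin transform of the single building block, and (ii) invoking the convolution theorem for the Mellin transform. Both are classical.

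The main obstacle I anticipate is not any single hard estimate but rather the careful alignment of conventions: the transition functions in Proposition \ref{PropositionDensity2} mix multiplicative shifts ($e^{-y/x}$, so $y/x$ is the natural variable) with additive factors ($e^{-x}$, $y^{\nu_m}$), and one must verify that, after peeling off $g_r(x)/g_s(y)$, what remains is genuinely a function of $y/x$ alone times $1/x$ — i.e. that the process has the right scaling structure at each intermediate step. An alternative, perhaps cleaner route that avoids Mellin transforms: prove the formula by induction on $s - r$, using the single-step recursion $\phi_{r,s+1} = \phi_{r,s} \ast \phi_{s,s+1}$ together with the known convolution identity for Meijer $G$-functions of the form $G^{a,0}_{0,a} \ast G^{1,0}_{0,1}$ (which just appends one more parameter to the lower list); the base case $s = r+1$ is a direct reading-off of $\phi_{r,r+1}$ against the definition (\ref{G-def}) with $s-r = 1$. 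I would present the induction version, as it makes the bookkeeping of $g_r$, $g_s$ most transparent and isolates the one nontrivial convolution identity, which itself follows immediately from (\ref{G-def}) by the Mellin argument applied to a single extra factor.
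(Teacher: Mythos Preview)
Your proposal is correct and aligns with the paper's approach. The paper writes out the full $(s-r-1)$-fold integral explicitly, peels off the $g_r(x)/g_s(y)$ prefactor exactly as you describe, and then recognizes the remaining integral as $G^{s-r,0}_{0,s-r}(-;\nu_{r+1},\ldots,\nu_s \mid y/x)$ via an identity (equation (\ref{GElementaryRepresentation}) in the paper) which it remarks is proved inductively using the one-step convolution of a Meijer $G$-function against $e^{-t}t^{b_0-1}$ --- precisely the induction you outline as your preferred route.
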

\begin{proof}
 If $1\leq r<s\leq m$, the two-point function $\phi_{r,s}(x,y)$ can be written explicitly as
\begin{equation}
\begin{split}
&\phi_{r,s}(x,y)=\phi_{r,r+1}\ast\ldots\ast\phi_{s-1,s}(x,y)\\
&=\frac{1}{x}\int\limits_0^{\infty}\ldots
\int\limits_{0}^{\infty}e^{-\frac{t_{r+1}}{x}-\frac{t_{r+2}}{t_{r+1}}-\ldots-\frac{t_{s-1}}{t_{s-2}}-\frac{y}{t_{s-1}}}
\left(\frac{t_{r+1}}{x}\right)^{\nu_{r+1}}\left(\frac{t_{r+2}}{t_{r+1}}\right)^{\nu_{r+2}}\ldots\left(\frac{y}{t_{s-1}}\right)^{\nu_{s}}
\frac{dt_{r+1}}{t_{r+1}}\ldots\frac{dt_{s-1}}{t_{s-1}}\\
&\times\left\{
         \begin{array}{ll}
           \frac{x^{\nu_1}e^{-x}}{y^{\nu_s}}, & 1=r<s<m, \\
            \frac{x^{\nu_r}}{y^{\nu_s}}, & 1<r<s<m, \\
            x^{\nu_r}, & 1=r<s=m, \\
            x^{\nu_1}e^{-x}, & r=1, s=m.
         \end{array}
       \right.
\end{split}
\nonumber
\end{equation}
To obtain the formula in the statement of the Proposition use the identity\footnote{Identity (\ref{GElementaryRepresentation}) can be proved inductively
using
$$
\int\limits_0^{\infty}dte^{-t}t^{b_0-1}G^{m,n}_{p,q}\left(\begin{array}{ccc}
                a_1,\ldots,a_p \\
               b_1,\ldots,b_q
             \end{array}\biggl|\frac{s}{t}\right)=G^{m+1,n}_{p,q+1}\left(\begin{array}{ccc}
                a_1,\dots,a_p\\
               b_0,\ldots,b_q
             \end{array}\biggl|s\right),
$$
see Luke \cite{Luke}.}
\begin{equation}\label{GElementaryRepresentation}
\begin{split}
&G^{m,0}_{0,m}\left(\begin{array}{ccc}
                - \\
               b_1,\ldots,b_m
             \end{array}\biggl|\frac{x_m}{x_0}\right)\\
&=\int\limits_0^{\infty}\ldots\int\limits_0^{\infty}\left(\frac{x_1}{x_0}\right)^{b_1}\left(\frac{x_2}{x_1}\right)^{b_2}\ldots\left(\frac{x_m}{x_{m-1}}\right)^{b_m}
e^{-\frac{x_1}{x_0}-\frac{x_2}{x_1}-\ldots-\frac{x_m}{x_{m-1}}}\frac{dx_1}{x_1}\ldots\frac{dx_{m-1}}{x_{m-1}},
\end{split}
\end{equation}
and the definition of the functions $g_r$ (equation (\ref{gr})).
\end{proof}
\begin{prop}\label{PropositionInitialTransitionFunctions}
The  transition functions
$$
\phi_{0,s}: \left\{1,\ldots,n\right\}\times \R_{>0}\rightarrow\C,\;\;s\in\left\{1,\ldots, m\right\},
$$
of the Ginibre product process with polynomial ensemble initial conditions
are given by
\begin{equation}
\phi_{0,s}(j,y)=\left\{
                  \begin{array}{ll}
                    f_j(y), & s=1, \\
                    \frac{1}{g_s(y)}\int\limits_0^{\infty}f_j(t)t^{\nu_1-1}e^{-t}
G^{s-1,0}_{0,s-1}\left(\begin{array}{ccc}
                - \\
               \nu_2,\ldots,\nu_s
             \end{array}\biggl|\frac{y}{t}\right)dt, & s=2,\ldots,m.\\

                  \end{array}
                \right.
\end{equation}
\end{prop}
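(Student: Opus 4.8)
The plan is to avoid expanding the iterated integral from scratch and instead reuse the formula for $\phi_{1,s}$ already established in Proposition \ref{PropositionMainTransitionFunction}. Recall that the $\ast$-operation on transition functions is associative (whenever the relevant integrals converge absolutely), so from the definition $\phi_{0,s}=\phi_{0,1}\ast\phi_{1,2}\ast\cdots\ast\phi_{s-1,s}$ one obtains, for $s\in\{2,\ldots,m\}$,
\[
\phi_{0,s}=\phi_{0,1}\ast\bigl(\phi_{1,2}\ast\cdots\ast\phi_{s-1,s}\bigr)=\phi_{0,1}\ast\phi_{1,s}.
\]
Since $X=\R_{>0}$ carries Lebesgue measure and $\phi_{0,1}(j,x)=f_j(x)$, this reads
\[
\phi_{0,s}(j,y)=\int_0^\infty f_j(t)\,\phi_{1,s}(t,y)\,dt,
\]
with $j\in\{1,\ldots,n\}$ a spectator parameter. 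For $s=1$ there is no intermediate transition, so $\phi_{0,1}(j,y)=f_j(y)$ directly from the hypotheses of Proposition \ref{PropositionDensity2}, which is the first line of the asserted formula.

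Next I would substitute into the integral above the explicit expression for $\phi_{1,s}$ given by Proposition \ref{PropositionMainTransitionFunction} with $r=1$: for $1<s\leq m$,
\[
\phi_{1,s}(t,y)=\frac{g_1(t)}{g_s(y)}\,\frac{1}{t}\,G^{s-1,0}_{0,s-1}\!\left(\begin{array}{ccc} - \\ \nu_2,\ldots,\nu_s\end{array}\biggl|\frac{y}{t}\right),
\]
and recall from \eqref{gr} that $g_1(t)=e^{-t}t^{\nu_1}$. Pulling the purely $y$-dependent factor $1/g_s(y)$ out of the $t$-integral and combining $g_1(t)\cdot t^{-1}=t^{\nu_1-1}e^{-t}$ gives
\[
\phi_{0,s}(j,y)=\frac{1}{g_s(y)}\int_0^\infty f_j(t)\,t^{\nu_1-1}e^{-t}\,G^{s-1,0}_{0,s-1}\!\left(\begin{array}{ccc} - \\ \nu_2,\ldots,\nu_s\end{array}\biggl|\frac{y}{t}\right)dt,
\]
which is exactly the claimed formula for $s\in\{2,\ldots,m\}$. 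All the algebra here is immediate once the previous step is available.

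The only point requiring genuine care is the analytic bookkeeping that legitimizes the associativity rearrangement and the splitting $\phi_{0,s}=\phi_{0,1}\ast\phi_{1,s}$: one must verify that the iterated integral defining $\phi_{0,s}(j,y)$ converges absolutely for each $y>0$, so that Fubini's theorem applies and the innermost convolutions may be collapsed first. This uses the standing assumption that \eqref{PolynomialEnsemble} is a bona fide probability density (which forces sufficient integrability/decay of each $f_j$ against $t^{\nu_1-1}e^{-t}$) together with the known behavior of the Meijer $G$-function $G^{s-1,0}_{0,s-1}$ near $0$ and $\infty$, exactly as in the convergence discussion underlying Proposition \ref{PropositionMainTransitionFunction}. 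I expect this integrability check, rather than any algebraic manipulation, to be the main obstacle; beyond it the proof is a one-line application of Proposition \ref{PropositionMainTransitionFunction}.
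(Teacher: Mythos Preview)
Your proposal is correct and follows essentially the same approach as the paper: write $\phi_{0,s}=\phi_{0,1}\ast\phi_{1,s}$, then substitute the expression for $\phi_{1,s}$ from Proposition~\ref{PropositionMainTransitionFunction} with $r=1$ and use $g_1(t)=t^{\nu_1}e^{-t}$. The paper's proof is the same one-line computation and does not pause over the Fubini/integrability issue you flag, so your discussion there is more careful than the original.
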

\begin{proof}
The function $\phi_{0,1}$ is the initial transition function for the Ginibre product process with polynomial ensemble initial conditions defined in the statement of
Proposition \ref{PropositionDensity2}. For $s\in\left\{2,\ldots,m\right\}$ we have
\begin{equation}
\begin{split}
&\phi_{0,s}(j,y)=\phi_{0,1}\ast\phi_{1,s}(j,y)=\int\limits_0^{\infty}f_j(t)\phi_{1,s}(t,y)dt\\
&=\frac{1}{g_s(y)}\int\limits_0^{\infty}f_j(t)t^{\nu_1-1}e^{-t}G^{s-1,0}_{0,s-1}\left(\begin{array}{ccc}
                - \\
               \nu_2,\ldots,\nu_s
             \end{array}\biggl|\frac{y}{t}\right)dt,
\end{split}
\nonumber
\end{equation}
where we have used the formula for $\phi_{1,s}$ derived in Proposition \ref{PropositionMainTransitionFunction}.
\end{proof}
\begin{prop}\label{PropostionTransitionFunctionPhirmPlus1}
The  transition functions
$$
\phi_{r,m+1}: \R_{>0}\times\left\{1,\ldots,n\right\}\rightarrow\C,\;\;r\in\left\{1,\ldots, m\right\},
$$
of the Ginibre product process with polynomial initial conditions
are given by
\begin{equation}
\phi_{r,m+1}(x,i)=\left\{
                  \begin{array}{ll}
                    e^{-x}x^{\nu_1+i-1}\Gamma(\nu_2+i)\ldots\Gamma(\nu_m+i), & r=1, \\
                    x^{\nu_r+i-1}\Gamma(\nu_{r+1}+i)\ldots\Gamma(\nu_m+i), & r=2,\ldots,m-1, \\
                   x^{i-1} , & r=m.
                  \end{array}
                \right.
\end{equation}
\end{prop}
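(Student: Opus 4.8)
The plan is to evaluate $\phi_{r,m+1}$ directly from its definition as an iterated convolution, collapsing the earlier factors into the already-known kernel $\phi_{r,m}$. For $r=m$ there is nothing to collapse: $\phi_{m,m+1}(x,i)=x^{i-1}$ is precisely the final transition function fixed in Proposition \ref{PropositionDensity2}, which is the last case of the claim. For $1\le r\le m-1$, associativity of the convolution $\ast$ gives $\phi_{r,m+1}=\phi_{r,m}\ast\phi_{m,m+1}$, and since $\phi_{m,m+1}(t,i)=t^{i-1}$ this is
\[
\phi_{r,m+1}(x,i)=\int_0^{\infty}\phi_{r,m}(x,t)\,t^{i-1}\,dt .
\]

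Next I would substitute the closed form of $\phi_{r,m}$ supplied by Proposition \ref{PropositionMainTransitionFunction}. Using $g_m\equiv 1$ together with the change of variables $t=xu$ to scale the $x$ out of the argument of the Meijer $G$-function, the integral becomes
\[
\phi_{r,m+1}(x,i)=g_r(x)\,x^{i-1}\int_0^{\infty}u^{i-1}\,G^{m-r,0}_{0,m-r}\left(\begin{array}{ccc}
                - \\
               \nu_{r+1},\ldots,\nu_m
             \end{array}\biggl|u\right)du .
\]
The remaining one–dimensional integral is the Mellin transform of a Meijer $G$-function, which one reads off immediately from the Mellin--Barnes representation (\ref{G-def}): for the list of lower parameters $\nu_{r+1},\ldots,\nu_m$ and all other parameter lists empty,
\[
\int_0^{\infty}u^{s-1}\,G^{m-r,0}_{0,m-r}\left(\begin{array}{ccc}
                - \\
               \nu_{r+1},\ldots,\nu_m
             \end{array}\biggl|u\right)du=\prod_{j=r+1}^{m}\Gamma(\nu_j+s).
\]
Taking $s=i$ gives $\Gamma(\nu_{r+1}+i)\cdots\Gamma(\nu_m+i)$, and inserting $g_1(x)=e^{-x}x^{\nu_1}$ and $g_r(x)=x^{\nu_r}$ for $2\le r\le m-1$ from (\ref{gr}) produces the remaining two cases of the statement.

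The only slightly delicate point is the justification of the Mellin-transform identity: one must check that, for $s=i\in\{1,\ldots,n\}$ and the positive parameters $\nu_{r+1},\ldots,\nu_m$, the Barnes contour in (\ref{G-def}) can be placed in a strip where the integral defining $G^{m-r,0}_{0,m-r}$ converges and where the outer integral over $(0,\infty)$ converges at both endpoints; this is routine once all $\nu_j>0$. If one prefers to avoid Mellin transforms entirely, there is an equivalent self-contained route: feed the elementary iterated-integral representation (\ref{GElementaryRepresentation}) of $G^{m-r,0}_{0,m-r}$ into the formula for $\phi_{r,m+1}$ and integrate out the internal variables one at a time against Gamma-integrals of the type $\int_0^{\infty}\tau^{\nu+c-1}e^{-\tau}\,d\tau=\Gamma(\nu+c)$, which reproduces the same product of Gamma factors. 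I expect no genuine obstacle here: the content is entirely the convolution bookkeeping plus one standard $G$-function identity.
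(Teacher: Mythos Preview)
Your proof is correct and follows essentially the same approach as the paper: both express $\phi_{r,m+1}$ as $\int_0^\infty \phi_{r,m}(x,t)\,t^{i-1}\,dt$, insert the closed form of $\phi_{r,m}$ from Proposition~\ref{PropositionMainTransitionFunction}, and evaluate via the Mellin transform of the Meijer $G$-function. The only cosmetic difference is that you first rescale $t=xu$ to reduce to a Mellin transform with argument $u$, whereas the paper applies the general Mellin formula with the scale factor $z=1/x$ directly; the two are trivially equivalent.
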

\begin{proof}
If $r=m$, then $\phi_{m,m+1}$ is just the final transition function for the Ginibre product process defined in the statement of
Proposition \ref{PropositionDensity2}. For $r\in\left\{1,\ldots,m-1\right\}$ we have
\begin{equation}
\begin{split}
&\phi_{r,m+1}(x,i)=\phi_{r,r+1}\ast\ldots\ast\phi_{m-1,m}\ast\phi_{m,m+1}(x,i)=\int\limits_0^{\infty}\phi_{r,r+1}\ast\ldots\ast\phi_{m-1,m}(x,t)t^{i-1}dt\\
&=\frac{g_r(x)}{x}\int\limits_0^{\infty}G^{m-r,0}_{0,m-r}\left(\begin{array}{ccc}
                - \\
               \nu_{r+1},\ldots,\nu_m
             \end{array}\biggl|\frac{t}{x}\right)t^{i-1}dt,
\end{split}
\nonumber
\end{equation}
where we have used the formula for $\phi_{r,r+1}\ast\ldots\ast\phi_{m-1,m}$ derived in Proposition \ref{PropositionMainTransitionFunction}.
Recall that $g_r$ is defined by equation (\ref{gr}), and
that the Mellin transform of a Meijer G-function is given by
\begin{equation}
\begin{split}
\int\limits_0^{\infty}t^{u-1}G^{m,n}_{p,q}\left(\begin{array}{ccc}
                a_1,\ldots, a_p \\
               b_1,\ldots, b_q
             \end{array}\biggl|tz\right)dt
=z^{-u}\frac{\prod_{i=1}^m\Gamma(b_i+u)\prod_{i=1}^n\Gamma(1-a_i-u)}{\prod_{i=n+1}^p\Gamma(a_i+u)\prod_{i=m+1}^q\Gamma(1-b_i-u)},
\end{split}
\end{equation}
see Luke \cite{Luke}.
So
\begin{equation}
\begin{split}
\int\limits_0^{\infty}G^{m-r,0}_{0,m-r}\left(\begin{array}{ccc}
                - \\
               \nu_{r+1},\ldots,\nu_m
             \end{array}\biggl|\frac{t}{x}\right)t^{i-1}dt
=x^{i}\Gamma(\nu_{r+1}+i)\ldots\Gamma(\nu_m+i).
\end{split}
\end{equation}
The result follows.
\end{proof}
\begin{prop}\label{PropositionMatrixAInverse}
For the Ginibre product process with polynomial ensemble initial conditions the matrix $A=\left(a_{i,j}\right)_{i,j=1}^n$
in formula (\ref{CorrelationKernelGeneralFormula}) for the correlation kernel is given by
\begin{equation}\label{aij1}
\begin{split}
a_{i,j}=\Gamma(\nu_2+j)\ldots\Gamma(\nu_m+j)\int\limits_0^{\infty}f_i(t)e^{-t}t^{\nu_1+j-1}dt,\;\; i,j\in\left\{1,\ldots,n\right\}.
\end{split}
\end{equation}
\end{prop}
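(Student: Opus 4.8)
The plan is to recognize $a_{i,j}$ as a value of the composite transition function $\phi_{0,m+1}$ and then to unfold the convolution. By the Eynard--Mehta theorem in the form of Proposition \ref{TheoremEynardMehta}, equation (\ref{aij}), one has $a_{i,j}=\phi_{0,m+1}(x_i^0,x_j^{m+1})$. In the present setting $X_0=X_{m+1}=\{1,\dots,n\}$ and the fixed boundary vectors are $x_i^0=i$, $x_j^{m+1}=j$, so this reads $a_{i,j}=\phi_{0,m+1}(i,j)$, exactly the convention already used throughout the proof of Theorem \ref{TheoremPolynomialInitialConditions}.

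Next I would invoke associativity of the convolution $\ast$, which here is taken over the continuous factor $X=\R_{>0}$ with respect to Lebesgue measure (the boundary sets being discrete), to split $\phi_{0,m+1}=\phi_{0,1}\ast\phi_{1,m+1}$, whence
\[
a_{i,j}=\int_0^\infty \phi_{0,1}(i,t)\,\phi_{1,m+1}(t,j)\,dt .
\]
For $m\geq 2$ I would then substitute $\phi_{0,1}(i,t)=f_i(t)$ from Proposition \ref{PropositionDensity2} together with the $r=1$ case of Proposition \ref{PropostionTransitionFunctionPhirmPlus1}, namely $\phi_{1,m+1}(t,j)=e^{-t}t^{\nu_1+j-1}\Gamma(\nu_2+j)\cdots\Gamma(\nu_m+j)$, and pull the $t$-independent Gamma factors out of the integral; this produces (\ref{aij1}) immediately.

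Finally I would treat the degenerate case $m=1$ by hand. There are no middle transition functions, $\phi_{1,m+1}=\phi_{m,m+1}$ is the final transition function $\phi_{1,2}(t,j)=t^{j-1}$, and the polynomial-ensemble weight is carried by $\phi_{0,1}(i,t)=f_i(t)t^{\nu_1}e^{-t}$ (cf. the $m=1$, $s=1$ convention in Theorem \ref{TheoremPolynomialInitialConditions}), so that $a_{i,j}=\int_0^\infty f_i(t)t^{\nu_1}e^{-t}t^{j-1}\,dt$; this agrees with (\ref{aij1}) once the empty product $\Gamma(\nu_2+j)\cdots\Gamma(\nu_m+j)$ is read as $1$.

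I do not anticipate any genuine obstacle: the argument is pure bookkeeping built on results already in hand (the Eynard--Mehta formula and the explicit transition functions computed in the preceding propositions). The only points that need care are keeping the convolution over $\R_{>0}$ alone, tracking which slots of $\phi_{0,1}$ and $\phi_{1,m+1}$ carry the indices $i$ and $j$, and isolating the $m=1$ case in which the transition functions degenerate.
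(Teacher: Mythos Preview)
Your proposal is correct and follows essentially the same route as the paper: identify $a_{i,j}=\phi_{0,m+1}(i,j)$ via the Eynard--Mehta data, split the convolution as $\phi_{0,1}\ast\phi_{1,m+1}$, and substitute the explicit transition functions from Propositions \ref{PropositionDensity2} and \ref{PropostionTransitionFunctionPhirmPlus1}. Your treatment of the degenerate case $m=1$ (with the weight $t^{\nu_1}e^{-t}$ absorbed into $\phi_{0,1}$ and the empty Gamma product read as $1$) is in fact more careful than the paper, which tacitly assumes $m\geq 2$ in this proposition.
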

\begin{proof}
The matrix elements $a_{i,j}$ of $A$ are defined by equation (\ref{aij}) in terms of the transition function
$$
\phi_{0,m+1}: \left\{1,\ldots,n\right\}\times \left\{1,\ldots,n\right\}\rightarrow\C.
$$
Let us compute this transition function explicitly. We have
\begin{equation}
\begin{split}
&\phi_{0,m+1}(i,j)=\phi_{0,1}\ast\phi_{1,m+1}(i,j)=\int\limits_{0}^{\infty}\phi_{0,1}(i,t)\phi_{1,m+1}(t,j)dt\\
&=\Gamma(\nu_2+j)\ldots\Gamma(\nu_m+j)\int\limits_{0}^{\infty}f_i(t)e^{-t}t^{\nu_1+j-1}dt.
\end{split}
\nonumber
\end{equation}
Thus
$
a_{i,j}
$
is given by equation (\ref{aij1}).
\end{proof}
Applying Proposition \ref{TheoremEynardMehta} and Propositions \ref{PropositionDensity2}-\ref{PropositionMatrixAInverse}
we see that the correlation kernel of the Ginibre product process with polynomial ensemble initial conditions
can be written as
$$
\check{K}_{n,m}(r,x;s,y)=\frac{g_r(x)}{g_s(y)}K_{n,m}(r,x;s,y),
$$
where $K_{n,m}(r,x;s,y)$ is defined in the statement of Theorem \ref{TheoremPolynomialInitialConditions}, and where the scale factor$g_r(x)/g_s(y)$ emerges
from the formulas for the transition functions derived in Propositions \ref{PropositionDensity2}-\ref{PropostionTransitionFunctionPhirmPlus1}. It follows
from equation (\ref{DeterminantalFormOfTheCorrelationFunctions}) that
 the kernels
$\check{K}_{n,m}(r,x;s,y)$ and $K_{n,m}(r,x;s,y)$ give the same dynamical correlation functions $\varrho_l(s_1,x_1;\ldots;s_l,x_l)$, i.e. these kernels are equivalent.\footnote{Two kernels $K(x,y)$ and $K'(x,y)$ are called equivalent
if $\det\left[K(x_i,x_j]\right)_{i,j=1}^m=\det\left[K'(x_i,x_j)\right]_{i,j=1}^m$, for any $m=1,2,\ldots$, for example $K'(x,y)=(f(x)/f(y))K(x,y)$. Thus two equivalent kernels define the same correlation functions.}
Theorem \ref{TheoremPolynomialInitialConditions} is proved. \qed
%%%%%%%%%%%%%%%%%%%%%%%%%%%%%%%%%%%%%%%%%%%%%%%%%%%%%%%%%%%%%%%%%%%%%%%%%%%%%%
%%%%%%%%%%%%%%%%%%%%%%%%%%%%%%%%%%%%%%%%%%%%%%%%%%%%%%%%%%%%%%%%%%%%%%%%%%%%%%%%
%%%%%%%%%%%%%%%%%%%%%%%%%%%%%%%%%%%%%%%%%%%%%%%%%%%%%%%%%%%%%%%%%%%%%%%%%%%%%%%%%
\section{Proof of Theorem \ref{TheoremGinibreKernel}, Proposition \ref{PropositionKexactInitialGinibre}, and Proposition \ref{PropositionScalingLimit1}}
In order to prove Theorem \ref{TheoremGinibreKernel} we use the formulae for the correlation kernel  in Theorem \ref{TheoremPolynomialInitialConditions}.
Taking into account that the functions $f_i(x)$ are given by equation (\ref{FunctionsFiGinibreMatrix}), we immediately obtain explicit formulae for the entries
$a_{i,j}$ of the matrix $A$, and for the initial transition functions $\phi_{0,s}(j,s)$. Namely,
$$
a_{i,j}=\Gamma\left(\nu_1+i+j-1\right)\Gamma(\nu_2+j)\ldots\Gamma\left(\nu_m+j\right),\;\;\;i,j\in\left\{1,\ldots,n\right\},
$$
and
\begin{equation}\label{Phi0s1}
\phi_{0,s}(j,y)=
G^{s,0}_{0,s}\left(\begin{array}{cccc}
                - \\
             \nu_1+j-1, \nu_2, \ldots, \nu_s
             \end{array}\biggl|y\right),
\end{equation}
where $j\in\left\{1,\ldots,n\right\}$, and $s\in\left\{1,\ldots,m\right\}$.
Next we use the fact that the inverse $\left(\alpha_{k,l}\right)_{k,l=0}^{N-1}$
of the Hankel matrix
$$
\left(h_{k+l}\right)_{k,l=0}^{N-1},\;\; h_k=\Gamma(k+\nu+1),
$$
is given by\footnote{Recall that the Pochhammer symbol, $(a)_l$, is defined by
$$
(a)_l=a(a+1)\ldots(a+l-1).
$$}
$$
\alpha_{k,l}=\sum\limits_{p=0}^{N-1}\frac{\Gamma(\nu+p+1)}{\left(\Gamma(\nu+1)\right)^2p!}\frac{(-p)_{k}(-p)_{l}}{(\nu_1+1)_{k}(\nu_1+1)_{l}k!l!},
$$
see, for example, Akemann and Strahov \cite{AkemannStrahov}, Section 5.
This gives
\begin{equation}\label{MatrixInverse1}
\begin{split}
\left(A^{-1}\right)_{i,j}&=\frac{1}{\Gamma(\nu_2+i)\ldots\Gamma(\nu_m+i)}\\
&\times\sum\limits_{p=0}^{n-1}\frac{\Gamma(\nu_1+p+1)}{\left(\Gamma(\nu_1+1)\right)^2p!}\frac{(-p)_{i-1}(-p)_{j-1}}{(\nu_1+1)_{i-1}(\nu_1+1)_{j-1}}\frac{1}{(i-1)!(j-1)!},
\end{split}
\end{equation}
where $i, j\in\left\{1,\ldots,n\right\}$.
Proposition \ref{TheoremEynardMehta} says that the correlation kernel
can be represented as a sum of two terms.
Proposition \ref{PropositionMainTransitionFunction} gives the first term for the correlation kernel.
Denote the second term of the correlation kernel by $\widetilde{K}_{n,m}(r,x;s,y)$.
The matrix $A^{-1}$ is given explicitly by equation (\ref{MatrixInverse1}), so we can write
\begin{equation}\label{KCheckPCheckQ}
\begin{split}
\widetilde{K}_{n,m}(r,x;s,y)&=\sum\limits_{i,j=1}^n\phi_{r,m+1}\left(x,i\right)\left(A^{-1}\right)_{i,j}\phi_{0,s}(j,y)\\
&=\sum\limits_{p=0}^{n-1}\check{P}_{r,p}(x)\check{Q}_{s,p}(y)\frac{\Gamma(\nu_1+p+1)}{\left(\Gamma(\nu_1+1)\right)^2p!},
\end{split}
\end{equation}
where
\begin{equation}
\check{P}_{r,p}(x)=\sum\limits_{i=0}^{p}\phi_{r,m+1}\left(x,i+1\right)\frac{(-p)_i}{\Gamma(\nu_2+i+1)\ldots\Gamma(\nu_m+i+1)(\nu_1+1)_ii!},
\end{equation}
and where
\begin{equation}
\check{Q}_{s,p}(y)=\sum\limits_{j=0}^{p}\phi_{0,s}\left(j+1,y\right)\frac{(-p)_j}{(\nu_1+1)_jj!}.
\end{equation}
The transition function $\phi_{r,m+1}$ can be found as  in Proposition \ref{PropostionTransitionFunctionPhirmPlus1},  and the transition function $\phi_{0,s}$
is given by equation (\ref{Phi0s1}). We find
\begin{equation}
\check{P}_{r,p}(x)=\Gamma(\nu_1+1)p!\sum\limits_{i=0}^{p}\frac{(-1)^{p-i}}{(p-i)!i!}\frac{x^i}{\Gamma(\nu_1+i+1)\ldots\Gamma(\nu_r+i+1)},
\end{equation}
and
\begin{equation}\label{CheckQ}
\check{Q}_{s,p}(y)=\Gamma(\nu_1+1)p!\sum\limits_{j=0}^p\frac{(-1)^{p-j}}{(p-j)!j!\Gamma(\nu_1+1+j)}
G^{s,0}_{0,s}\left(\begin{array}{cccc}
                - \\
               \nu_{1}+j,\nu_2\ldots,\nu_s
             \end{array}\biggl|y\right).
\end{equation}
The residue calculations give
\begin{equation}\label{CheckPContourIntegral}
\check{P}_{r,p}(x)=\frac{\Gamma(\nu_1+1)p!}{2\pi i}\oint\limits_{\Sigma_p}\frac{\Gamma(t-p)}{\prod_{j=0}^r\Gamma(t+\nu_j+1)}x^tdt,
\end{equation}
where $\Sigma_p$ is defined in the same way as in the statement of Theorem \ref{TheoremGinibreKernel}.
To find a contour integral representation for $\check{Q}_{s,p}$ we use the formula
\begin{equation}
G^{s,0}_{0,s}\left(\begin{array}{cccc}
                - \\
               \nu_{1}+j,\nu_2\ldots,\nu_s
             \end{array}\biggl|y\right)=\frac{1}{2\pi i}\int_{c-i\infty}^{c+i\infty}\Gamma(u+\nu_1+j)\Gamma(u+\nu_2)\ldots\Gamma(u+\nu_s)y^{-u}du,
\end{equation}
where $c>0$.
Inserting the contour integral representation just written above into equation (\ref{CheckQ}) we find
\begin{equation}\label{CheckQ1}
\begin{split}
&\check{Q}_{s,p}(y)=\Gamma(\nu_1+1)p!\\
&\times\frac{1}{2\pi i}\int_{c-i\infty}^{c+i\infty}\Gamma(u+\nu_2)\ldots\Gamma(u+\nu_s)
\left(\sum\limits_{j=0}^p\frac{(-1)^{p-j}\Gamma(u+\nu_1+j)}{(p-j)!j!\Gamma(\nu_1+1+j)}\right)y^{-u}du.
\end{split}
\end{equation}
The sum inside the integral in equation (\ref{CheckQ1}) can be rewritten as
\begin{equation}
\begin{split}
&\sum\limits_{j=0}^p\frac{(-1)^{p-j}\Gamma(u+\nu_1+j)}{(p-j)!j!\Gamma(\nu_1+1+j)}=\frac{1}{p!}\frac{\Gamma(u+\nu_1)}{\Gamma(1+\nu_1)}
\sum\limits_{j=0}^p\frac{(-p)_j(u+\nu_1)_j}{(1+\nu_1)_j}\frac{1}{j!}\\
&=\frac{1}{p!}\frac{\Gamma(u+\nu_1)}{\Gamma(1+\nu_1)}{}_2F_1\left(-p,u+\nu_1;1+\nu_1;1\right)\\
&=\frac{1}{p!}\frac{\Gamma(u+\nu_1)}{\Gamma(1+\nu_1)}\frac{(1-u)_p}{(1+\nu_1)_p}=\frac{1}{p!}\frac{\Gamma(u+\nu_1)}{\Gamma(1+\nu_1+p)}\frac{\Gamma(u)}{\Gamma(u-p)},
\end{split}
\nonumber
\end{equation}
where we have used the formula (the Chu-Vandermonde sum)
$$
{}_2F_1\left(-n,b;c;1\right)=\frac{(c-b)_n}{(c)_n},
$$
see Ismail \cite{Ismail}, Section 1.4.
So
\begin{equation}\label{CheckQContourIntegral}
\begin{split}
\check{Q}_{s,p}(y)=\frac{\Gamma(\nu_1+1)}{2\pi i\Gamma(1+\nu_1+p)}\int_{c-i\infty}^{c+i\infty}\frac{\prod_{j=0}^s\Gamma(u+\nu_j)}{\Gamma(u-p)}y^{-u}du.
\end{split}
\end{equation}
Equations (\ref{KCheckPCheckQ}), (\ref{CheckPContourIntegral}), and (\ref{CheckQContourIntegral})
imply that the correlation kernel can be written as in the statement of Theorem \ref{TheoremGinibreKernel}.
Theorem \ref{TheoremGinibreKernel} is proved.

Once Theorem \ref{TheoremGinibreKernel} is established, Proposition \ref{PropositionKexactInitialGinibre}, and Proposition \ref{PropositionScalingLimit1}
can be proved by repetitions of arguments from Kuijlaars and Zhang \cite{KuijlaarsZhang}, Section 5. \qed
%%%%%%%%%%%%%%%%%%%%%%%%%%%%%%%%%%%%%%%%%%%%%%%%%%%%%%%%%%%%%%%%%%%%%%
%%%%%%%%%%%%%%%%%%%%%%%%%%%%%%%%%%%%%%%%%%%%%%%%%%%%%%%%%%%%%%%%%%%%%%
%%%%%%%%%%%%%%%%%%%%%%%%%%%%%%%%%%%%%%%%%%%%%%%%%%%%%%%%%%%%%%%%%%%%%%

%%%%%%%%%%%%%%%%%%%%%%%%%%%%%%%%%%%%%%%%%%%%%%%%%%%%%%%%%%%%%%%%%%%%%
%%%%%%%%%%%%%%%%%%%%%%%%%%%%%%%%%%%%%%%%%%%%%%%%%%%%%%%%%%%%%%%%%%%%%
%%%%%%%%%%%%%%%%%%%%%%%%%%%%%%%%%%%%%%%%%%%%%%%%%%%%%%%%%%%%%%%%%%%%%
%%%%%%%%%%%%%%%%%%%%%%%%%%%%%%%%%%%%%%%%%%%%%%%%%%%%%%%%%%%%%%%%%%%%%%%%%%%%%%%%%%%%%
%%%%%%%%%%%%%%%%%%%%%%%%%%%%%%%%%%%%%%%%%%%%%%%%%%%%%%%%%%%%%%%%%%%%%%%%%%%%%%%%%%%%
%%%%%%%%%%%%%%%%%%%%%%%%%%%%%%%%%%%%%%%%%%%%%%%%%%%%%%%%%%%%%%%%%%%%%%%%%%%%%%%%%%%%%
\section{Proof of Theorem \ref{TheoremTruncation}, Proposition \ref{TheoremExactKernelTruncation}, and Proposition \ref{PropositionScalingLimitTruncation}}
By Theorem \ref{TheoremPolynomialInitialConditions} and Proposition \ref{PropositionJiang} the Ginibre product process with initial conditions defined by a truncation of a random unitary matrix is
a determinantal point process. The correlation kernel of this determinantal process is given implicitly by formula (\ref{MainGeneralFormula}). To write the correlation kernel explicitly we need to find  explicit expressions for the matrix $A$ and its inverse.
Using equations (\ref{GeneralFormulaAij}) and (\ref{FunctionFiTruncation}) we find that in the case under considerations
the matrix $A=\left(a_{i,j}\right)_{i,j=1}^n$ is defined by
\begin{equation}
a_{i,j}=\Gamma(\nu_2+j)\ldots\Gamma(\nu_m+j)\int\limits_0^{1}t^{i+j-2+\nu_1}(1-t)^{l-2n-\nu_1}dt,\;\; i,j\in\left\{1,\ldots,n\right\}.
\end{equation}
The integral in the equation just written above can be computed using the formula
$$
\int\limits_0^1t^{\alpha}(1-t)^{\beta}dt=\frac{\Gamma(\alpha+1)\Gamma(\beta+1)}{\Gamma(\alpha+\beta+2)}.
$$
The result is
\begin{equation}
a_{i,j}=\Gamma(\nu_2+j)\ldots\Gamma(\nu_m+j)\frac{\Gamma(i+j-1+\nu_1)\Gamma(l-2n-\nu_1+1)}{\Gamma(i+j+l-2n)}.
\nonumber
\end{equation}
We need to find the inverse of the matrix $\check{A}=\left(\check{a}_{i,j}\right)_{i,j=1}^n$ defined by
$$
\check{a}_{i,j}=\frac{\Gamma(i+j-1+\nu_1)}{\Gamma(i+j+l-2n)},\;\;\;\; i,j\in\left\{1,\ldots,n\right\}.
$$
\begin{prop}\label{PropositionZhangChenTheorem10} For $N=1,2,\ldots$, and $-\alpha,-\beta\in\C\setminus\mathbb{N}$ the matrix
$$
\left(\frac{(\alpha+1)_{i+j}}{(\alpha+\beta+2)_{i+j}}\right)_{i,j=0}^{N-1}
$$
is invertible  and its inverse matrix $\left(\gamma_{i,j}\right)_{i,j=0}^{N-1}$ is given by
\begin{equation}
\begin{split}
&\gamma_{i,j}=
\frac{(-1)^{i+j}\left(\alpha+\beta+1\right)_i\left(\alpha+\beta+1\right)_j}{\left(\alpha+1\right)_i\left(\alpha+1\right)_j\left(\alpha+\beta+1\right)}\\
&\times\sum\limits_{p=0}^{N-1}\frac{(2p+\alpha+\beta+1)(\alpha+1)_pp!}{(\alpha+\beta+1)_p(\beta+1)_p(p-i)!i!(p-j)!j!}
\left(\alpha+\beta+i+1\right)_p\left(\alpha+\beta+j+1\right)_p.
\end{split}
\end{equation}
\end{prop}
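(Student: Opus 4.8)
The plan is to identify the matrix
\[
M:=\left(\frac{(\alpha+1)_{i+j}}{(\alpha+\beta+2)_{i+j}}\right)_{i,j=0}^{N-1}
\]
with a constant multiple of the Hankel moment matrix of the Jacobi weight on $(0,1)$, and then to invert it through the associated orthogonal polynomials. The starting point is the Beta integral, which gives
\[
\frac{(\alpha+1)_{k}}{(\alpha+\beta+2)_{k}}
=\frac{\Gamma(\alpha+\beta+2)}{\Gamma(\alpha+1)\Gamma(\beta+1)}\int_{0}^{1}t^{k}\,t^{\alpha}(1-t)^{\beta}\,dt ,
\]
so that $M=\frac{1}{B(\alpha+1,\beta+1)}\bigl(\mu_{i+j}\bigr)_{i,j=0}^{N-1}$ with $\mu_{k}=\int_{0}^{1}t^{k}w(t)\,dt$ and $w(t)=t^{\alpha}(1-t)^{\beta}$. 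For real $\alpha,\beta>-1$ the weight $w$ is positive on $(0,1)$, so $\bigl(\mu_{i+j}\bigr)$ — and hence $M$ — is a positive definite Hankel matrix, in particular invertible.

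Next I would use the standard description of the inverse of a Hankel moment matrix via orthogonal polynomials: if $\{\pi_{p}\}_{p\ge0}$ are orthogonal with respect to $w$, $\pi_{p}(t)=\sum_{k}c_{p,k}t^{k}$ (so $C=(c_{p,k})$ is lower triangular with non-zero diagonal), and $h_{p}=\int_{0}^{1}\pi_{p}(t)^{2}w(t)\,dt$, then $C\bigl(\mu_{i+j}\bigr)C^{T}=\diag(h_{0},\dots,h_{N-1})$, whence $\bigl(\mu_{i+j}\bigr)^{-1}_{i,j}=\sum_{p=0}^{N-1}c_{p,i}c_{p,j}/h_{p}$. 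For the Jacobi weight on $(0,1)$ the orthogonal polynomials are $\pi_{p}(t)=P_{p}^{(\alpha,\beta)}(1-2t)$, and the classical hypergeometric formula for Jacobi polynomials gives
\[
c_{p,k}=\frac{(\alpha+1)_{p}}{p!}\cdot\frac{(-p)_{k}\,(p+\alpha+\beta+1)_{k}}{(\alpha+1)_{k}\,k!},
\]
while transporting the standard Jacobi normalization from $[-1,1]$ to $[0,1]$ yields
\[
h_{p}=\frac{1}{2p+\alpha+\beta+1}\cdot\frac{\Gamma(p+\alpha+1)\,\Gamma(p+\beta+1)}{p!\,\Gamma(p+\alpha+\beta+1)} .
\]

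The proof then reduces to substituting these into $(M^{-1})_{i,j}=B(\alpha+1,\beta+1)\sum_{p}c_{p,i}c_{p,j}/h_{p}$ and simplifying. After cancelling the Gamma functions, the scalar prefactor collapses to $\frac{(2p+\alpha+\beta+1)(\alpha+1)_{p}(\alpha+\beta+1)_{p}}{p!\,(\alpha+\beta+1)(\beta+1)_{p}}$; using the identity $(p+\alpha+\beta+1)_{i}=(\alpha+\beta+1)_{i}(\alpha+\beta+i+1)_{p}/(\alpha+\beta+1)_{p}$ (and likewise for $j$) together with $(-p)_{i}=(-1)^{i}p!/(p-i)!$ turns the remaining factors into precisely the asserted expression for $\gamma_{i,j}$. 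This establishes the Proposition for $\alpha,\beta>-1$.

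Finally, to reach the full range $-\alpha,-\beta\notin\mathbb{N}$ I would argue by analytic continuation: every entry of $M$ and every $\gamma_{i,j}$ is a rational function of $(\alpha,\beta)$, and the Hankel determinant $\det M$ has an explicit product evaluation (via the Selberg integral) as a ratio of Gamma functions, from which one reads off that it is finite and non-zero on this whole domain; this also re-proves invertibility there. Hence the matrix identity $M\,(\gamma_{i,j})_{i,j}=I$, valid on the open set $\{\alpha,\beta>-1\}$, persists throughout. The main obstacle is the bookkeeping in the simplification step — pinning down the Jacobi coefficients and norms without sign or shift errors and pushing the Pochhammer cancellations through cleanly; an alternative that avoids the analytic-continuation argument, at the cost of a longer computation, is to verify $\sum_{k}\frac{(\alpha+1)_{i+k}}{(\alpha+\beta+2)_{i+k}}\,\gamma_{k,j}=\delta_{i,j}$ directly by a Chu--Vandermonde summation (cf.\ \cite{Ismail}) valid for all admissible parameters.
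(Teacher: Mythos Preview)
Your argument is correct, but the paper does not actually prove this proposition: its entire proof reads ``See Theorem 10 in Zhang and Chen \cite{ZhangChen}.'' So the comparison is between your self-contained derivation and a bare citation.

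That said, the cited reference is titled \emph{Matrix inversion using orthogonal polynomials}, so what you have written is almost certainly a reconstruction of the same method: recognise the matrix as (a scalar multiple of) the Hankel moment matrix of the Jacobi weight on $(0,1)$, diagonalise it via the shifted Jacobi polynomials $P_p^{(\alpha,\beta)}(1-2t)$, read off the inverse as $\sum_p c_{p,i}c_{p,j}/h_p$, and simplify. Your identification of the coefficients $c_{p,k}$ and the norms $h_p$ is correct, and the Pochhammer rearrangement via $(p+\alpha+\beta+1)_i=(\alpha+\beta+1)_i(\alpha+\beta+i+1)_p/(\alpha+\beta+1)_p$ does produce the stated formula. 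The analytic-continuation step to pass from $\alpha,\beta>-1$ to $-\alpha,-\beta\notin\mathbb{N}$ is a genuine addition on your part and is sound: both sides of $M\,(\gamma_{i,j})=I$ are rational in $(\alpha,\beta)$, and the Selberg evaluation of $\det M$ shows it is nonvanishing on that domain. What your approach buys over the paper's treatment is self-containment; what the paper's citation buys is brevity.
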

\begin{proof}
See Theorem 10 in  Zhang and Chen \cite{ZhangChen}.
\end{proof}
We apply Proposition \ref{PropositionZhangChenTheorem10}, and find
\begin{equation}
A^{-1}=\left(\gamma_{i,j}\right)_{i,j=1}^n,
\end{equation}
where
\begin{equation}
\begin{split}
&\gamma_{i,j}=\frac{1}{\Gamma(\nu_2+i)\ldots\Gamma(\nu_m+i)\Gamma(l-2n-\nu_1+1)}\\
&\times\frac{\Gamma(l-2n+2)}{\Gamma(\nu_1+1)}\left(-1\right)^{i+j}\frac{\left(l-2n+1\right)_{i-1}\left(l-2n+1\right)_{j-1}}{(\nu_1+1)_{i-1}(\nu_1+1)_{j-1}\left(l-2n+1\right)}\\
&\times\sum\limits_{p=0}^{n-1}\frac{\left(2p+l-2n+1\right)\left(\nu_1+1\right)_p}{\left(l-2n+1\right)_p\left(l-2n-\nu_1+1\right)_p}
\frac{p!\left(l-2n+i\right)_p\left(l-2n+j\right)_p}{(p-i+1)!(i-1)!(p-j+1)!(j-1)!}.
\end{split}
\end{equation}
To compute the transition function $\phi_{0,s}(j,y)$
consider first the case when $s\in\left\{2,\ldots,m\right\}$. In this case we have
\begin{equation}
\phi_{0,s}(j,y)=\int\limits_0^1t^{\nu_1+j-2}(1-t)^{l-2n+\nu_1}G^{s-1,0}_{0,s-1}\left(\begin{array}{cccc}
                -\\
               \nu_2,\ldots,\nu_s
             \end{array}\biggl|\frac{y}{t}\right)dt.
\end{equation}
 We use the formula
$$
G^{m,n}_{p,q}\left(\begin{array}{ccc}
                a_1,\ldots,a_p\\
               b_1,\ldots,b_q
             \end{array}\biggl|x^{-1}\right)=G^{n,m}_{q,p}\left(\begin{array}{ccc}
               1-b_1,\ldots,1-b_q\\
               1-a_1,\ldots,1-a_p
             \end{array}\biggl|x\right),
$$
see, for example, Luke \cite{Luke},
to rewrite the transition function $\phi_{0,s}(j,y)$ as
\begin{equation}
\phi_{0,s}(j,y)=\int\limits_0^1t^{\nu_1+j-2}(1-t)^{l-2n+\nu_1}G^{0,s-1}_{s-1,0}\left(\begin{array}{cccc}
                1-\nu_2,\ldots,1-\nu_s\\
               -
             \end{array}\biggl|\frac{t}{y}\right)dt.
\end{equation}
This integral above can be computed, and the result is
\begin{equation}\label{InitialTransitionFunctionTruncation}
\phi_{0,s}(j,y)=\Gamma(l-2n-\nu_1+1)G^{s,0}_{1,s}\left(\begin{array}{cccc}
                l-2n+j\\
               \nu_1+j-1,\nu_2,\ldots,\nu_s
             \end{array}\biggl|y\right),
\end{equation}
where $s\in\left\{2,\ldots,m\right\}$, and $j\in\left\{1,\ldots,n\right\}$.
To see that the formula for $\phi_{0,s}(j,y)$ just written above  holds true in the case $s=1$ as well use the identity
$$
\Gamma(l-2n-\nu_1+1)G^{1,0}_{1,1}\left(\begin{array}{c}
               l-2n+j\\
               \nu_1+j-1
             \end{array}\biggl|y\right)=y^{\nu_1+j-1}(1-y)^{l-2n-\nu_1}\chi_{(0,1)}(y).
$$
Now we are ready to derive the formula for the correlation kernel stated in Theorem \ref{TheoremTruncation}.
We start from equation (\ref{MainGeneralFormula}). The second term in equation (\ref{MainGeneralFormula}) can be written as
\begin{equation}\label{KnmTruncated}
\begin{split}
\widetilde{K}_{n,m}\left(r,x;s,y\right)&=\frac{\Gamma(l-2n+1)}{\Gamma(\nu_1+1)\Gamma(l-2n-\nu_1+1)}\\
&\times\sum\limits_{p=0}^{n-1}\frac{p!(\nu_1+1)_p(2p+l-2n+1)}{(l-2n+1)_p(l-2n-\nu_1+1)_p}
\check{P}_{r,p}(x)\check{Q}_{s,p}(y),
\end{split}
\end{equation}
where
\begin{equation}\label{PrpTruncation}
\check{P}_{r,p}(x)=\sum\limits_{i=0}^p\frac{(-1)^{p-i}}{(p-i)!i!}\frac{(l-2n+1)_i(l-2n+i+1)_p}{(\nu_1+1)_i\Gamma(\nu_2+i+1)\ldots\Gamma(\nu_m+i+1)}\phi_{r,m+1}(x,i+1),
\end{equation}
and
\begin{equation}\label{QspTruncation}
\check{Q}_{s,p}(y)=\sum\limits_{j=0}^p\frac{(-1)^{p-j}}{(p-j)!j!}\frac{(l-2n+1)_j(l-2n+j+1)_p}{(\nu_1+1)_j}\phi_{0,s}(j+1,y).
\end{equation}
Let us derive a contour integral representation for the functions $P_{r,p}(x)$.
We have
$$
\phi_{r,m+1}(x,i+1)=\Gamma(\nu_{r+1}+i+1)\ldots\Gamma(\nu_m+i+1)x^i,
$$
so equation (\ref{PrpTruncation}) gives
\begin{equation}
\check{P}_{r,p}(x)=\frac{\Gamma(\nu_1+1)}{\Gamma(l-2n+1)}\sum\limits_{i=0}^p\frac{(-1)^{p-i}}{(p-i)!i!}\frac{\Gamma(l-2n+i+1+p)}{\Gamma(\nu_1+i+1)\ldots\Gamma(\nu_r+i+1)}x^i.
\end{equation}
It can be checked by residue calculations that $P_{r,p}(x)$ can be written as a contour integral
\begin{equation}\label{PrpTruncation1}
\check{P}_{r,p}(x)=\frac{\Gamma(\nu_1+1)}{\Gamma(l-2n+1)}\int\limits_{\Sigma_p}\frac{\Gamma(t-p)\Gamma(t+l-2n+1+p)}{\prod_{j=0}^r\Gamma(t+\nu_j+1)}x^tdt,
\end{equation}
where $\Sigma_p$ is a closed contour that encircles $0,1,\ldots,p$ once in the positive direction, and such that $\re t>-1$ for $t\in\Sigma_p$.

Next, let us obtain an integral representation for the functions $\check{Q}_{s,p}(y)$.  The initial transition function $\phi_{0,s}(j+1,y)$ is given by equation (\ref{InitialTransitionFunctionTruncation}). Inserting the expression for $\phi_{0,s}(j+1,y)$  into
equation (\ref{QspTruncation}) we find
\begin{equation}\label{QCalculationT}
\begin{split}
\check{Q}_{s,p}(y)&=\frac{\Gamma(l-2n-\nu_1+1)\Gamma(\nu_1+1)}{\Gamma(l-2n+1)}\\
&\times\sum\limits_{j=0}^p\frac{(-1)^{p-j}}{(p-j)!j!}\frac{\Gamma(l-2n+j+1+p)}{\Gamma(\nu_1+1+j)}
G^{s,0}_{1,s}\left(\begin{array}{cccc}
                l-2n+j+1\\
               \nu_1+j,\nu_2,\ldots,\nu_s
             \end{array}\biggl|y\right).
\end{split}
\end{equation}
The Meijer G-function in the equation above has the following contour integral representation
\begin{equation}\label{GQCalculationT}
\begin{split}
&G^{s,0}_{1,s}\left(\begin{array}{cccc}
                l-2n+j+1\\
               \nu_1+j,\nu_2,\ldots,\nu_s
             \end{array}\biggl|y\right)\\
&=\frac{1}{2\pi i}
\int\limits_{c-i\infty}^{c+i\infty}\frac{\Gamma(u+\nu_1+j)\Gamma(u+\nu_2)\ldots\Gamma(u+\nu_s)}{\Gamma(u+l-2n+j+1)}y^{-u}du.
\end{split}
\end{equation}
We insert expression (\ref{GQCalculationT}) into formula (\ref{QCalculationT}), and interchange the sum and the integral. The resulting sum inside the integral can be written as
\begin{equation}
\begin{split}
&\sum\limits_{j=0}^p\frac{(-1)^{p-j}}{(p-j)!j!}\frac{\Gamma(u+\nu_1+j)\Gamma(l-2n+1+p+j)}{\Gamma(1+\nu_1+j)\Gamma(l-2n+1+u+j)}\\
&=(-1)^p\frac{\Gamma(u+\nu-1)\Gamma(l-2n+1+p)}{\Gamma(1+\nu-1)\Gamma(l-2n+1+u)}\frac{1}{p!}\;
{}_3F_2\left(-p,u+\nu_1,l-2n+1+p;1+\nu_1,l-2n+1+u;1\right).
\end{split}
\nonumber
\end{equation}
The Pfaff-Saalsch$\ddot{\mbox{u}}$ltz Theorem says that
$$
{}_3F_2\left(-p,a,b;c,d;1\right)=\frac{(c-a)_p(c-b)_p}{(c)_p(c-a-b)_p},
$$
if the balanced condition, $c+d=1-p+a+b$, is satisfied, see, for example, Ismail \cite{Ismail}, Section 1.4. In our case
$$
a=u+\nu_1,\;b=l-2n+1+p,\; c=1+\nu_1,\; d=l-2n+1+u,
$$
and the balanced condition is satisfied. Thus we have
$$
{}_3F_2\left(-p,u+\nu_1,l-2n+1+p;1+\nu_1,l-2n+1+u;1\right)=\frac{(1-u)_p(\nu_1-l+2n-p)_p}{(1+\nu_1)_p(2n-u-l-p)_p}.
$$
Taking into account that
$$
\frac{(1-u)_p}{(2n-l-u-p)_p}=\frac{(u-p)_p}{(u-2n+l+1)_p},
$$
we obtain the formula
\begin{equation}\label{QspTruncation1}
\begin{split}
\check{Q}_{s,p}(y)&=(-1)^p\frac{\Gamma(l-2n-\nu_1+1)}{\Gamma(l-2n+1)}\frac{\Gamma(l-2n+1+p)(\nu_1-l+2n-p)_p}{p!(1+\nu_1)_p}\\
&\times\frac{1}{2\pi i}\int\limits_{c-i\infty}^{c+i\infty}\frac{\Gamma(u+\nu_1)\ldots\Gamma(u+\nu_s)}{\Gamma(l-2n+1+u)}
\frac{(u-p)_p}{(u-2n+l+1)_p}y^{-u}du.
\end{split}
\end{equation}
 To write explicitly the second term in equation (\ref{MainGeneralFormula}) we insert  the formulae for $\check{P}_{r,p}$ (equation (\ref{PrpTruncation})),   and $\check{Q}_{s,p}$  (equation (\ref{QspTruncation1})) to equation
(\ref{KnmTruncated}).
This gives the formula for the correlation kernel in the statement of Theorem \ref{TheoremTruncation}. Note also that the second
term in the formula for the correlation kernel can be written as
\begin{equation}
\begin{split}
\widetilde{K}_{n,m}\left(r,x;s,y\right)=\frac{1}{(2\pi i)^2}
\int_{c-i\infty}^{c+i\infty}du\oint\limits_{\Sigma}\frac{\prod_{j=0}^s\Gamma(u+\nu_j+1)}{\prod_{j=0}^r\Gamma(t+\nu_j+1)}\\
\sum\limits_{p=0}^{n-1}(l-2n+2p+1)\frac{\Gamma(t-p)\Gamma(t+l-2n+p+1)}{\Gamma(u+l-2n+p+1)}x^ty^{-u}.
\end{split}
\end{equation}
The sum inside the integral is the same  as that in Kuijlaars,  Stivigny \cite{KuijlaarsStivigny}
(see the proof of Proposition 4.4 in Kuijlaars,  Stivigny \cite{KuijlaarsStivigny}), and
the rest of the proof of Proposition \ref{TheoremExactKernelTruncation} and Proposition \ref{PropositionScalingLimitTruncation} is the same as that of
Proposition 4.4 and Theorem 4.7 in Kuijlaars,  Stivigny \cite{KuijlaarsStivigny}.
\qed
%%%%%%%%%%%%%%%%%%%%%%%%%%%%%%%%%%%%%%%%%%%%%%%%%%%%%%%%%%%
%%%%%%%%%%%%%%%%%%%%%%%%%%%%%%%%%%%%%%%%%%%%%%%%%%%%%%%%%%%
%%%%%%%%%%%%%%%%%%%%%%%%%%%%%%%%%%%%%%%%%%%%%%%%%%%%%%%%%%%
%%%%%%%%%%%%%%%%%%%%%%%%%%%%%%%%%%%%%%%%%%%%%%%%%%%%%%%%%%%
\section{Proof of Theorem \ref{TheoremSourseExact} and Proposition \ref{PropositionScalingLimitSource}}
 When the initial matrix is  a Ginibre matrix with a source,
the Ginibre product process is a determinantal point process on $\left\{1,\ldots,m\right\}\times\R_{>0}$. This follows immediately form
Theorem \ref{TheoremPolynomialInitialConditions} and Proposition \ref{PropositionAPlusG}. Theorem \ref{TheoremPolynomialInitialConditions}
gives a general formula for the correlation kernel $K_{n,m}(r,x;s,y)$. The first term in  formula (\ref{MainGeneralFormula}) for $K_{n,m}(r,x;s,y)$ does not depend on initial conditions, and it is the same is in the previous cases. Let us derive a formula for the second term of $K_{n,m}(r,x;s,y)$.
By Theorem \ref{TheoremPolynomialInitialConditions} and Proposition \ref{PropositionAPlusG} the second term of $K_{n,m}(r,x;s,y)$ is given by
$$
\widetilde{K}_{n,m}\left(r,x;s,y\right)=\sum\limits_{i,j=1}^n\phi_{r,m+1}(x,i)\left(A^{-1}\right)_{i,j}\phi_{0,s}(j,y),
$$
where
$$
\phi_{r,m+1}(x,i)=\Gamma(\nu_{r+1}+i)\ldots\Gamma(\nu_m+i)x^{i-1},
$$
and
\begin{equation}\label{InitialTransitionGinibreSource}
\phi_{0,s}(j,y)=\frac{1}{\Gamma(\nu_1+1)}\int\limits_{0}^{\infty}{}_0F_1\left(\nu_1+1;q_jt\right)t^{\nu_1-1}e^{-t}
G^{s-1,0}_{0,s-1}\left(\begin{array}{ccc}
                - \\
               \nu_{2}\ldots,\nu_s
             \end{array}\biggl|\frac{y}{t}\right)dt.
\end{equation}
The matrix $A=\left(a_{i,j}\right)_{i,j=1}^n$
is given by
\begin{equation}
a_{i,j}=\Gamma(\nu_2+j)\ldots\Gamma(\nu_m+j)\int\limits_0^{\infty}{}_0F_1\left(\nu_1+1;q_it\right)e^{-t}t^{\nu_1+j-1}dt,
\nonumber
\end{equation}
where $i,j\in\left\{1,\ldots,n\right\}$. Since
\begin{equation}
L_k^{\nu_1}(y)=\frac{e^y}{k!\Gamma(\nu_1+1)}\int\limits_0^{\infty}t^{\nu_1+k}e^{-t}{}_0F_1\left(\nu_1+1;-yt\right)dt,
\nonumber
\end{equation}
we obtain
\begin{equation}
a_{i,j}=\prod_{j=0}^{m}\Gamma(\nu_0+j)L_{j-1}^{\nu_1}(-q_i)e^{q_i}.
\nonumber
\end{equation}
\begin{prop} Set $A^{-1}=\left(\gamma_{i,j}\right)_{i,j=1}^n$. This matrix is characterized by
\begin{equation}
\sum\limits_{j=1}^n\prod\limits_{l=0}^m\Gamma(\nu_l+j)L_{j-1}^{\nu_1}(u)\gamma_{j,k}=e^{-q_k}\underset{l\neq k}{\prod\limits_{l=1}^n}\frac{-u-q_l}{q_k-q_l},\;\; k\in\left\{1,\ldots,n\right\}.
\nonumber
\end{equation}
\end{prop}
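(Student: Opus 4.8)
The plan is to recast the claimed identity as a Lagrange interpolation statement. For $j\in\{1,\ldots,n\}$ set
\[
\psi_j(u):=\prod_{l=0}^{m}\Gamma(\nu_l+j)\,L^{\nu_1}_{j-1}(u),
\]
which is a polynomial in $u$ of degree exactly $j-1$, since $L^{\nu_1}_{j-1}$ has nonzero leading coefficient $(-1)^{j-1}/(j-1)!$; hence $\{\psi_1,\ldots,\psi_n\}$ is a basis of the $n$-dimensional space of polynomials of degree $\le n-1$. With this notation the formula for the matrix entries $a_{i,j}$ recorded above reads simply $a_{i,j}=\psi_j(-q_i)\,e^{q_i}$, i.e. $\psi_j(-q_i)=a_{i,j}\,e^{-q_i}$. (Recall also $\nu_0=0$, so $\Gamma(\nu_0+j)=(j-1)!$; the $q_i$ are pairwise distinct, which is what makes $A$ invertible in the first place.)

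Now fix $k$ and regard the left-hand side of the asserted identity as a function of $u$:
\[
F_k(u):=\sum_{j=1}^{n}\psi_j(u)\,\gamma_{j,k},
\]
a polynomial of degree at most $n-1$. Evaluating at $u=-q_i$ and using $A A^{-1}=I$, i.e. $\sum_{j}a_{i,j}\gamma_{j,k}=\delta_{i,k}$, gives
\[
F_k(-q_i)=\sum_{j=1}^{n}a_{i,j}\,e^{-q_i}\,\gamma_{j,k}=e^{-q_i}\,\delta_{i,k},\qquad i=1,\ldots,n.
\]
Thus $F_k$ is a polynomial of degree $\le n-1$ that vanishes at the $n-1$ distinct points $-q_l$, $l\neq k$, and equals $e^{-q_k}$ at $u=-q_k$; by Lagrange interpolation it is therefore equal to
\[
e^{-q_k}\prod_{l\neq k}\frac{u+q_l}{-q_k+q_l}=e^{-q_k}\prod_{l\neq k}\frac{-u-q_l}{q_k-q_l},
\]
which is precisely the claimed formula. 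For the converse statement (and hence the word ``characterized''), I would note that since $\{\psi_j\}_{j=1}^{n}$ is a basis of polynomials of degree $\le n-1$, the displayed equation, read for all $k$, determines each column $(\gamma_{j,k})_{j=1}^{n}$ uniquely as the coordinate vector of the right-hand polynomial in that basis; combined with the forward direction this shows the relation is equivalent to $\gamma=A^{-1}$.

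I do not anticipate a real obstacle. The only points needing a little care are: the bookkeeping identity $a_{i,j}=\psi_j(-q_i)e^{q_i}$, which is immediate once the Laguerre integral representation above has been used to pass from the ${}_0F_1$-integral form of $a_{i,j}$ to its Laguerre form; the observation that the hypotheses of the model already force the $q_i$ to be pairwise distinct, so that the Vandermonde-type interpolation at $-q_1,\ldots,-q_n$ is non-degenerate; and the elementary fact that $\{\psi_j\}$ is a basis, which follows from the degrees being $0,1,\ldots,n-1$ with nonvanishing leading terms.
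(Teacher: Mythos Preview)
Your argument is correct. The paper does not supply its own proof of this proposition; it simply cites Forrester and Liu \cite[Sec.~2.1]{ForresterLiu}. Your approach --- recognizing that the left-hand side is a polynomial in $u$ of degree at most $n-1$, evaluating it at the $n$ nodes $u=-q_i$ via the relation $a_{i,j}=\psi_j(-q_i)e^{q_i}$ together with $AA^{-1}=I$, and then invoking Lagrange interpolation --- is exactly the mechanism behind the result and yields a clean, self-contained proof. The caveat you raise about the $q_i$ being pairwise distinct is apt: this is implicitly assumed throughout the Ginibre-with-source discussion (otherwise the determinant in Proposition~\ref{PropositionAPlusG} degenerates and $A$ fails to be invertible), and with that assumption the interpolation is non-degenerate and the characterization is indeed two-sided, as you note.
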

\begin{proof}
See Forrester and  Liu \cite[Sec. 2.1]{ForresterLiu}.
\end{proof}
\begin{prop}
Set
\begin{equation}
\varphi_{r,m+1}(x,j)=\sum\limits_{i=1}^n\phi_{r,m+1}(x,i)\left(A^{-1}\right)_{i,j},\;\; j\in\left\{1,\ldots,n\right\}.
\end{equation}
The function $\varphi_{r,m+1}(x,j)$ can be written as
\begin{equation}
\varphi_{r,m+1}(x,j)=\int\limits_0^{\infty}u^{\nu_1}e^{-u}\Psi_r(u;x)e^{-q_j}\underset{l\neq j}{\prod\limits_{l=1}^n}\frac{-u-q_l}{q_j-q_l}du,
\end{equation}
where
\begin{equation}
\begin{split}
\Psi_r(u;x)=\frac{1}{(2\pi i)^r\Gamma(\nu_1+1)}\int\limits_{\gamma_1}&dw_1\ldots\int\limits_{\gamma_r}dw_r
\prod\limits_{l=1}^rw_l^{-\nu_l-1}e^{w_l}\\
&\times e^{\frac{x}{w_1\ldots w_r}}{}_0F_1\left(\nu_1+1;-\frac{xu}{w_1\ldots w_r}\right).
\end{split}
\end{equation}
In the formula above  $\gamma_1$, $\ldots$, $\gamma_r$ are paths starting and ending at $-\infty$ and encircling the origin
once in the positive direction.
\end{prop}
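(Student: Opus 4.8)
The strategy is to reduce the statement to the Proposition that characterizes $A^{-1}=\left(\gamma_{i,j}\right)_{i,j=1}^n$, by first rewriting $\phi_{r,m+1}(x,i)$ as an integral transform in which the Laguerre polynomial $L_{i-1}^{\nu_1}$ occurs. Precisely, the plan is to establish
\begin{equation}
\phi_{r,m+1}(x,i)=\prod\limits_{l=0}^{m}\Gamma(\nu_l+i)\int\limits_0^{\infty}u^{\nu_1}e^{-u}\Psi_r(u;x)L_{i-1}^{\nu_1}(u)\,du,\qquad i\in\left\{1,\ldots,n\right\},
\end{equation}
and then substitute this into the definition of $\varphi_{r,m+1}(x,j)$. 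Interchanging the finite sum over $i$ with the $u$-integral leaves the sum $\sum_{i=1}^n\prod_{l=0}^m\Gamma(\nu_l+i)\,L_{i-1}^{\nu_1}(u)\,\gamma_{i,j}$ inside the integrand, and this is precisely the quantity evaluated in the preceding Proposition, equal to $e^{-q_j}\prod_{l\neq j}\frac{-u-q_l}{q_j-q_l}$. This yields the asserted formula immediately, so the entire proof reduces to the displayed identity.

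To prove that identity, recall that $\phi_{r,m+1}(x,i)=\Gamma(\nu_{r+1}+i)\ldots\Gamma(\nu_m+i)\,x^{i-1}$ and $\Gamma(\nu_0+i)=(i-1)!$, so it suffices to check
\begin{equation}
\int\limits_0^{\infty}u^{\nu_1}e^{-u}\Psi_r(u;x)L_{i-1}^{\nu_1}(u)\,du=\frac{x^{i-1}}{(i-1)!}\prod\limits_{l=1}^{r}\frac{1}{\Gamma(\nu_l+i)}.
\end{equation}
I would insert the definition of $\Psi_r(u;x)$ and interchange the $u$-integration with the $r$ contour integrals over $w_1,\ldots,w_r$ (permissible because of the exponential decay of $w_l^{-\nu_l-1}e^{w_l}$ along the Hankel contours $\gamma_l$ together with the rapid decay of $e^{-u}u^{\nu_1}$), reducing matters to the evaluation of
\begin{equation}
\int\limits_0^{\infty}u^{\nu_1}e^{-u}L_{i-1}^{\nu_1}(u)\,{}_0F_1\left(\nu_1+1;-\frac{xu}{w_1\ldots w_r}\right)du.
\end{equation}
Expanding ${}_0F_1$ as a power series in $u$ and using the Laguerre moment $\int_0^{\infty}e^{-u}u^{\nu_1+p}L_{i-1}^{\nu_1}(u)\,du=\Gamma(\nu_1+p+1)(-p)_{i-1}/(i-1)!$, and observing that $(-p)_{i-1}$ vanishes for $0\leq p\leq i-2$, the series collapses (after the shift $p\mapsto p+i-1$) to $\frac{\Gamma(\nu_1+1)}{(i-1)!}\left(\frac{x}{w_1\ldots w_r}\right)^{i-1}e^{-x/(w_1\ldots w_r)}$.

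The decisive point is that this exponential cancels the factor $e^{x/(w_1\ldots w_r)}$ appearing in $\Psi_r$, after which the $r$-fold contour integral factorizes completely into Hankel integrals
\begin{equation}
\frac{1}{2\pi i}\int\limits_{\gamma_l}w_l^{-\nu_l-i}e^{w_l}\,dw_l=\frac{1}{\Gamma(\nu_l+i)},\qquad l=1,\ldots,r.
\end{equation}
This produces the right-hand side of the reduced identity, and multiplying through by $\prod_{l=0}^m\Gamma(\nu_l+i)$ — using $\Gamma(\nu_0+i)=(i-1)!$ once more — recovers $\phi_{r,m+1}(x,i)$. I expect the main obstacle to be bookkeeping rather than anything conceptual: one has to track the ratio of Gamma functions against the Pochhammer factor $(-p)_{i-1}$ so that the series genuinely sums to an exponential, and one has to check that the interchanges of power series with integral and of sum with integrals are legitimate (they are, by the decay estimates above). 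Note that no base case $r=1$ or induction on $r$ is required, since the exponential cancellation takes place before any of the $w_l$-integrations are carried out.
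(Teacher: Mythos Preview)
Your proof is correct. The paper itself does not give an independent argument for this Proposition; it simply writes ``See Forrester and Liu [Sec.~2.1]'' and imports the result. Your proposal therefore supplies the details that the paper omits.

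The route you take is the natural one and is, in spirit, what the cited reference does: represent $\phi_{r,m+1}(x,i)$ as an integral against $u^{\nu_1}e^{-u}L_{i-1}^{\nu_1}(u)$ with kernel $\Psi_r(u;x)$, then invoke the preceding characterization of $A^{-1}$ to collapse the $i$-sum. Your verification of the key identity is clean: the Laguerre moment formula $\int_0^{\infty}e^{-u}u^{\nu_1+p}L_{i-1}^{\nu_1}(u)\,du=\Gamma(\nu_1+p+1)(-p)_{i-1}/(i-1)!$ is standard, the series manipulation after the shift $p\mapsto p+i-1$ checks out and indeed yields $\frac{\Gamma(\nu_1+1)}{(i-1)!}(x/W)^{i-1}e^{-x/W}$, and the cancellation of $e^{x/W}$ followed by factorization into Hankel integrals $\frac{1}{2\pi i}\int_{\gamma_l}w_l^{-\nu_l-i}e^{w_l}\,dw_l=1/\Gamma(\nu_l+i)$ is exactly right. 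The only remark is that the interchanges you flag are genuinely routine here: along the Hankel contours $e^{w_l}$ decays while $e^{x/W}$ and the ${}_0F_1$ factor stay bounded, and the termwise estimates for the power-series interchange are controlled by $\Gamma(\nu_1+p+1)/[(\nu_1+1)_p\,p!]$, which is summable. So nothing is hidden.
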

\begin{proof} See Forrester and Liu \cite[Sec. 2.1]{ForresterLiu}.
\end{proof}
\begin{prop} The initial transition function $\phi_{0,s}(j,y)$ can be written as
$$
\phi_{0,s}(j,y)=\Phi_s(-q_j;y),
$$
where the auxiliary function $\Phi_s(q;y)$ is defined by
$$
\Phi_s(q;y)=\frac{1}{2\pi i}\int\limits_{c-\i\infty}^{c+i\infty}dzy^{-z}\vartheta(q;z)\Gamma(\nu_2+z)\ldots\Gamma(\nu_s+z).
$$
In the formula above $c>-\mbox{min}\left(\nu_1,\ldots,\nu_s\right)$, $s\in\left\{1,\ldots,m\right\}$, and
$$
\vartheta(q;z)=\frac{1}{\Gamma(\nu_1+1)}\int\limits_0^{\infty}dtt^{\nu_1+z-1}e^{-t}{}_0F_1\left(\nu_1+1;-qt\right).
$$
\end{prop}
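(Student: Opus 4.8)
The plan is to carry out the Mellin--Barnes manipulation used by Forrester and Liu \cite[Sec.~2.1]{ForresterLiu}. In the present situation $f_j(t)=\Gamma(\nu_1+1)^{-1}\,{}_0F_1(\nu_1+1;q_jt)$, so the expression for $\phi_{0,s}$ furnished by Theorem \ref{TheoremPolynomialInitialConditions} and Proposition \ref{PropositionAPlusG} is exactly equation (\ref{InitialTransitionGinibreSource}) for $s\in\{2,\ldots,m\}$, and is $\phi_{0,1}(j,y)=f_j(y)\,y^{\nu_1}e^{-y}$ for $s=1$. First I would dispose of the case $s=1$: by definition $\Phi_1(-q_j;y)=\frac{1}{2\pi i}\int_{c-i\infty}^{c+i\infty}y^{-z}\,\vartheta(-q_j;z)\,dz$, and $\vartheta(-q_j;z)$ is precisely the Mellin transform, in the variable $t$, of $h_j(t):=\Gamma(\nu_1+1)^{-1}\,t^{\nu_1}e^{-t}\,{}_0F_1(\nu_1+1;q_jt)$. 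Since $\re z=c>-\nu_1$ and ${}_0F_1(\nu_1+1;q_jt)=O(e^{2\sqrt{q_jt}})$, the function $t\mapsto t^{c-1}h_j(t)$ is integrable on $(0,\infty)$, so Mellin inversion gives $\Phi_1(-q_j;y)=h_j(y)=f_j(y)\,y^{\nu_1}e^{-y}=\phi_{0,1}(j,y)$.

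For $s\ge 2$ I would open the Meijer $G$-function in (\ref{InitialTransitionGinibreSource}) through its contour integral (\ref{G-def}). For these parameters the integrand of (\ref{G-def}) involves only the factors $\Gamma(\nu_l-\zeta)$, $l=2,\ldots,s$, whose poles all lie to the right of the contour; after the substitution $\zeta=-z$ this contour becomes a vertical line, which can be slid to $\re z=c$ with $c>-\min(\nu_1,\ldots,\nu_s)$ --- this lies to the right of $-\min(\nu_2,\ldots,\nu_s)$, as the $G$-function requires, and also to the right of $-\nu_1$. One then gets
\[
G^{s-1,0}_{0,s-1}\left(\begin{array}{ccc}-\\ \nu_2,\ldots,\nu_s\end{array}\biggl|\frac{y}{t}\right)=\frac{1}{2\pi i}\int_{c-i\infty}^{c+i\infty}\Gamma(\nu_2+z)\cdots\Gamma(\nu_s+z)\,y^{-z}t^{z}\,dz .
\]
Substituting this into (\ref{InitialTransitionGinibreSource}) and exchanging the $t$- and $z$-integrations, the inner integral collapses to $\Gamma(\nu_1+1)^{-1}\int_0^\infty {}_0F_1(\nu_1+1;q_jt)\,t^{\nu_1+z-1}e^{-t}\,dt=\vartheta(-q_j;z)$, so that
\[
\phi_{0,s}(j,y)=\frac{1}{2\pi i}\int_{c-i\infty}^{c+i\infty}y^{-z}\,\vartheta(-q_j;z)\,\Gamma(\nu_2+z)\cdots\Gamma(\nu_s+z)\,dz=\Phi_s(-q_j;y),
\]
which is the assertion.

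The step that genuinely needs justification, and where I expect essentially all the work to lie, is the interchange of integrations. For $z=c+i\tau$ the inner $t$-integral is dominated by $\Gamma(\nu_1+1)^{-1}\int_0^\infty t^{\nu_1+c-1}e^{-t+2\sqrt{q_jt}}\,dt$, which is finite because $c>-\nu_1$ controls the endpoint $t=0$ while $e^{-t+2\sqrt{q_jt}}$ decays as $t\to\infty$; in particular $\vartheta(-q_j;c+i\tau)$ is bounded uniformly in $\tau$. On the other hand, Stirling's formula gives $|\Gamma(\nu_2+c+i\tau)\cdots\Gamma(\nu_s+c+i\tau)|=O(|\tau|^{A}e^{-(s-1)\pi|\tau|/2})$ for a suitable $A$, so for $s\ge 2$ the $z$-integrand is absolutely integrable in $\tau$, Fubini applies, and the same estimates license placing the $G$-function contour on the single vertical line used above. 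Everything else --- matching the definitions of $\vartheta$, $\Phi_s$ and the elementary properties of the Meijer $G$-function --- is routine bookkeeping, and as a consistency check, for $r=s$ this Proposition together with the companion formula for $\varphi_{r,m+1}$ must reduce to the initial data of Forrester and Liu \cite[Sec.~2.1]{ForresterLiu}.
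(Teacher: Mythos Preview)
Your proposal is correct and follows essentially the same approach as the paper: both insert the Mellin--Barnes integral representation of the Meijer $G$-function into equation (\ref{InitialTransitionGinibreSource}) and interchange the $t$- and $z$-integrations to recognise the inner integral as $\vartheta(-q_j;z)$. The paper's proof is a two-line sketch that omits the $s=1$ case and the Fubini justification you supply, so your version is simply a more detailed execution of the same argument.
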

\begin{proof}
Use equation (\ref{InitialTransitionGinibreSource}), and the integral representation for the Meijer $G$-function
inside the integral in equation (\ref{InitialTransitionGinibreSource}),
\begin{equation}
G^{s-1,0}_{0,s-1}\left(\begin{array}{ccc}
                - \\
               \nu_{2},\ldots,\nu_s
             \end{array}\biggl|\frac{y}{t}\right)=\frac{1}{2\pi i}\int\limits_{c-i\infty}^{c+i\infty}\gamma(\nu_2+z)\ldots\Gamma(\nu_s+z)
\left(\frac{y}{t}\right)^{-z}dz.
\end{equation}
\end{proof}
Now  we have
\begin{equation}
\begin{split}
&\widetilde{K}_{n,m}\left(r,x;s,y\right)=\sum\limits_{j=1}^n\varphi_{r,m+1}(x,j)\phi_{0,s}(j,y)\\
&=\int\limits_0^{\infty}u^{\nu_1}e^{-u}\psi_r(u;x)\sum\limits_{j=1}^ne^{-q_j}\underset{l\neq j}{\prod\limits_{l=1}^{n}}\Phi_s\left(-q_j,y\right)du.
\end{split}
\end{equation}
We apply the same argument as in  Forrester and Liu \cite[Sec. 2.1]{ForresterLiu},
and by the residue theorem we obtain a kernel equal to that defined by equation  (\ref{knmsourse}).
Theorem \ref{TheoremSourseExact} is proved. The scaling limits of the correlation kernel in different asymptotic regimes
given by Proposition  \ref{PropositionAPlusG}  are derived  by  a repetition of arguments from Forrester and Liu \cite[Sec. 3.1]{ForresterLiu}.
\qed

\end{document}